\definecolor{codegreen}{rgb}{0,0.6,0}
\definecolor{codered}{rgb}{0.9,0,0}
\definecolor{codeorange}{rgb}{0.9,0.5,0.0}
\definecolor{codegray}{rgb}{0.5,0.5,0.5}
\definecolor{codepurple}{rgb}{0.58,0,0.82}
\definecolor{backcolour}{rgb}{0.95,0.95,0.92}
\lstdefinestyle{mystyle}{
    backgroundcolor=\color{backcolour},   
    commentstyle=\color{codered},
    keywordstyle=\color{codeorange},
    numberstyle=\tiny\color{codegray},
    stringstyle=\color{codegreen},
    basicstyle=\ttfamily\tiny,
    breakatwhitespace=true, 
    breaklines=true,                 
    captionpos=b,                    
    keepspaces=true,                 
    numbers=left,                    
    numbersep=5pt,                  
    showspaces=false,                
    showstringspaces=false,
    showtabs=false,
    tabsize=2,
    frame=single, 
}
\pgfplotsset{compat=1.6}
\theoremstyle{definition} 
\newtheorem{theorem}{Theorem}[section]
\newtheorem{lemma}[theorem]{Lemma} 
\newtheorem{corollary}[theorem]{Corollary} 
\newtheorem{remark}[theorem]{Remark} 
\newtheorem{definition}[theorem]{Definition} 
\newtheorem{example}[theorem]{Example} 
\newtheorem{problem}{Problem} 
\newtheorem{property}{Property} 
\newtheorem{conjecture}{Conjecture}
\definecolor{myblue}{RGB}{80,80,160}
\definecolor{mygreen}{RGB}{80,160,80}
\newcommand{\largeTablecaption}[1]{#1 Following, \emph{italicized} refer to column titles. \emph{Frac}, \emph{Vs} and \emph{denoms.} are short for fractional, vertices and denominators. \emph{Topology $U_{(n,i)}$} correspond to Figure~\ref{figure_all_trees_up_to_n8}. All columns are with respect to $D$-space except for the last. Out of all \emph{Primary Directions} (Definition~\ref{definition_primary_direction_terminology}) only few determine \emph{False Facets}, which reveal \emph{Frac Vs}. These vertices all happen to be half-integer (\emph{D-space denom.} of $1$ and $2$). \emph{Frac Vs Classes} is the number 
of equivalence classes of non-STT vertices by automorphic symmetries. \emph{$XZD$-space denom.} lists denominators of coordinates in vertices of the LP polytope. We were able to enumerate all the vertices for topologies marked with ${*}$. For the rest, we sampled random vertices by solving the LP in random $XZD$-directions. Lists without ${*}$ may be incomplete: For example, every extension of $U_{(5,0)}$ should have third-integer vertices by Theorem~\ref{theorem_subgraph_with_frac_vertex_implies_frac_vertex_extension}, yet $U_{(7,3)}$ (allegedly) does not.}
\title{Search Trees on Trees via LP}
\author{
Yaniv Sadeh\footnote{Tel Aviv University, yanivsadeh@tau.ac.il. Partially supported by the Israel Science Foundation grant no.\ 1156/23 and the Blavatnik Family Foundation.}
\and
Haim Kaplan\footnote{Tel Aviv University, haimk@tau.ac.il. Partially supported by the Israel Science Foundation grant no.\ 1156/23 and the Blavatnik Family Foundation.}
\and
Uri Zwick\footnote{Tel Aviv University, zwick@tau.ac.il}
}
\begin{document}

\maketitle

\begin{abstract}
We consider the problem of computing optimal search trees on trees (STTs). STTs generalize binary search trees (BSTs) in which we search nodes in a path (linear order) to search trees that facilitate search over general tree topologies. Golinsky~\cite{GolinskyThesis} proposed a linear programming (LP) relaxation of the problem of computing an optimal static STT over a given tree topology. He used this LP formulation to compute an STT that is a $2$-approximation to an optimal STT, and conjectured that it is, in fact, an extended formulation of the convex-hull of all depths-vectors of STTs, and thus always gives an optimal solution. In this work we study this LP approach further. We show that the conjecture is false and that Golinsky's LP does not always give an optimal solution. 
To show this we use what we call the \emph{normals method}.
We use this method to enumerate over vertices of Golinsky's polytope for all tree topologies of no more than 8 nodes. We give a lower bound on the integrality gap of the LP and on the approximation ratio of Golinsky's rounding method. We further enumerate several research directions that can lead to the resolution of the question whether one can compute an optimal STT in polynomial time.
\end{abstract}

\section{Preliminaries, Definitions and Known Results}
\label{section_preliminaries} 

We consider the problem of search trees on trees (STTs) which we now describe briefly. For a broader context and additional references, see the works of \cite{BerendsohnThesis, SplayTreesonTrees_andPTAS, TangoSTT, GolinskyThesis}.

The problem of STTs is as follows. We are given a tree topology, and have to construct a search tree (STT) for it where we can search for nodes in response to requests. To answer a request, we make queries on the STT as follows. We start at the root, and query an oracle that tells us whether we found the searched node, or to which direction in the tree (a connected component), with respect to the query, we should proceed. Loosely speaking, the underlying tree and its search tree may have completely different structures, since it is usually beneficial to query a vertex which is not adjacent to $v$ following a query to $v$. But both are trees over the same set of nodes. The cost of a query is the number of oracle queries until the specified node is found.

We focus on the static setting, where the STT is fixed. In this case, we assume that we know the query distribution, $f$ (for frequencies), of the nodes in advance. We want to construct an optimal \emph{static} search tree for $f$. There is also the dynamic setting which we do not consider here, where the STT may be restructured by rotations (appropriately defined). Dynamic algorithms are described in~\cite{BerendsohnThesis, SplayTreesonTrees_andPTAS, TangoSTT}.

We study a specific approach to computing STTs, that is based on linear programming (LP). Our text is self-contained and introduces everything that is necessary for understanding our results. Before we proceed, we give important terminology and definitions, followed by a formal statement of the problem we study.

\begin{definition}[Basic Definitions and Notations]$ $
\label{definitions_basics}

\begin{enumerate}
    \item Node: will always refer to an element of a tree.
    \item Underlying Topology: The topology over which to search. The topology will be consistently denoted as $U$ (for \emph{Underlying}). We  only consider topologies that are trees.
    \begin{enumerate}
        \item We also abuse the notation of $U$ to denote the set of nodes in this tree. Moreover, $n$ will always denote the number of nodes in $U$: $n \equiv |U|$.
        \item For two nodes $i \ne j$, denote by $(i \leftrightsquigarrow j)$ the (unique) path between them in $U$, excluding $i$ and $j$. Use square parenthesis to include $i$ and $j$: $[i \leftrightsquigarrow j]$ (both), $[i \leftrightsquigarrow j)$ (with $i$), $(i \leftrightsquigarrow j]$ (with $j$).
        \item We refer to specific topologies with $2 \le n \le 8$ as $U_{(n,i)}$ where $i$ changes according to the topology. Figure~\ref{figure_all_trees_up_to_n8} (Section~\ref{section_counter_example_main_polytope_study_disproving_conjecture}) and Table~\ref{table_trees_edges} (Appendix~\ref{section_appendix_extras}) show the exact correspondence.
    \end{enumerate}
    
    \item STT: Search tree on tree. We use this term to avoid confusion with the underlying topology which is also a tree. We consistently use $T$ to denote an STT.
    \begin{enumerate}
        \item An STT $T$ is defined over a topology $U$ recursively as follows: Pick a node $r \in U$ and set it to be the root of $T$. Each subtree of $r$  in $T$ is an STT which is  constructed recursively over a connected components of $U \setminus \{ r \}$. See Figure~\ref{figure_topology_to_stt_example} for an example.
        
        \item Frequencies, denoted by $f$: the relative number of queries of each node, a vector of $n$ non-negative values that sum to $1$.
        \item $depth(v)$: The depth of a node $v$ in a given STT is the number of ancestors on its path to the root of the STT, including itself, $depth(root(T))=1$.
        \item $cost(T,f)$: the cost of a (static) STT $T$ with respect to frequencies $f$ is the  sum of the depths of the nodes weighted by their frequencies: $cost(T,f) = \sum_{v \in U} f_v \cdot depth(v)$.
    \end{enumerate}

    \item Linear programming (LP) related:
    \begin{enumerate}
        \item Weights, Direction, Objective (function): may be used interchangeably to describe the objective function of the linear program.
        \item Vertex: always refers to a vertex of the polytope of the linear program (not to be confused with tree nodes).
        \item We say that a point is integer if all of its coordinates are integer, and that a polytope is integer if all of its vertices are integer.

        \item We reserve $P$ to denote points in space, and use subscripts such as $P_X$ to denote the projection of $P$ to a set of coordinates $X$. We use $\mathbb{P}$ to denote a collection of points or a polytope. When in need, we use additional capitalized letters to denote points, and their corresponding blackboard-bold font to denote a set of points, for example $S$ and $\mathbb{S}$.
    \end{enumerate}

    \item Cost versus Value: we only use \emph{cost} when we discuss the cost of an STT, and use \emph{value} when we consider the value of a feasible point of the LP with respect to an objective function.
\end{enumerate}
\end{definition}

\begin{figure}[t!]
\centering
    \begin{subfigure}{0.35\textwidth}
        \includegraphics[width=1.0\textwidth]{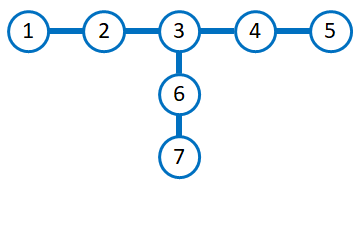}    
    \caption{Underlying tree $U$}
    \label{stt_example_U}
    \end{subfigure}
    ~
    \begin{subfigure}{0.24\textwidth}
        \includegraphics[width=1.0\textwidth]{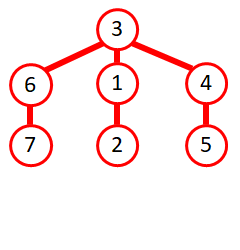}
        \caption{STT $T_1$}
        \label{stt_example_T1}
    \end{subfigure}
    ~
    \begin{subfigure}{0.22\textwidth}
        \includegraphics[width=1.0\textwidth]{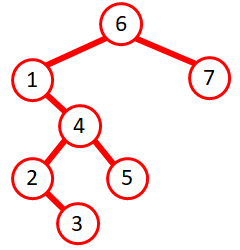}
        \caption{STT $T_2$}
        \label{stt_example_T2}
    \end{subfigure}
    
\caption{\footnotesize{We show two STTs $T_1$ and $T_2$ over the same topology $U$, as described in Definition~\ref{definitions_basics}. 
Note that while while both $U$ and $T_i$ are trees on the same set of nodes, their edges may connect the nodes very differently.
}}
\label{figure_topology_to_stt_example}
\end{figure}

\begin{problem}[Static STT]
\label{problem_main_problem_stt}
Let $U$ be a tree topology, and let $f$ be a vector of frequencies. Determine an STT $T$ over $U$ such that $cost(T,f)$ is minimized.
\end{problem}

\subsection{The Linear Program}
\label{subsection_linear_program}

The work of \cite{GolinskyThesis} defines a linear program (LP) formulation to find a $2$-approximation for an optimal solution. The LP uses the following reasoning: Given topology $U$ and the STT $T$ over it, for every pair of nodes, one must be an ancestor of the other, or there is an intermediate node between them that was queried first and acts as their LCA (least common ancestor) in the STT. We define an \emph{ancestry variable} $X_{ij}$ for each pair of nodes  $i,j \in U$ which should take the value $1$ if $i$ is an ancestor of $j$ in $T$, and another variable $Z_{kij}$ which should be $1$ if $k$ is the LCA of $i$ and $j$ in $T$. We only define $Z_{kij}$ for nodes $k \in U$ that are on the path between $i$ and $j$, i.e., $k \in (i \leftrightsquigarrow j)$. Ideally, one would hope to write $Z_{kij} = \min(X_{ki},X_{kj})$ but in an LP formulation we cannot do this, so we are content with writing the inequalities $Z_{kij} \le X_{ki}$ and $Z_{kij} \le X_{kj}$.\footnote{The LP we define soon does not capture additional STT properties, see further discussion in Section~\ref{section_refining_the_LP}.} We can also define the depth variable $D_i = \sum_{(i \ne) j \in U}{X_{ji}}$. The depth represents the cost of search, off by one, such that one aims to minimizes $\sum_{i \in U}{f_i \cdot D_i}$. Ideally we would also require that all of $X$ and $Z$ are either $0$ or $1$, and this would define an integer LP (ILP). However, to be able to rely on LP machinery, we relax this definition and let the variables have non-negative real values.\footnote{Allowing values larger than $1$ is not a problem, because any quantity above $1$ can only hurt the minimization. Bounding each variable only introduces new constraints, so we prefer to ``let the LP do its job'' without it.}
To conclude this exposition, the final LP is as follows:

\begin{definition}[STT LP]
\label{definition_LP_program}
A complete description of the LP is as follows:
    \begin{enumerate}
        \item Variables: $X,Z,D$:
            \begin{enumerate}
                \item (Depths) $\forall i \in U: D_{i}$.
                \item (Ancestry) $\forall i,j \in U, i \ne j: X_{ij}$.
                \item (LCAs) $\forall i,j \in U, i \ne j: \forall k \in (i \leftrightsquigarrow j):  Z_{kij}$. Note that $Z_{kij}$ and $Z_{kji}$ are considered as the same variable. (Rather than defining both and equating $Z_{kij}=Z_{kji}$.)
            \end{enumerate}
        \item Constraints:
        \begin{enumerate}
            \item (Bounds\footnote{Technically we can omit the non-negativity constraints on $Z$, since $Z$ is not part of the objective and can only ``help'' the ancestry-constraints when it is non-negative. Alternatively, we can keep $Z \ge 0$ and have the non-negativity of $X$ implied by the loosely-LCA constrains. We cannot omit both sets, since then the LP becomes unbounded.}) $\forall i,j \in U: \forall k \in (i \leftrightsquigarrow j): 0 \le X_{ij},X_{ji},Z_{kij}$.
            \item (Ancestry) $\forall i,j \in U: X_{ij} + X_{ji} + \sum_{k \in (i \leftrightsquigarrow j)}{Z_{kij}} \ge 1$.
            \item (Loosely LCA) $\forall Z_{kij}: (1) \; Z_{kij} \le X_{ki} \ \ \ ; \ \ \ (2) \; Z_{kij} \le X_{kj}$.
            \item (Depth) $\forall i \in U: D_i \ge \sum_{(i \ne) j \in U}{X_{ji}}$.
        \end{enumerate}
        \item Objective function: minimize $f \cdot D = \sum_{i \in U}{f_i \cdot D_i}$.
    \end{enumerate}
\end{definition}

Observe that we relaxed the ancestry and depth constraints to be inequalities, rather than equalities, even though STTs satisfy them with equalities.

\begin{remark}
\label{remark_lp_variables_all_viewed_as_LCA}
$X$ and $Z$ seem to serve slightly different roles. A canonized way to look at them is by adding variables of the form $Z_{iij}$ to represent $i$ being the LCA of $i$ and $j$. In this case $X_{ij} = Z_{iij}$, and the ancestry inequality becomes $\forall i,j: \sum_{k \in [i \leftrightsquigarrow j]}{Z_{kij}} \ge 1$. If we wish, we can also define $Z_{iii}$ and have an ancestry inequality for $i=j$, which would imply $Z_{iii} \ge 1$ (every node is its own ancestor) and redefine $D_i \ge \sum_{j \in U}{Z_{jji}}$ so that the STT depth and the LP depth are the same. All of this being said, we stick to the separated presentation of the $X$ and $Z$ variables, particularly because $Z$ is a little more artificial (see Section~\ref{subsection_with_and_without_z_fourier_motzkin}).
\end{remark}

\subsection{Conjectures and Known Results}
In this subsection we summarize important facts regarding the LP, all were proven in \cite{GolinskyThesis}.

\begin{definition}[STT Induced Point]
\label{definition_xzd_of_stt}
Let $T$ be an STT. We denote by $(X^T,Z^T,D^T)$ the LP variables it induces: $X_{ij}=1$ if $i$ is ancestor of $j$, $Z_{kij}=1$ if $k = LCA(i,j)$ in $T$. The rest of the $X,Z$ variables are $0$, and $\forall i \in U: D_i = \sum_{j \in U\setminus \{ i \}}{X_{ji}}$.
\end{definition}

\begin{property}[Trees are Feasible]
\label{property_stt_feasible}
Let $T$ be an STT. Then $(X^T,Z^T,D^T)$ is a feasible solution. Moreover, the ancestry and depth inequalities are tight (equalities).
\end{property}

\begin{remark}[Depth Discrepancy]
\label{remark_depth_off_by_one}
Let $T$ be an STT. Observe that $\forall v \in U: depth(v) = D^T_v + 1$. This means that we can relate the cost of the STT to the value of the LP on the point it induces: $cost(T,f) = \sum_{v \in U}{f_v \cdot depth(v)} = \sum_v{f_v} + f \cdot D$. If $f$ is a vector of weights and not frequencies, first normalize it. Throughout the text, we usually deal with depths of the LP (the $D$ variables), and it is fine since we know how to relate them to the actual STT depths and its cost.
\end{remark}

\begin{property}[Integer Domination]
\label{property_integer_domination}
Let $(X',Z',D')$ be an \emph{integer} feasible solution. Then there exists an STT $T$, which can be computed efficiently, such that $\forall i \in U: D^T_i \le D'_i$. Alternatively phrased: the Integer-LP has an optimum that corresponds to an STT.
\end{property}

\begin{property}[$2$-Approximation]
\label{property_tree_2_approx}
Let $(X',Z',D')$ be \emph{any} feasible solution. Then there exist an STT $T$, which can be computed efficiently, such that $\forall i \in U: D^T_i \le 2 \cdot D'_i$.
\end{property}

Both Property~\ref{property_integer_domination} and Property~\ref{property_tree_2_approx} are proven constructively in \cite{GolinskyThesis}. The STT that achieves these guarantees is computed by the following rounding scheme. Note that this scheme not only rounds all the coordinates to be integer, it also takes integer solutions and ``rounds'' them to STTs.

\begin{definition}[Conversion to STT: Root Rounding]
\label{definition_rounding_scheme}
Let $U$ be a tree topology, and let $P = (X,Z,D)$ be the solution found by the LP for some weights. ${STT}_U(P)$ is the search tree on $U$ that is computed recursively as follows.
If $U$ has a single node, then we have a trivial singleton search tree. Otherwise, find a node $r$ such that: $\forall v (\ne r) \in U: \sum_{u \in U:r \in [u \leftrightsquigarrow v)}X_{uv} \ge \frac{1}{2}$. We clarify that $r \in [u \leftrightsquigarrow v)$ means that $r$ is on the path between $u$ and $v$, and may be $u$ (but not $v$). \cite{GolinskyThesis} proves that such a node exists. Set $r$ as the root of $STT_U(P)$, and construct its subtrees ${STT}_H(P)$  for each connected component $H \subseteq U \setminus \{r\}$ recursively. We refer to this rounding method as \emph{root rounding} or simply \emph{rooting}.
\end{definition}

Intuitively, the choice of the root guarantees that some of the (fractional) depth of each node $v$ is due to nodes that are $r$ or belong to a different connected component than the one of $v$, when $r$ is removed (so we charge $r$ for them). Because of this, setting $r$ to be the root increases the depth of any node $v \ne r$ by at most $\frac{1}{2}$ compared to
its fractional depth $D_v$ which we know by the choice of $r$ is at least $\frac{1}{2}$ for every $v \ne r$. For a formal proof that such $r$ always exists, see Lemma 3.4 in \cite{GolinskyThesis}. We emphasize that there may be multiple choices for the root at any given step, so the resulting STT and its approximation ratio may depend on the tie breaking rule.

Property~\ref{property_integer_domination} (Theorem 3.1 of \cite{GolinskyThesis}) implies that when considering integer solutions, there is an optimal solution that is indeed a search tree (in fact, it is a vertex of the LP). Property~\ref{property_tree_2_approx} (Theorem 3.2 of \cite{GolinskyThesis}) implies that the feasible point corresponding to an optimal STT is a $2$-approximation for the LP optimum. Golinsky \cite{GolinskyThesis} conjectured the following.

\begin{conjecture}[Conjecture 3.1 of \cite{GolinskyThesis}]
\label{conjecture_LP_finds_STT_vertex}
For any topology and any vector of non-negative weights, there is an optimal STT that induces an optimal solution to the LP in Definition~\ref{definition_LP_program}.
\end{conjecture}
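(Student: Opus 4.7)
The plan is to attempt to prove the conjecture by showing that every vertex of Golinsky's LP polytope is integer; once this is established, Property~\ref{property_integer_domination} would immediately yield an STT $T$ with $D^T$ dominating the LP depths coordinate-wise at any optimum, settling the conjecture. First, I would check for a combinatorial certificate of integrality. The most promising angle is to look for total unimodularity of the constraint matrix or, failing that, some laminar/network-flow structure on the constraint system. Because $U$ is a tree, the ancestry constraints $X_{ij} + X_{ji} + \sum_{k \in (i \leftrightsquigarrow j)} Z_{kij} \ge 1$ are indexed by paths, which is suggestive of a tree-cut or path-cover formulation.

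If a direct integrality argument fails, I would try a structural induction on $n = |U|$. Given an optimal LP vertex $(X^*,Z^*,D^*)$, I would attempt to identify a node $r$ whose ancestry coordinates $X^*_{rv}$ are all $1$, play it as the root, restrict the remaining LP to each connected component of $U \setminus \{r\}$, and argue that these restrictions are themselves LP optima on smaller topologies — thereby integer by induction — and that gluing preserves optimality. A natural candidate for $r$ is the node chosen by the root-rounding scheme of Definition~\ref{definition_rounding_scheme}, since it already controls at least half of the ancestry weight to every other node. The gluing would rely on Property~\ref{property_stt_feasible} to certify that the reassembled point is feasible for the original LP with equality in the ancestry and depth constraints.

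The hard part, I expect, is precisely where the LP relaxes STTs most severely: the $Z_{kij}$ variables are coupled to $X$ only by the loose inequalities $Z_{kij} \le X_{ki}, X_{kj}$, never by $Z_{kij} = \min(X_{ki}, X_{kj})$. This looseness lets the LP distribute ``ancestry mass'' fractionally along an entire path $(i \leftrightsquigarrow j)$ and still satisfy the ancestry constraint at value $1$, without any single $X^*_{rv}$ reaching $1$. Consequently, the inductive step above would break when no node qualifies as a ``sharp'' root, and I would be left unable to peel off even a single vertex of $U$.

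Given that worry — and the gap between Property~\ref{property_integer_domination} (which needs integer input) and Property~\ref{property_tree_2_approx} (which only guarantees a factor of $2$) — I would hedge by simultaneously running a counterexample search. The plan is: pick a small topology $U$, enumerate vertices of the LP polytope by iterating over normal-direction bases (the normals method alluded to in the abstract), and for each vertex test membership in the finite set $\{(X^T,Z^T,D^T) : T \text{ an STT over } U\}$ from Definition~\ref{definition_xzd_of_stt}. Any fractional vertex $P$ admits a weight vector $f$ for which $P$ is the unique LP optimum (take $f$ in the interior of the normal cone at $P$), and such an $f$ would refute the conjecture. Given the structural concern above, I would rate the counterexample search as considerably more likely to resolve the question than the proof attempt.
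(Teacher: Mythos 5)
Your instinct — that the conjecture is probably false and a counterexample search is the right bet — is correct; the paper does exactly this (Theorems~\ref{theorem_non_integer_optimum_long_star} and the normals method of Section~\ref{subsection_general_counter_exaple}). The structural obstruction you identify (the loose coupling $Z_{kij}\le X_{ki},X_{kj}$ rather than an equality) is also exactly why no integrality certificate exists. So the first three paragraphs of your proposal are a reasonable account of why the conjecture should be doubted.

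However, your final step contains a real gap. You write: ``Any fractional vertex $P$ admits a weight vector $f$ for which $P$ is the unique LP optimum (take $f$ in the interior of the normal cone at $P$), and such an $f$ would refute the conjecture.'' This is not true, for two independent reasons, and the paper's data shows the failure is not hypothetical. First, the LP objective has the form $\sum_i f_i D_i$ — it lives in $D$-space only, with no $X$ or $Z$ weights — so a direction in the interior of the $(X,Z,D)$-normal-cone of $P$ cannot simply be taken as an LP objective; you would need a direction with zero $X,Z$ components, and the normal cone may contain no such direction. Second, the conjecture restricts to \emph{non-negative} $f$, and the normal cone need not contain any non-negative $D$-only direction at all. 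The paper's normals method is engineered precisely to fix both issues: it projects to $D$-space and restricts attention to the lower envelope $\mathbb{P}'$ (Minkowski sum with the positive orthant), so every normal that arises is automatically a non-negative $D$-weight, and a false facet automatically witnesses a strictly better fractional point. Concretely, Table~\ref{table_trees_summary_analysis_vertices_etc_FEW} and the discussion around Example~\ref{figure_Example_partially_interger_vertex} show that the path $U_{(5,0)}$ already has fractional $(X,Z,D)$-vertices with denominators up to $3$, yet its $D$-space projection has \emph{only} integer vertices and the conjecture holds there; your test would have flagged $U_{(5,0)}$ as a counterexample when it is not one. The genuine counterexamples begin only at $n=7$ (the long star, Theorem~\ref{theorem_non_integer_optimum_long_star}), and finding them requires exactly the $D$-space lower-envelope filtering that your proposal omits.
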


Conjecture~\ref{conjecture_LP_finds_STT_vertex} implies that all the vertices of the LP are STTs, since by Theorem~\ref{theorem_integer_vertex_iff_stt} every vertex has some objective for which it is uniquely optimal.
In the language of polytopes and extended formulations~\cite{ExtendedFormulationSurvey}, Conjecture~\ref{conjecture_LP_finds_STT_vertex} states that the LP in Definition~\ref{definition_LP_program} is an extended formulation of an original polytope of interest. The original polytope is the convex hull of all STTs in $n$ dimensional $D$-space (depth per node), and the LP is (conjectured to be) an extended formulation with additional $X$ and $Z$ variables.

Since we care about Problem~\ref{problem_main_problem_stt} and the LP is just a tool, a weaker conjecture is as follows.

\begin{conjecture}
\label{conjecture_LP_rounding_finds_opt_STT}
For any topology and any vector of non-negative weights, rounding an optimal solution of the LP in Definition~\ref{definition_LP_program}, according to the root rounding scheme (Definition~\ref{definition_rounding_scheme}) yields an optimal STT.
\end{conjecture}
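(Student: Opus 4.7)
The plan is to proceed by induction on $n = |U|$. The base case $n = 1$ is trivial, so assume $n \ge 2$ and let $P^* = (X^*, Z^*, D^*)$ be an optimal solution of the LP in Definition~\ref{definition_LP_program}. By the analysis recalled after Definition~\ref{definition_rounding_scheme}, there exists a node $r$ with $\sum_{u \in U : r \in [u \leftrightsquigarrow v)} X^*_{uv} \ge \frac{1}{2}$ for every $v \ne r$, and the rounding scheme picks such an $r$ as root and then recurses on each connected component of $U \setminus \{r\}$. To establish the conjecture it then suffices to show: (i) $r$ is the root of some optimal STT for $(U, f)$; and (ii) for each connected component $H$ of $U \setminus \{r\}$, the projection of $P^*$ onto the variables supported inside $H$ is an optimal LP solution for the sub-problem on $H$ with the restricted weights $f|_H$.

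For (i) the natural attempt is a swap/rotation argument. Starting from any optimal STT $T^\circ$ with root $r^\circ \ne r$, walk along the $r^\circ$-to-$r$ path in $T^\circ$ performing STT rotations that pull $r$ toward the root; the defining inequality $\sum_{u : r \in [u \leftrightsquigarrow v)} X^*_{uv} \ge \frac{1}{2}$ should supply the slack needed to certify that the cumulative change in cost is non-positive. A cleaner alternative is via LP duality: construct dual variables tight at $P^*$ and at the induced point of some STT rooted at $r$, so that complementary slackness forces $r$ as the root. For (ii), I would use that depth decomposes additively along root-to-leaf paths: every $v \in H$ contributes $1$ to its depth for the ancestor $r$ plus its depth inside the sub-STT on $H$; this should be mirrored by a matching decomposition of the LP value, and then optimality on each sub-problem follows from optimality of $P^*$ on the whole together with Property~\ref{property_stt_feasible} applied to any candidate better sub-STT (which could be lifted back to a feasible point of the full LP).

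The main obstacle is (i). Root rounding uses only local ``crossing-mass'' information at $r$, whereas being the root of an optimal STT is a global combinatorial property. If the stronger Conjecture~\ref{conjecture_LP_finds_STT_vertex} were true, every optimal LP vertex would already be an STT and (i) would be automatic via Property~\ref{property_integer_domination}; but since the paper announces that the stronger conjecture fails, fractional optima must be handled directly. If moreover some fractional optimum admits a valid root choice $r$ that is not the root of any optimal STT, the exchange argument must break: the $\frac{1}{2}$ slack would fail to absorb the cost of the rotations, and an adversarial tie-break in the rounding could output a strictly suboptimal STT. Given the abstract's lower bound on the approximation ratio of root rounding, this is presumably exactly where the conjecture fails, and any honest proof attempt would have to either strengthen the root-selection rule beyond crossing mass, or restrict attention to a structurally ``nice'' class of optimal LP solutions that the current scheme does not canonically select.
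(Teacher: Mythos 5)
The statement you were asked to prove is in fact \emph{false}, and the paper disproves it (Section~\ref{subsection_approximation_ratio}, Theorem~\ref{theorem_rounding_the_vertex}, and Table~\ref{table_approximation_ratios}). You seem to have sensed this: your third paragraph correctly identifies that the crux is whether the root $r$ selected by the crossing-mass condition is always the root of \emph{some} optimal STT, and you conjecture that it need not be. That is exactly where the conjecture fails. The paper makes this concrete by constructing, for several topologies, weight vectors $f$ for which the set $R$ of nodes satisfying the rounding condition is disjoint from the set $O$ of roots of optimal STTs, so that \emph{every} tie-break (best case included) outputs a sub-optimal tree; see Definition~\ref{definition_LP_for_approximation_gap} and the rows marked ${*}/{**}$ in Table~\ref{table_approximation_ratios}. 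Theorem~\ref{theorem_rounding_the_vertex} additionally gives an explicit bad sequence of rounding choices for the counterexample of Theorem~\ref{theorem_non_integer_optimum_long_star}, yielding an approximation ratio of $\tfrac{62}{53}$, and Table~\ref{table_approximation_ratios} pushes the lower bound on worst-case root rounding to about $1.384$.

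Beyond the truth value, there are also gaps within the two steps you proposed. For (ii), the projection of an optimal $(X^*,Z^*,D^*)$ onto the variables supported in a component $H$ of $U\setminus\{r\}$ need not be an optimal LP solution for $(H,f|_H)$: ancestry constraints in $U$ mix nodes across components through paths that pass outside $H$, and the depth variable $D_v$ for $v\in H$ accumulates contributions from $X_{uv}$ with $u\notin H$, so there is no clean ``variables supported inside $H$'' restriction whose optimality you can inherit. The paper even mentions a refined rounding that re-solves the LP on each component after each root choice, and notes this still does not circumvent the problem, because the first root choice can already be fatally wrong. For (i), a swap/rotation exchange on an STT only tracks how the chosen STT's cost changes; the $\tfrac12$ crossing-mass inequality lives on a fractional point $P^*$ and does not transfer to a cost comparison between STTs without further argument. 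In summary, the correct move here is not to prove the statement but to exhibit a counterexample, which is what the paper does.
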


The work~\cite{GolinskyThesis} verified the conjectures for trees up to $n \le 4$ nodes, and lists several ideas and failed directions for a general proof. One hope to prove Conjecture~\ref{conjecture_LP_finds_STT_vertex} would be to show that all the vertices of the LP correspond to STTs, but this is not the case as we show in Theorem~\ref{theorem_non_tree_vertex}.

\textbf{We disprove both conjectures.} In simple words, we show that there exist tree topologies and frequencies such that the value of the best STT for them is strictly larger than the LP optimum, and that the rounding scheme might, and sometimes always, yield a sub-optimal STT. In particular, such optimum of the LP must be a non-integer solution by Property~\ref{property_integer_domination}.

\section{Disproving the Conjectures}
\label{section_counter_example_main_polytope_study_disproving_conjecture}
In this section we disprove Conjectures \ref{conjecture_LP_finds_STT_vertex} and \ref{conjecture_LP_rounding_finds_opt_STT}. We first disprove Conjecture~\ref{conjecture_LP_finds_STT_vertex} by showing a specific non-integer vertex for a specific topology and frequencies such that this vertex is strictly better than any STT (Section~\ref{subsection_specific_counter_example}). Then we generalize the discussion to elaborate further on our technique for finding this specific example and multiple others (Section~\ref{subsection_general_counter_exaple}), and study the integrality gap of the LP (Section~\ref{subsection_integrality_gap}). Finally, in Section~\ref{subsection_approximation_ratio} we disprove Conjecture~\ref{conjecture_LP_rounding_finds_opt_STT} and show that the proposed rounding scheme may result in a sub-optimal STT. There, we also  give a lower bound on the approximation-ratio for this specific rounding scheme.

\subsection{Specific Counter Example: Strictly Better Fractional Vertex}
\label{subsection_specific_counter_example}

In this section we provide an explicit example for a topology and frequencies such that the LP solution for it is strictly better than any solution induced by a search tree. The topology and frequencies were found using computer search (Appendix~\ref{appendix_section_code}), more details follow in Section~\ref{subsection_general_counter_exaple}. 
In this section we provide
a proof that the solution we found 
is a non-integer feasible solution to the LP, such that any other integer solution that is induced by a search tree is strictly worse.


\begin{definition}[Terminology]
\label{definition_long_star}
For ease of reference, we define the following terminology. Let $U$ be a tree over $n=7$ nodes such that nodes $1$ to $5$ form a path in the natural order, with additional edges $(3,6)$ and $(6,7)$. This is also the tree in Figure~\ref{figure_topology_to_stt_example}, and $U_{(7,3)}$ in Figure~\ref{figure_all_trees_up_to_n8}. We refer to $U$ as a \emph{long star}, with three \emph{legs}. We refer to $3$ as its center, to $1,5,7$ as leaves, and to $2,4,6$ as mid-nodes.
\end{definition}

\begin{theorem}
\label{theorem_non_integer_optimum_long_star}
    Let $U$ be a long-star (Definition~\ref{definition_long_star}). Let $f = \frac{1}{23} [3,2,0,2,3,3,10]$ be the frequencies vector, i.e., such that $f_i$ is the frequency of querying node $i$. Then there is a feasible solution $P$ with depth coordinates $P_D = [2,2,4.5,2,2,1.5,0.5]$ that is strictly better for the LP with weights $f$ compared to any feasible point induced by an STT over $U$.
\end{theorem}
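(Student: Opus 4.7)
The plan has two parts: construct an explicit feasible point $P=(X^P,Z^P,D^P)$ matching the stated depth vector, and then lower-bound the cost of every STT over the long star. For the first part, I would specify $X^P,Z^P$ coordinate-by-coordinate, with values in $\{0,\tfrac12,1\}$ as forced by the half-integer entries of $P_D$. Such a completion is obtained by solving the LP of Definition~\ref{definition_LP_program} on $(U,f)$ directly (see Appendix~\ref{appendix_section_code}). Feasibility is then verified against each of the four constraint families --- non-negativity, ancestry, loosely-LCA, and depth --- a finite and mechanical computation.

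Next, I compute
\[
f\cdot P_D \;=\; \tfrac{1}{23}\bigl(3\cdot 2+2\cdot 2+0\cdot 4.5+2\cdot 2+3\cdot 2+3\cdot 1.5+10\cdot 0.5\bigr) \;=\; \tfrac{29.5}{23},
\]
so by Remark~\ref{remark_depth_off_by_one} the LP-equivalent cost of $P$ is $\tfrac{52.5}{23}$. The key quantitative observation is that $f$ has common denominator $23$ and every STT depth is a positive integer, so $23\cdot cost(T,f)\in\mathbb{Z}$ for every STT $T$. Therefore the theorem reduces to showing $cost(T,f)\ge\tfrac{53}{23}$ for every STT $T$, which is the smallest element of $\tfrac{1}{23}\mathbb{Z}$ strictly greater than $\tfrac{52.5}{23}$. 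This converts the strict inequality into a hard integer bound with a gap of exactly $\tfrac{1}{46}$.

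The $\tfrac{53}{23}$ lower bound I would prove by case analysis on the root of the STT. Because $f_7=\tfrac{10}{23}$ dominates, I first handle roots in $\{1,2,4,5\}$: each such root forces the $5$-node subproblem containing $7$ to hang below depth~$2$ of the STT, and a short accounting using the forced depths of the root and its other child pushes the cost strictly above $\tfrac{53}{23}$. For the three remaining roots $r\in\{3,6,7\}$ the recursion produces very small subproblems (paths of $\le 5$ nodes, and small long-stars), each admitting only a handful of STT shapes; exhaustive enumeration --- using the automorphism $(1\!\leftrightarrow\!5,\,2\!\leftrightarrow\!4)$ of $U$ that preserves $f$ to halve the work --- shows the minimum STT cost is exactly $\tfrac{53}{23}$, attained for example by rooting at $7$, then at $3$, then at the leg-roots $1,5,6$.

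The main obstacle is combinatorial bookkeeping: writing a full $(X,Z)$-extension of $P_D$ whose feasibility can be checked cleanly on the page, and rendering the STT case analysis transparent rather than a brute-force table. The essential conceptual ingredient is the denominator trick: $f$ lives in $\tfrac{1}{23}\mathbb{Z}$ while the fractional $P_D$ produces a cost in $\tfrac{1}{46}\mathbb{Z}$, so the separation between the LP value and the best integer STT cost is a concrete $\tfrac{1}{46}$ rather than an approximate numerical difference.
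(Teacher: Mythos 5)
Your overall plan is structurally identical to the paper's: exhibit an explicit point $P$ with half-integer $(X,Z)$ coordinates, verify feasibility, compute $P_D\cdot w = 29.5$ (with $w = 23f$), and then show every STT has $w\cdot D^T \ge 30$ by case analysis on its root. The ``denominator trick'' is a tidy way to phrase why it suffices to hit an integer bound, but it buys nothing extra here: the paper determines the STT optimum to be \emph{exactly} $30$, so the strict separation $29.5 < 30$ falls out directly, and you still have to do the same case analysis.

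The concrete gap is in the treatment of roots $r \in \{1,2,4,5\}$, where the sketch makes claims that do not hold as stated. For $r\in\{1,5\}$, removing $r$ leaves a \emph{connected} $6$-node component, not a ``$5$-node subproblem containing $7$''; only $r\in\{2,4\}$ isolates $7$ in a $5$-node subtree. And for any such $r$, node $7$ can be chosen as the direct child of the root, so nothing is forced ``below depth $2$.'' Consequently, ``a short accounting using the forced depths of the root and its other child'' does not reach $30$: for $r=1$, accounting one level below the root gives only $\sum_{v\neq 1}w_v = 20 < 30$. One must recurse to the depth-$2$ subtree root as well, after which the bounds are $\ge 37$ for $r\in\{1,5\}$ and $\ge 34$ for $r\in\{2,4\}$ --- still a short argument, but a level deeper than your sketch suggests. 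The paper organizes the cases differently (center / mid-node / leaf), uses weight dominance to reduce each class to its heaviest representative ($6$ and $7$), and then walks explicitly through subtree shapes; your partition by $\{1,2,4,5\}$ versus $\{3,6,7\}$ can also work, but the accounting needs to be corrected and made precise. Finally, a complete writeup must display the $(X,Z)$-extension of $P_D$ and check all four constraint families (the paper does this in Figure~\ref{figure_explicit_counter_example}); pointing to the LP solver is not a proof of feasibility.
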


We note before the proof:
\begin{enumerate}
    \item Theorem~\ref{theorem_non_integer_optimum_long_star} holds even when we require a vector of strictly positive frequencies. Indeed, we can perturb $f_3$ by some $\epsilon > 0$ and still maintain a strictly-better feasible solution.
    
    \item $P$ is in fact a vertex, as determined by our LP-solver code. That being said, Theorem~\ref{theorem_non_integer_optimum_long_star} does not require $P$ to be a vertex, just that its value is strictly better than any STT.
\end{enumerate}

\begin{figure}[ht]
\includegraphics[width=1.0\textwidth]{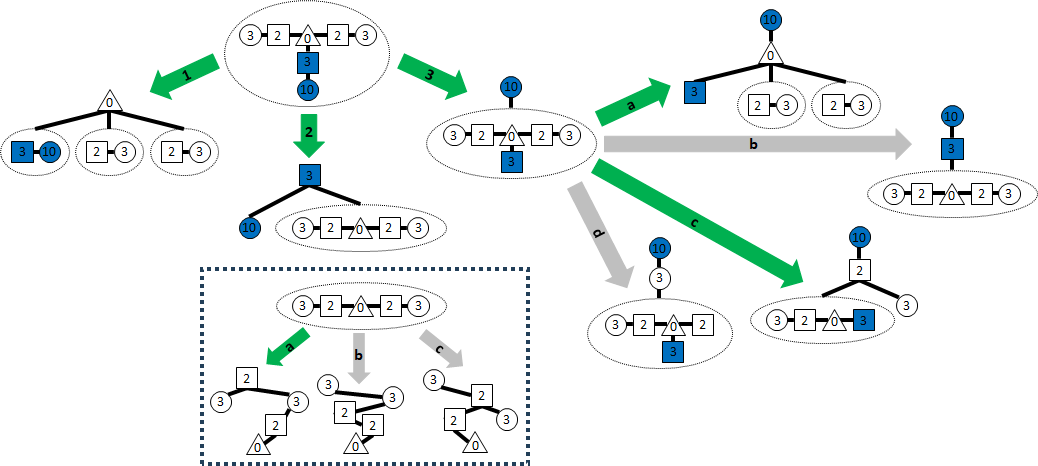}
\caption{\footnotesize{Visualization of the case analysis in proving Theorem~\ref{theorem_non_integer_optimum_long_star}. The blue nodes emphasize the heaviest leg, weights are written within the nodes, and the shapes emphasize the type of each node in the topology: center (triangle), mid-node (square), leaf (circle). A dashed ellipse indicates that the STT for this component is undetermined, yet. The numbers and letters within the arrows correspond to the cases of the proof. Green arrows lead to an STT that minimizes the LP among STTs, gray arrows lead to sub-optimal solutions.}}
\label{figure_case_analysis}
\end{figure}

\begin{proof}
For convenience, we scale the frequencies back to integer weights and use the vector $w = 23 \cdot f = [3,2,0,2,3,3,10]$.
We divide the proof to two parts. First we describe all the coordinates of $P$ and verify its feasibility. Then we case-analyze the best possible value of a search tree, to show that it is strictly worse than the value of $P$.

\textbf{The solution $P$:} We present the values of the variables of $P$ in Figure~\ref{figure_explicit_counter_example}, where the $X_{ij}$ variables form a matrix $X$ (note that the variables $X_{ii}$ do not exist, so these entries are marked by '.'). The $D_i$'s are written below $X$, as they are the sums of the columns of $X$, and $Z_{kij}$ are to the side of $X$. The figure also verifies feasibility for each of the (in)equalities. The weighted depth of $P$ is: $P_D \cdot w = [2,2,4.5,2,2,1.5,0.5] \cdot [3,2,0,2,3,3,10] = 6+4+0+4+6+4.5+5=29.5$.

\begin{figure}[ht]
    {\footnotesize
    \begin{subfigure}{0.9\textwidth}
        \begin{equation}
        \label{equation_vertex_opt_values}
        \begin{bmatrix}
            X \\
            \hline
            D
            \end{bmatrix}
        =
            \begin{bmatrix}
            . & h & h & 0 & 0 & 0 & 0 \\
            h & . & 1 & h & h & h & 0 \\
            0 & 0 & . & 0 & 0 & 0 & 0 \\
            h & h & 1 & . & h & h & 0 \\
            0 & 0 & h & h & . & 0 & 0 \\
            h & h & 1 & h & h & . & h \\
            h & h & h & h & h & h & . \\
            \hline
            2 & 2 & 4.5 & 2 & 2 & 1.5 & 0.5 \\
            \end{bmatrix}
        ; Z:
        \begin{bmatrix*}[l]
            Z_{213}=Z_{214}=Z_{215}=Z_{415}=Z_{216}=Z_{617}=h \\
            Z_{314}=Z_{315}=Z_{316}=Z_{217}=Z_{317}=0 \\
    
            Z_{425}=Z_{627}=h \\
            Z_{324}=Z_{325}=Z_{326}=Z_{327}=0 \\
        
            Z_{435}=Z_{637}=Z_{647}=Z_{456}=Z_{657}=h \\
            Z_{346}=Z_{347}=Z_{356}=Z_{357}=Z_{457}=0 \\
        \end{bmatrix*}
        \end{equation}
        
    \end{subfigure}
    }
    
    {\footnotesize
    \begin{subfigure}{0.9\textwidth}
        \begin{equation}
        \label{equation_vertex_opt_inequalities}
        \begin{split} 
        &
            \begin{bmatrix*}[l]
            X_{12}+X_{21} \\
            X_{13}+X_{31} + Z_{213} \\
            X_{14}+X_{41} + Z_{214} + Z_{314} \\
            X_{15}+X_{51} + + Z_{215} + Z_{315} + Z_{415} \\
            X_{16}+X_{61} + Z_{216} + Z_{316} \\
            X_{17}+X_{71} + Z_{217} + Z_{317} + Z_{617} \\
            X_{23}+X_{32} \\
            X_{24}+X_{42} + Z_{324} \\
            X_{25}+X_{52} + Z_{325}+ Z_{425} \\
            X_{26}+X_{62} + Z_{326} \\
            X_{27}+X_{72} + Z_{327} + Z_{627} \\
            \end{bmatrix*}
        =
            \begin{bmatrix*}[l]
            h+h \\
            0+h+h \\
            0+h+h+0\\
            0+0+h+0+h\\
            0+h+h+0\\
            0+h+0+0+h\\
            1+0\\
            h+h+0\\
            h+0+0+h\\
            h+h+0\\
            0+h+0+h\\
            \end{bmatrix*}
        ;
            \begin{bmatrix*}[l]
            X_{34}+X_{43} \\
            X_{35}+X_{53} + Z_{435} \\
            X_{36}+X_{63} \\
            X_{37}+X_{73} + Z_{637} \\
            X_{45}+X_{54} \\
            X_{46}+X_{64} + Z_{346} \\
            X_{47}+X_{74} + Z_{347}+ Z_{647} \\
            X_{56}+X_{65} + Z_{356}+ Z_{456} \\
            X_{57}+X_{75} + Z_{357}+ Z_{457}+ Z_{657} \\
            X_{67}+X_{76} \\
            \end{bmatrix*}
        =
            \begin{bmatrix*}[l]
            0+1\\
            0+h+h\\
            0+1 \\
            0+h+h\\
            h+h \\
            h+h+0 \\
            0+h+0+h\\
            0+h+0+h\\
            0+h+0+0+h\\
            h+h\\
            \end{bmatrix*}
        \\
        &
            \begin{bmatrix*}[l]
            Z_{213}=Z_{214}=Z_{215}=Z_{216} \\
            Z_{617}=Z_{627}=Z_{637}=Z_{647}=Z_{657} \\
            Z_{415}=Z_{425}=Z_{435}=Z_{456} \\
            \end{bmatrix*}
        =
        h
        \le
            \begin{bmatrix*}[l]
            X_{21},X_{23},X_{24},X_{25},X_{26} \\
            X_{61},X_{62},X_{63},X_{64},X_{65},X_{67} \\
            X_{41},X_{42},X_{43},X_{45},X_{46} \\
            \end{bmatrix*}
        =
            \begin{bmatrix*}[l]
            h,1,h,h,h \\
            h,h,1,h,h,h \\
            h,h,1,h,h \\
            \end{bmatrix*}
        \end{split}
        \end{equation}
    \end{subfigure}
    }
    
\caption{\footnotesize{The vertex $P$. (Eq.~\ref{equation_vertex_opt_values}) The values of $P$, where $h = \frac{1}{2}$ (short for half). The $Z$ variables are separated to three pairs of rows (of values $0$ and $h$). The first pair covers all $Z_{k1j}$, the next pair covers all $Z_{k2j}$ and the last pair covers $Z_{kij}$ for $i \ge 3$. The vector $D$ is a column-sum of the matrix $X$. (Eq.~\ref{equation_vertex_opt_inequalities}) Proof of feasibility, listing all the ancestry-inequalities. Inequalities of the form $Z_{kij} \le X_{ki},X_{kj}$ are verified in ``bulk'' (only non-trivial for $Z_{kij} > 0$).}}
\label{figure_explicit_counter_example}
\end{figure}

\textbf{Case analysis of all search trees:}
The number of STTs over $U$ is too large for hand verification. There are $662$ in total, but  $U$ is symmetric, and $w$ is also mostly symmetric. This enables us to reduce the analysis to a handful of cases. We emphasize that in the following discussion, the quality of an STT is considered with respect to $w = [3,2,0,2,3,3,10]$, and that the depths start from $0$ (root) (see Remark~\ref{remark_depth_off_by_one}).

Before the case analysis itself, we determine the best BST over the nodes $1$ to $5$, which is used later in two different cases. A best BST only on the subgraph of $U$ that contains nodes $1$ to $5$ has weighted depth of $10$, and depths: $[1,0,3,2,1]$. Indeed: in this case, we are dealing with a binary search tree (BST). Note that $w_3 = 0$, so we put it at the bottom, and we need to find the best BST over weights $w_{1,2,4,5}=[3,2,2,3]$. Thanks to symmetry, there are only three candidate options (other options are worse), see the dashed box in Figure~\ref{figure_case_analysis}: (a) The root has weight 2, with two children of weight $3$: then the weighted depth is $[3,2,2,3] \cdot [1,0,2,1] = 10$. (b) The root and its child have a weight of 3: then the weighted depth is $[3,2,2,3] \cdot [0,2,3,1] = 13$. (c) The root has weight 3, its child has weight $2$ and it is not its neighbor in the topology: then the weighted depth is $[3,2,2,3] \cdot [0,2,1,2] = 12$.

Finally, the case analysis. We show that the best weighted depth of an STT is $30$ (there are multiple STTs that attain it). This will conclude the proof, since we already determined that the weighted depth of $P$ is $29.5$. See also Figure~\ref{figure_case_analysis}.

\begin{enumerate}
    \item Consider an STT rooted at $3$: it divides the next search on $U$ to three separate $2$-node components (its legs). While there is a total of $8$ such STTs, there is a single dominating STT such that we root each subtree by the heavier node, which is a leaf in $U$. Overall, the weighted depth of this best tree is $w \cdot [1,2,0,2,1,2,1] = 3+4+0+4+3+6+10=30$.

    \item Consider an STT rooted at a mid-node: Then its adjacent leaf is a direct child of the root, and the other subtree is a BST on a path of $5$ nodes. Because all the legs are such that the mid-node is heavier than its corresponding leaf, and because the leg of $(6,7)$ is strictly heaviest, the best such STT must be rooted at $6$. The remaining subtree is therefore a BST over the path of nodes $1$ to $5$. We found earlier the best BST over this path. Therefore, adding $+1$ to the depth of each node in this BST due to it being a subtree, we get that the weighted depth of the best tree rooted in a mid-node is $w \cdot [2,1,4,3,2,0,1] = 6+2 + 0+6+6 + 0+10 = 30$.

    \item Consider an STT rooted at a leaf: Again by symmetry, it suffices to  consider only STTs rooted at $7$.
    \begin{enumerate}
        \item If the child of the root is $3$, then the remaining nodes divide again to independent legs, and we have a clear best STT among the four options. Its weighted depth is $w \cdot [2,3,1,3,2,2,0] = 6+6 +0+ 6+6 + 6+0 = 30$.

        \item If the child of the root is $6$, then its subtree is a BST over the path of nodes $1$ to $5$, and as we analyzed a best option has depths $[1,0,3,2,1]$ when rooted. In this case, the weighted depth is $w \cdot [3,2,5,4,3,1,0] = 9+4+0 + 8+9 + 3+0 = 33$.

        \item If the child of the root is $4$ (choosing $2$ is symmetric): then $5$ breaks-away to its own subtree, and the remainder is a BST over $\{1,2,3,6\}$. Since $3$ can be pushed down (zero weight), one can verify that the best BST over $\{1,2,6\}$ is rooted at $2$. Overall, we get in this case a weighted depth of $w \cdot [3,2,4,1,2,3,0] = 9+4+0 + 2+6 + 9 +0 = 30$.

        \item If the child of the root is $5$ (choosing $1$ is symmetric):
        \begin{enumerate}
            \item Rooting the sub-sub-tree at $v$ such that $v$ is either $1$, $4$ or $6$ is sub-optimal: it leaves the remaining nodes as a single connected component, and even if we assume that the depth of each of them is only $3$ (direct children of $v$), then the weighted depth is at least: $0 \cdot w_7 + 1 \cdot w_5 + 2 \cdot w_v + 3 \cdot (10 - w_v) = 33 - w_v \ge 30$.

            \item Rooting the sub-sub-tree at $3$: two options remain due to the leg of $(1,2)$, but the better weighted depth among them is $w \cdot [3,4,2,3,1,3,0] = 9+8+0 + 6+3 + 9+0 = 35$.

            \item Rooting the sub-sub-tree at $2$: while it leaves $\{3,4,6\}$ in the same connected component, $3$ has zero-weight and therefore can be pushed down, so the root in this sub-component would be $6$ followed by $4$ as its child. Overall, the best weighted depth in this case is $w \cdot [3,2,5,4,1,3,0] = 9+4+0 + 8+3 + 9 +0= 33$. \qedhere
            
        \end{enumerate}
    \end{enumerate}
\end{enumerate}
\end{proof}

\begin{remark}
\label{remark_integrality_gap_example_60over59}
In the proof of Theorem~\ref{theorem_non_integer_optimum_long_star} we found that $OPT(U,w) \le 29.5$ compared to the best feasible STT that gives $30$. This shows an integrality gap of $\frac{60}{59}$. (This gap is the same if we solve for $f$ instead of $w$, since the scaling cancels out.)
\end{remark}

\subsection{Counter Examples: A General Discussion and the Normals Method}
\label{subsection_general_counter_exaple}

In Section~\ref{subsection_specific_counter_example} we gave a very specific ``out of the blue'' example. Now we give a broader context on how we found it and how to find others using what we call the \emph{Normals Method}. Consider a fixed tree topology $U$. There are $N$ STTs over it, each inducing a different LP depths-vector, and all together span a convex polytope $\mathbb{P}$ in depth-space, that is, considering only $D$ out of $(X,Z,D)$. Because we have a minimization problem over the depths, and because all the weights are non-negative, we define and work with the polytope $\mathbb{P}' = \{A\ |\ \exists P \in \mathbb{P}: \forall i: A_i \ge P_i\}$.\footnote{$\mathbb{P}'$ is the Minkowski sum of $\mathbb{P}$ with the positive orthant.} Loosely speaking $\mathbb{P}'$ is the set of points dominated by $\mathbb{P}$ with respect to minimization.

Note that $\mathbb{P}$ and $\mathbb{P}'$ have the same vertices, because every vertex has a direction in which it uniquely minimizes the objective (See Theorem~\ref{theorem_integer_vertex_iff_stt}) so none is dominated by others. However, by trading the finite $\mathbb{P}$ with the infinite $\mathbb{P}'$ we removed every facet that was dominated be a vertex. See Figure~\ref{figure_illustrating_infinite_polytope_is_better}(a) versus \ref{figure_illustrating_infinite_polytope_is_better}(b).

\begin{figure}[ht]
    \begin{subfigure}{0.23\textwidth}
        \includegraphics[width=1.0\textwidth]{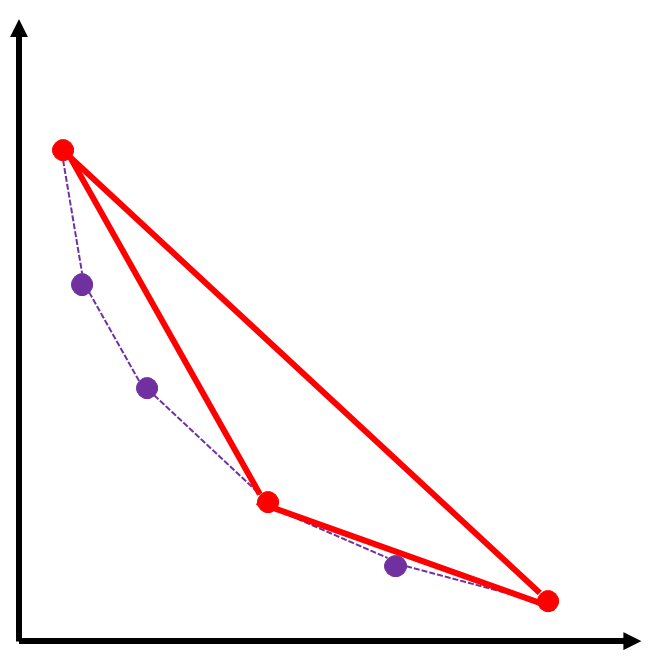}    
    \caption{Bounded polytope $\mathbb{P}$}
    \label{figure_illustrating_infinite_polytope_is_better_1polytope}
    \end{subfigure}
    ~
    \begin{subfigure}{0.23\textwidth}
        \includegraphics[width=1.0\textwidth]{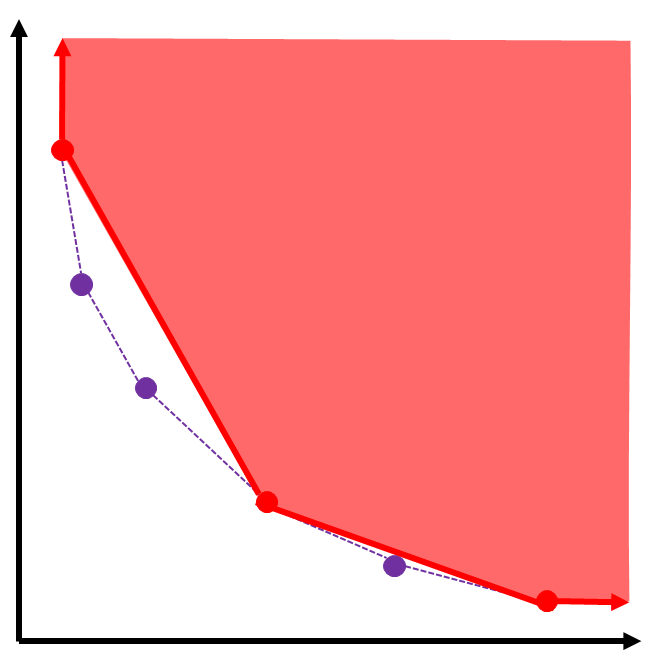}
        \caption{Infinite polytope $\mathbb{P}'$}
        \label{figure_illustrating_infinite_polytope_is_better_2envelope}
    \end{subfigure}
    ~
    \begin{subfigure}{0.23\textwidth}
        \includegraphics[width=1.0\textwidth]{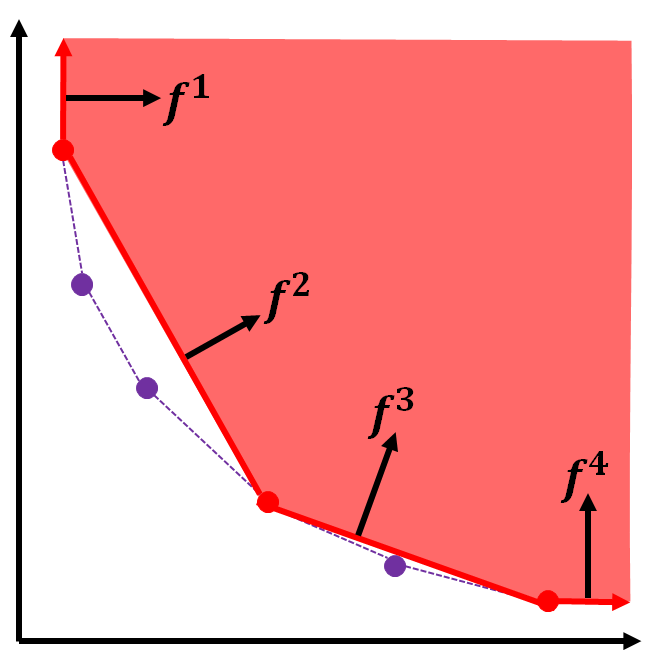}
        \caption{False facets}
        \label{figure_illustrating_infinite_polytope_is_better_3normal}
    \end{subfigure}
    ~
    \begin{subfigure}{0.23\textwidth}
        \includegraphics[width=1.0\textwidth]{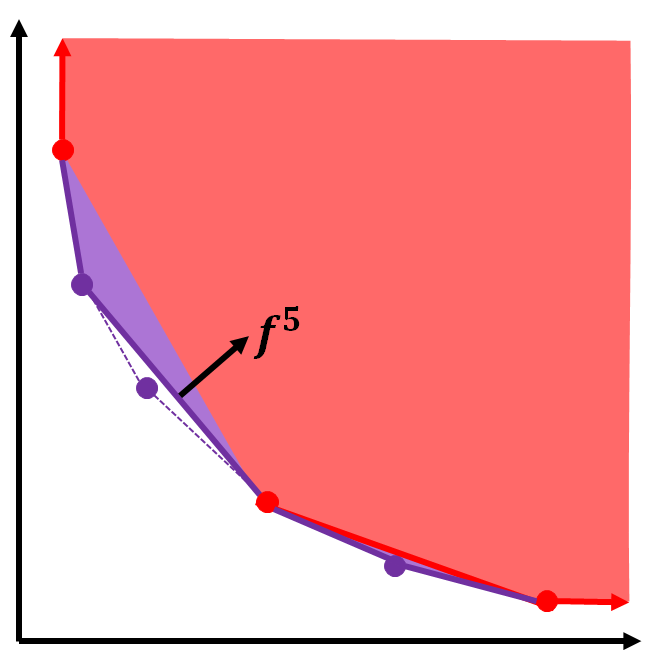}
        \caption{Iterative discovery}
        \label{figure_illustrating_infinite_polytope_is_better_4iterative}
    \end{subfigure}
\caption{\footnotesize{A 2-Dimensional illustration of the \emph{Normals Method} described in Section~\ref{subsection_general_counter_exaple}. (a) Shows 3 vertices which are (conceptually) due to STTs, that span a bounded red triangle, $\mathbb{P}$. Purple vertices denote additional (non-STT) vertices that we are unaware of, at this point. (b) shows the relaxed polytope $\mathbb{P}'$, of all points $A$ such that $\exists P \in \mathbb{P}: \forall i: P_i \le A_i$. $\mathbb{P}'$ has the same vertices as $P$, but only the facets of the lower-envelope of $\mathbb{P}$, as in this example. $\mathbb{P}'$ may gain axes-parallel facets. (c) We solve the LP in directions according to each normal vector $f^{i}$ to one of the facets of $\mathbb{P}'$. On a ``true'' facet the LP solution will not be better than expected, as for $f^{1}$ and $f^{4}$ in this illustration. However, if the facet is ``false'', the LP solution would be strictly better, revealing a new purple vertex, as for $f^{2}$ and $f^{3}$. Multiple normals may reveal the same vertex but only in $3$ dimensions and higher, see also Figure~\ref{figure_sage_3D_example}. By scanning all the facets of $\mathbb{P}'$ we determine if there are non-STT vertices, or not. (d) If we found new vertices, we can re-define the polytope and re-apply the method to search additional vertices as exemplified by the normal $f^5$.}}
\label{figure_illustrating_infinite_polytope_is_better}
\end{figure}

\begin{figure}[ht]
    \centering
    \begin{subfigure}{0.4\textwidth}
        \includegraphics[width=1.0\textwidth]{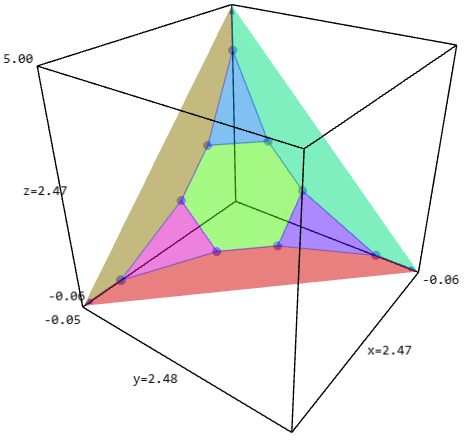}
    \caption{Without $[0.5,0.5,0.5]$}
    \end{subfigure}
    ~
    \begin{subfigure}{0.4\textwidth}
        \includegraphics[width=1.0\textwidth]{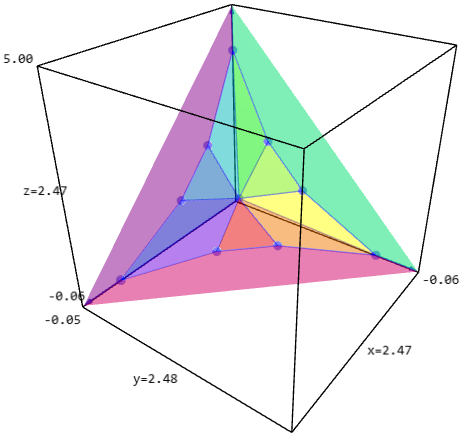}
        \caption{With $[0.5,0.5,0.5]$}
    \end{subfigure}
\caption{\footnotesize{A 3-dimensional illustration of how the \emph{Normals Method} described in Section~\ref{subsection_general_counter_exaple} may find a single vertex for multiple false facets. The figures were produced with \emph{Sage}. 
(a) $7$ Facets of the lower envelope of the convex hull of a set of $9$ integer vertices $A = \{(4,0,0),(0,4,0),(0,0,4),(0,1,2),(0,2,1),(1,0,2),(1,2,0),(2,0,1),(2,1,0)\}$.
(b) By adding to $A$ an extra vertex $(0.5,0.5,0.5)$, $4$ of the facets turn out to be false, while every vertex remains a vertex. In both figures, each facet has its own (arbitrary) color.} The vertices in this example are not related to STTs, since the smallest ``authentic'' example is $7$-dimensional.}
\label{figure_sage_3D_example}
\end{figure}

Given the $N$ known vertices induced by STTs, we can compute the facets of $\mathbb{P}'$, and solve the LP in the direction of the normal to each facet. If the facet is ``true'', then the solution would yield an optimum whose value is equal to the value of each of the STTs that span this facet. However, if the facet is ``false'', because it ``shaves off'' vertices that we ignored, then we will find a solution that is strictly better than any of the STTs that span the false facet, whose value is strictly better than the value of each of the STTs that span this facet, and this is why we work with $\mathbb{P}'$ rather than $\mathbb{P}$.  Observe that a false facet may hide many vertices, but every normal can reveal at most one new vertex when we solve the LP in its direction.\footnote{Assuming ``general position'', that is, no set of unknown vertices spans a face that is parallel to a false facet. Otherwise, a deterministic solver will still only find one vertex, but randomization may allow finding more.} We can theoretically find all the hidden vertices by iteratively refining the search, such that whenever a new vertex is found, we compute the new facets and solve the LP in the direction of the corresponding new normal vectors. We can repeat this until all the facets are determined to be true facets. Note that in the first iteration, every facet is due to STT vertices, hence there are multiple optimal STTs for the normal of a facet (exactly those that span the facet). This explains why in Section~\ref{subsection_specific_counter_example} our analysis determines multiple optimal STTs (revisit Figure~\ref{figure_case_analysis}), and more specifically, each point in the facet satisfies the equality $D \cdot w = 30$ where $D$ is the LP depths-vector and $w = [3,2,0,2,3,3,10]$ is the normal, which tells us in advance that the optimal LP value for the points induced by STTs is $30$. Note also that for $n \ge 3$, multiple normals may result in the same new vertex, as illustrated in Figure~\ref{figure_sage_3D_example}.

\begin{definition}[Primary Directions]
\label{definition_primary_direction_terminology}
We refer to each normal to a facet of the STT induced polytope as a \emph{primary direction}.
\end{definition}

Armed with the normals method, all that remains is to enumerate all the STTs, compute their depths-vectors as our $N$ vertices, convert from vertices to facets, enumerate the facets and for each solve the LP and check if the optimum is strictly better than was expected. This is not trivial, and is computationally demanding, therefore we were only able to run it for topologies of size $n \le 8$. We exhausted all possible topologies up to this size, and detail the high-level results in the caption of Figure~\ref{figure_all_trees_up_to_n8}. See Table~\ref{table_trees_summary_analysis_vertices_etc_FEW} and Table~\ref{table_trees_summary_analysis_vertices_etc_FULL} for the detailed results. The computation was done in \emph{Sage}, and in particular
we used sage to enumerate the facets of the polytope defined as the convex hull of the STTs.
For more on the code, see Appendix~\ref{appendix_section_code}. Due to the computational complexity of converting vertices to facets, we were unable to finish an iterative scan for all the possible vertices. To clarify, we were able to fully scan all the primary directions for each topology, and determine new non-integer vertices in seven of them. However, we did not have enough resources to compute the new facets when taking into account the newly found vertices for the topologies with $8$ nodes.
This failed even when only $2$ new vertices were found. We were only able to run this second phase on the topology with $7$ nodes, and for it we determined that no additional new vertices exist. (The second phase had $6385$ normals, compared to $6364$ primary directions.)

In general, many of the vertices, and the primary directions, may be symmetric. Symmetries arise due to automorphisms of the topology. If $\pi: U \to U$ is an automorphism, then we can define an automorphism on the $(X,Z,D)$-space by mapping indices according to $\pi$. For example, it maps $D_i \to D_{\pi(i)}$ and $X_{ij} \to X_{\pi(i),\pi(j)}$. Since the ``names'' of the nodes are arbitrary, it is clear why a vertex $P$ translates by $\pi$ to another vertex $P'$, and we say that $P$ and $P'$ are \emph{symmetric} to each other. Note that $P$ and $P'$ have the same coordinate values, permuted by $\pi$. There are additional cases where primary directions can look symmetric, see Remark~\ref{remark_pseudo_symmetry}.

\begin{table}[!t]
    \scriptsize
    \begin{center}
    \begin{tabular}{|c|c|c|c|c|c|l|l|}

    \hline
    \begin{tabular}{@{}c@{}}Topology \\ $U_{(n,i)}$ \end{tabular} &
    STTs &
    \begin{tabular}{@{}c@{}}Primary \\ Directions \end{tabular} &
    \begin{tabular}{@{}c@{}}False \\ Facets\end{tabular} &
    \begin{tabular}{@{}c@{}}Frac \\ Vs\end{tabular} &
    \begin{tabular}{@{}c@{}}Frac Vs \\ Classes\end{tabular} &
    \begin{tabular}{@{}c@{}}$D$-space \\ denom. \end{tabular} &
    \begin{tabular}{@{}c@{}}$XZD$-space \\ denom. \end{tabular} \\
    \hline
    \hline
    (3,0) & \ \ \ \ 5 & \ \ \ \ 9 & \ \ 0 & . & . &\{1\} & \{1\} {*} \\
    \hline
    (4,0) & \ \ \ 14 & \ \ \ 32 & \ \ 0 & . & . &\{1\} & \{1\} {*} \\
    \hline
    (4,1) & \ \ \ 16 & \ \ \ 32 & \ \ 0 & . & . &\{1\} & \{1\} {*} \\
    \hline
    (5,0) & \ \ \ 42 & \ \ 145 & \ \ 0 & . & . &\{1\} & \{1, 2, 3\} {*} \\
    \hline
    (5,1) & \ \ \ 51 & \ \ 152 & \ \ 0 & . & . &\{1\} & \{1\} {*} \\
    \hline
    (5,2) & \ \ \ 65 & \ \ 161 & \ \ 0 & . & . &\{1\} & \{1\} {*} \\
    \hline
    (7,3) & \ \ 662 & \ 6364 & \ 39 & 9 & 2 & \{1, 2\} & \{1, 2\} \\
    \hline
    (8,4) & \ 2416 & 48291 & 362 & 65 & 38 & \{1, 2\} & \{1, 2, 3\} \\
    \hline
    (8,5) & \ 2952 & 56376 & 120 & 2 & 1 & \{1, 2\} & \{1, 2, 3\} \\
    \hline
    (8,6) & \ 2802 & 56724 & \ 10 & 2 & 1 & \{1, 2\} & \{1, 2, 3, 4\} \\
    \hline
    (8,11) & \ 3988 & 54201 & \ 78 & 18 & 4 & \{1, 2\} & \{1, 2\} \\
    \hline
    (8,12) & \ 3332 & 56404 & 528 & 60 & 24 & \{1, 2\} & \{1, 2, 3\} \\
    \hline
    (8,13) & \ 4076 & 65733 & 946 & 28 & 4 & \{1, 2\} & \{1, 2\} \\
    \hline
    \end{tabular}
    \end{center}
    \caption{
    \footnotesize{\largeTablecaption{Summary of all tree topologies up to $n \le 5$ nodes and those with non-STT $D$-space vertices up to $n \le 8$ (for more topologies, see Table~\ref{table_trees_summary_analysis_vertices_etc_FULL}).}}}
    \label{table_trees_summary_analysis_vertices_etc_FEW}
\end{table}

\begin{figure}
\includegraphics[width=1.0\textwidth]{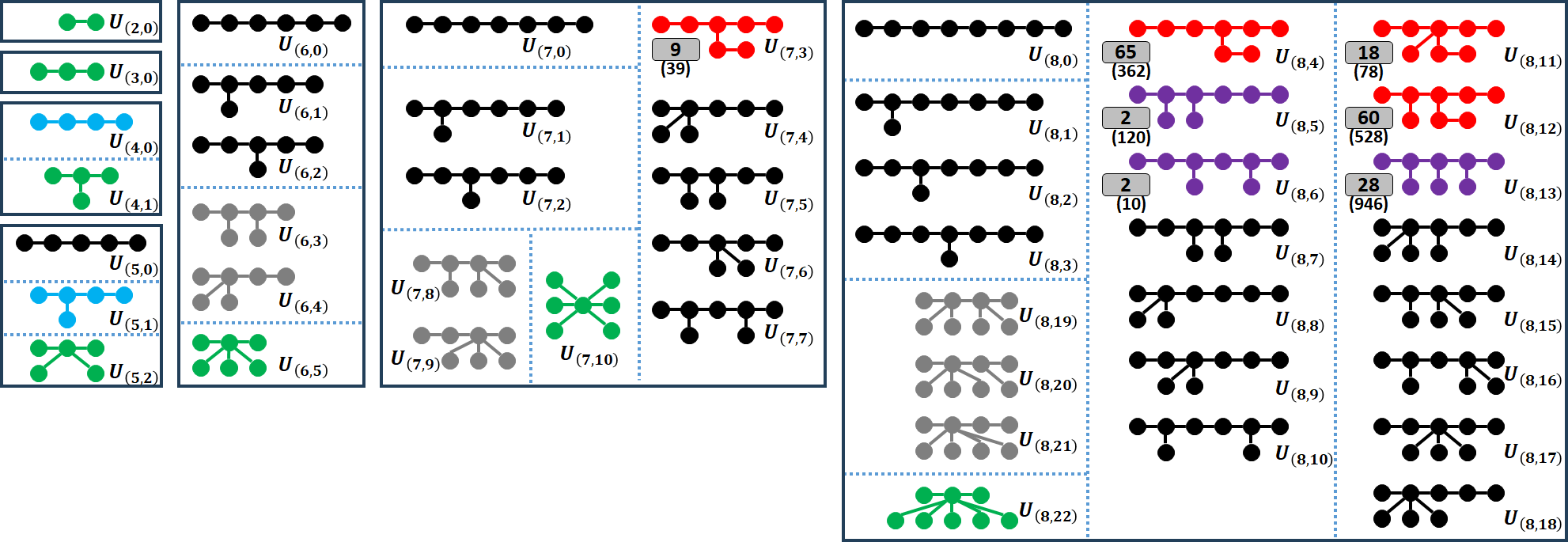}
\caption{\footnotesize{All tree topologies with $2 \le n \le 8$ nodes, 47 in total. Each tree is named $U_{(n,i)}$ where $n$ is its number of nodes and $i$ is a running index, over non-increasing diameter. For example, $U_{(n,0)}$ is always the path over $n$ nodes and the last among $U_{(n,i)}$ is the star with $n$ nodes. The number of trees for each $n$ is also known as OEIS sequence A000055 (see: https://oeis.org/A000055). Rectangles group by size $n$, and dashed lines divide by diameter within each group. Color scheme: Green trees are stars, analytically proven to have only integer vertices in $(X,Z,D)$-space. 
Blue trees were also verified to have only integer vertices in $(X,Z,D)$-space, with computer assistance. Black trees only have integer vertices in $D$-space, but have non-integer vertices in $(X,Z,D)$-space, this is further discussed in Section~\ref{subsection_with_and_without_z_fourier_motzkin}.
Gray trees may be either blue or black, it was infeasible to verify. Red and purple mark trees with non-integer vertices in $D$-space, i.e., counter-example vertices. The red trees are the unique such tree for $n=7$ and its extensions to $n=8$ (Definition~\ref{definition_extended_graph}), while purple trees are additional topologies that cannot be extended from the red tree with $7$ nodes. The number of non-integer vertices is written in a gray box near each tree, and in parenthesis is the number of false facets (or, primary directions) by which we found these vertices.}}
\label{figure_all_trees_up_to_n8}
\end{figure}

We follow with some additional information on the topology $U_{(7,3)}$ used in the proof of Theorem~\ref{theorem_non_integer_optimum_long_star}. Its LP has a total of $39$ false facets which correspond to $9$ non-integer vertices. By the automorphism symmetry of the topology it is natural that these numbers are divisible by $3$. There are $7$ normals, such that one of them has 3 symmetric copies and each of the other six has 6 copies. These normals are (the first was used in Theorem~\ref{theorem_non_integer_optimum_long_star}): $w = (3, 2, 0, 2, 3, 3, 10)$, $(14, 6, 0, 10, 32, 5, 7)$, $(16, 6, 0, 11, 34, 4, 8)$, $(39, 11, 0, 6, 21, 4, 8)$, $(18, 6, 0, 10, 36, 5, 7)$, $(18, 5, 0, 3, 6, 4, 5)$, $(9, 4, 0, 7, 22, 4, 5)$. Information about other trees is summarized in Table~\ref{table_trees_summary_analysis_vertices_etc_FEW}.

We conclude our findings with an interesting remark.

\begin{definition}[Partially Integer Vertex]
\label{definition_partially_integer_vertex}
We say that an $(X,Z,D)$-vertex is \emph{partially integer} if its $D$ coordinates are all integer, but at least one $X$ coordinate is non-integer.
\end{definition}

\begin{remark}[New vertices are half-integer]
\label{remark_new_vertices_are_non_integer_half_integer_only}
Partially integer vertices exists, an example is given in Figure~\ref{figure_Example_partially_interger_vertex}. Therefore, the normals method may technically find partially integer vertices which do not correspond to STTs. Indeed, while Property~\ref{property_integer_domination} says that STT-vertices dominate other integer points, it is with respect to fully integral points in $(X,Z,D)$-space.

However, interestingly every new $D$-space vertex that we were able to find has non-integer depths. Partially integer vertices that we found are always projected to non-vertices in $D$-space. Moreover, all depths are half-integer, that is, have coordinates that are $0$ or $\frac{1}{2}$, modulo $1$ (some coordinates may be integer, but not all of them). Is all of this an artifact of small sizes ($n \le 8$), like the fact that all the polytopes are integer for $n \le 4$, or is it an inherent property?
\end{remark}

\begin{example}
\label{figure_Example_partially_interger_vertex}
The smallest example for a partially integer vertex (Definition~\ref{definition_partially_integer_vertex}) is for $n=6$, see $P_1$ in Equation~(\ref{equation_partially_integer_vertex}). We find $P_1$ when solving the LP for the path topology, in some general $(X,Z,D)$-direction.\footnote{
The order of LP variables is defined in the code, in the function named ``\_construct\_dictionaries\_PRIMAL'' of the \emph{SearchTreeUtilities} class. The direction according to their order is [0, 5, 5, 1, 2, 3, 1, 4, 1, 1, 3, 3, 1, 4, 2, 2, 3, 5, 3, 1, 1, 3, 5, 4, 2, 2, 6, 0, 2, 1, 5, 3, 3, 1, 5, 4, 4, 5, 3, 2, 1, 6, 3, 5, 2, 3, 4, 5, 3, 2, 2, 1, 0, 3, 5, 2].} The depths-vector of $P_1$ is dominated by the STT (BST) $[2,1,2,0,1,2]$. In other cases (not shown) the depths are dominated by a convex combinations of more than one STT. For $n \le 5$, only the path topology has non-integer vertices, and each has a non-integer $D$ coordinate (and is projected to a non-vertex in $D$-space).

{\scriptsize
\begin{equation}
\label{equation_partially_integer_vertex}
P_1 = \frac{1}{2} \cdot 
\begin{bmatrix}
. & 2 & 0 & 0 & 0 & 0 \\
0 & . & 1 & 0 & 0 & 2 \\
2 & 1 & . & 1 & 1 & 0 \\
1 & 1 & 1 & . & 1 & 2 \\
1 & 0 & 0 & 1 & . & 2 \\
0 & 0 & 2 & 0 & 0 & . \\
\hline
4 & 4 & 4 & 2 & 2 & 6 \\
\end{bmatrix}
\end{equation}
}
\end{example}

\subsection{Integrality Gap - Lower Bounds}
\label{subsection_integrality_gap}
A lower bound of $\frac{60}{59}$ on the integrality gap of the LP follows from Theorem~\ref{theorem_non_integer_optimum_long_star}. For a tighter lower bound on the integrality gap, we can study all the non-integer vertices in all seven small topologies that have such vertices.

\begin{lemma}
\label{lemma_additive_gap}
Given a fixed topology $U$, let $\mathbb{D} = \{ D^1,\ldots,D^N \} $ be all the $D$-space vertices induced by STTs, and let $\mathbb{S} = \{S^1,\ldots,S^M \}$ be additional vertices of the $D$-space projection of the LP polytope, where each $S^i$ is found by solving the LP in the primary directions $h^{i,1},\ldots,h^{i,{k_i}}$ for some $k_i \ge 1$. Denote $\Delta \equiv \max \{ \min_{A \in conv^+(\mathbb{D})}(A \cdot f) - \min_{B \in conv^+(\mathbb{D} \cup \mathbb{S})} (B \cdot f) \mid \forall i: f_i \ge 0 \wedge \sum_{i=1}^{n}{f_i} = 1 \}$ (the largest additive gap), where $conv^+$ is a convex combination over a set of points, and positive rays.
Then if $\Delta > 0$, it is achieved for some $i,j,k$ such that $B=S^i$, $f = \hat{h}^{i,j}$ and $A = D^k$, where hat represents normalization to a sum of one, and $D^k$ is on the false facet of $conv^+(\mathbb{D})$ whose normal is $h^{i,j}$. (Recall Section~\ref{subsection_general_counter_exaple}.)
\end{lemma}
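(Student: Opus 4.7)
My plan is to recast $\Delta$ as a fractional knapsack LP via duality applied to the inner minimization over $\mathbb{P}' := conv^+(\mathbb{D})$; the knapsack structure will then force the optimal $f$ to be a normalized facet normal of $\mathbb{P}'$, i.e., a primary direction.

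I would first reduce to a fixed $S^i$. Since $f \ge 0$, the positive rays of $conv^+$ do not help either minimum, so $\min_A A \cdot f = \min_k D^k \cdot f$ and $\min_B B \cdot f = \min\bigl(\min_k D^k \cdot f,\ \min_i S^i \cdot f\bigr)$. The assumption $\Delta > 0$ forces $\min_B$ at any optimal $f^*$ to be attained by some $S^{i^*}\in\mathbb{S}$ strictly beating every $D^k$; defining $\Delta_i := \max_f \min_k (D^k - S^i)\cdot f$, we have $\Delta = \max_i \Delta_i$, so it suffices to analyze a fixed $\Delta_{i^*}$.

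Next I would apply LP duality. Writing $\mathbb{P}'$ by its facet inequalities $\{x : h_m \cdot x \ge c_m,\ m=1,\ldots,M'\}$, with each outer normal $h_m \ge 0$ (forced by the recession cone $\mathbb{R}_+^n$ of $\mathbb{P}'$), LP duality gives $\min_{x \in \mathbb{P}'} x \cdot f = \max\{c^\top y : H^\top y = f,\ y \ge 0\}$, where $H$ has rows $h_m$. Plugging this into $\Delta_{i^*}$ and merging the two maxima through the substitution $f = H^\top y$, the simplex condition $\sum_j f_j = 1$ becomes $\sum_m y_m \beta_m = 1$ with $\beta_m := \sum_j h_m^j > 0$, and the objective becomes $\sum_m y_m (c_m - h_m \cdot S^{i^*})$. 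This is a fractional knapsack LP, whose optimum is always attained at a basic $y$ supported on a single index $m^* := \arg\max_m (c_m - h_m \cdot S^{i^*})/\beta_m$, yielding $y_{m^*} = 1/\beta_{m^*}$ and $f^* = H^\top y = \hat h_{m^*}$, a normalized primary direction.

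Finally, I would take the outer max over $i$ and swap to get $\Delta = \max_m (c_m - \min_i h_m \cdot S^i)/\beta_m$. At the winning facet $m^\dagger$, the inner minimizer $i^\dagger = \arg\min_i h_{m^\dagger} \cdot S^i$ is exactly the new vertex returned by the LP in direction $h_{m^\dagger}$, so in the paper's indexing $h_{m^\dagger} = h^{i^\dagger, j}$ for some $j$. Positivity of $\Delta$ forces $c_{m^\dagger} > h_{m^\dagger} \cdot S^{i^\dagger}$, so the facet is false, and any vertex $D^k \in \mathbb{D}$ lying on it (one exists, since $c_{m^\dagger}$ is attained at a vertex of $\mathbb{P}'$) furnishes the triple $(i^\dagger, j, k)$ claimed by the lemma.

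The hard part is mostly bookkeeping rather than a difficult step: one must check that the facet $m^\dagger$ arising from the knapsack argument matches one of the paper's primary directions $h^{i^\dagger, j}$, which boils down to the standard observation that the LP solver in direction $h_{m^\dagger}$ returns precisely $\arg\min_i h_{m^\dagger} \cdot S^i$. Once this identification is in place, the remainder is routine LP duality together with the elementary optimality structure of fractional knapsack.
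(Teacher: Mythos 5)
Your proof is correct and takes a genuinely different route. The paper first fixes the optimal $A$ (an STT vertex) and $B = S^i$, and then invokes duality \emph{locally at the vertex $A$} to write the maximizing $f$ as a convex combination $\sum_j \alpha_j \hat{h}^j$ of normalized normals to the facets of $conv^+(\mathbb{D})$ through $A$; linearity of $(A-B)\cdot f$ then lets one pick out a single $\hat{h}^{j^*}$ that does at least as well, with equality forced because $f$ was a maximizer. You instead dualize the full inner minimization over $conv^+(\mathbb{D})$ and, after fixing $S^{i^*}$ and merging the two maximizations, collapse the problem to a fractional knapsack over the facets; its $1$-sparse basic optimum forces $f^* = \hat{h}_{m^*}$ directly. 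Both routes yield the same conclusion, but your global-dual/knapsack reformulation makes the ``concentration on a single facet'' structural and mechanical, while the paper's local normal-cone decomposition identifies up front which facets (those through $A$) are the candidates. The shared caveat, which you yourself flag: matching the winning facet normal $h_{m^\dagger}$ to a specific $h^{i^\dagger,j}$ relies on the implicit hypothesis that $\mathbb{S}$ is the complete output of the normals-method sweep over \emph{all} primary directions, so that the LP optimum in direction $h_{m^\dagger}$ lies in $\mathbb{D}$ (true facet) or in $\mathbb{S}$ (false facet) and coincides with your $i^\dagger$ up to ties; the paper's proof elides this same identification, so your argument is no less rigorous on this point.
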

\begin{proof}
Let $f$ be a direction (vector) that maximizes $\Delta$. The scalar product with $f$ projects the polytope $conv^+(\mathbb{D})$ to one dimension, therefore the minimum has a pre-image that is a vertex, this is $A = D^k$. Similarly, by arguing for $conv^+(\mathbb{D} \cup \mathbb{S})$ we get that $B$ can be chosen as either some $D^i$ or $S^i$. However, $D^k \cdot f \le D^i \cdot f$ for every $1 \le i \le N$, so it must be that $B = S^i$ (or else $\Delta = 0$).

Now fix $A$ and $B$, denote $C = A-B$, and let us show that we can replace $f$ by some $\hat{h}^{i,j}$, without decreasing $\Delta$ which will conclude the proof. Since $f$ is a (normal to a) separating plane that intersects $A$, in a problem of covering form,\footnote{An LP problem of the form minimize $c \cdot x$ where $Ax \ge b$, $x \ge 0$, $c \ge 0$ for vectors $x,b,c$ and matrix $A$.} by strong-duality we get that $f = \sum_{j}{\alpha_j \cdot \hat{h}^j}$ for non-negative coefficients $\alpha_j$, where $h^j$ are all the normals to facets of $conv^+(\mathbb{D})$ that contain $A$. Moreover, $\sum_j {\alpha_j} = 1$ because all of $f$ and $\hat{h}^j$ have coordinate sum of $1$. We get that:
$C \cdot f = \sum_{j}{\alpha_j \cdot (C \cdot \hat{h}^j)} \le C \cdot \hat{h}^{j^*}$ by choosing $j^*$ that maximizes the scalar product. However, since $f$ is a maximizer, we get equality, and conclude that $\hat{h}^{j^*}$ is a primary direction that maximizes $\Delta$. Finally, it must be that $\hat{h}^{j^*}$ is a normal to a false facet that contains $A$. Otherwise, it does not imply a separating plane, and therefore $A \cdot \hat{h}^{j^*} - B \cdot \hat{h}^{j^*} \le 0$, which is a contradiction.
\end{proof}

By Lemma~\ref{lemma_additive_gap} it suffices to consider only vertices and primary directions to maximize the additive gap, and in particular only check for gaps of the form: $(D^k - S^i) \cdot h^{i,j}$. By choosing for each topology the vertices $S^i$ and $D^k$, and the primary direction $h^{i,j}$ that maximize $\Delta$, we get Table~\ref{table_integrality_gaps}. Anecdotally, the largest gap for $U_{(7,3)}$ is due to the primary direction which we studied in Theorem~\ref{theorem_non_integer_optimum_long_star}. Based on the small topologies we studied, the integrality gap is at least $\frac{95}{93} \approx 1.0215$, which is much closer to $1$ (almost optimal) than to the proven upper bound of $2$. Note that maximizing the difference does not necessarily maximize the integrality gap (ratio).


Another interesting note is that the topology $U_{(8,11)}$ extends $U_{(7,3)}$ (Definition~\ref{definition_extended_graph}), and it happens to be that the direction that maximizes the difference for $U_{(8,11)}$ is an extension of the direction that maximizes the difference for $U_{(7,3)}$.

\begin{table}[!ht]
    \scriptsize
    \begin{center}
    \begin{tabular}{|c|c|c|c|r|}

    \hline
    Topology $U_{(n,i)}$ & Direction & LP Value & STT Best Value & Integrality Gap \\
    \hline
    (7,3) & (3,2,0,2,3,3,10) & \ 29.5 & \ 30 & $60/59 \approx 1.0169$ \\
    \hline
    (8,4) & (9,5,0,6,11,17,5,9) & \ 93 \ \ & \ 95 & $\mathbf{95/93 \approx 1.0215}$ \\
    \hline
    (8,5) & (16,2,3,6,7,13,34,5) & 121 \ \ & 122 & $122/121 \approx 1.0083$ \\
    \hline
    (8,6) & (55,1,3,4,14,29,34,8) & 200.5 & 201 & $402/401 \approx 1.0025$ \\
    \hline
    (8,11) & (3,2,0,2,3,0,3,10) & \ 29.5 & \ 30 & $60/59 \approx 1.0169$ \\
    \hline
    (8,12) & (11,1,0,1,2,6,1,2) & \ 28.5 & \ 29 & $58/57 \approx 1.0175$ \\
    \hline
    (8,13) & (7,1,1,1,7,7,2,7) & \ 49.5 & \ 50 & $100/99 \approx 1.0101$ \\
    \hline
    \end{tabular}
    \end{center}
    
    \caption{\footnotesize{For each topology with non-STT vertices, we show a primary direction (Definition~\ref{definition_primary_direction_terminology}) that maximizes the integrality gap among all primary directions. The largest gap overall is marked in bold.}}
    \label{table_integrality_gaps}
\end{table}

\subsection{Approximation Ratio of the Root Rounding - Lower Bounds}
\label{subsection_approximation_ratio}

So far we just disproved the first (and stronger) conjecture, and studied the integrality gap of the LP. In this section we return to the original problem of STTs, and consider the rounding in Definition~\ref{definition_rounding_scheme}. We disprove Conjecture~\ref{conjecture_LP_rounding_finds_opt_STT} that claims it rounds to an optimal STT. The rounding scheme may have degrees of freedom, and we study it both when assuming the worst-case scenario (pick a worst choice) and in the best-case scenario (pick a best choice).

Interestingly, for all the seven small topologies ($n \le 8$) for which non-STT vertices were found, when solving the LP in a primary direction, there is a rounding that yields an optimal STT. At a first glance this suggests that there is hope to refine the rounding scheme to remove misleading degrees of freedom (of choosing which node to root at each iteration). However we show later in this section that  in non-primary directions we are no longer guaranteed an optimal rounding. This ``strongly'' disproves Conjecture~\ref{conjecture_LP_rounding_finds_opt_STT} by showing that the rounding scheme could be suboptimal regardless of the degrees of freedom in the rounding procedure.

Table~\ref{table_approximation_ratios} summarizes our findings for the seven topologies with non-STT vertices. We show that every topology has directions for which even the best-case rounding is sub-optimal. We also show directions with large worst-case ratios, the largest that we found, though not by an exhaustive search or analysis, is $\frac{263}{190} \approx 1.384$. This is a lower-bound on the approximation ratio of the root rounding approach.

\begin{table}[!ht]
    \scriptsize
    \begin{center}
    \begin{tabular}{|c|c|c|c|c|c|c|c|c|c|}

    \hline
    Topology $U_{(n,i)}$ &  \multicolumn{2} {|c|} {Direction} & OPT STT & BC STT & WC STT & BC Ratio & WC Ratio \\
    \hline
    (7, 3) & (11, 7, 0, 10, 34, 7, 11) & {*} & 184/80 & 186/80 & 220/80 & 1.0109 & 1.1957 \\
    \hline
    (8, 4) & (17, 9, 0, 10, 19, 29, 9, 17) & {*} &  277/110 & 281/110 & 339/110 & \textbf{1.0144} & 1.2238 \\
    \hline
    (8, 5) & (13, 1, 2, 4, 5, 10, 25, 4) & {*} &  154/64 & 155/64 & 178/64 & 1.0065 & 1.1558 \\
    \hline
    (8, 6) & (86, 1, 5, 6, 22, 46, 55, 13) & {*} &  552/234 & 553/234 & 658/234 & 1.0018 & 1.192 \\
    \hline
    (8, 11) & (11, 7, 0, 7, 11, 0, 10, 34) & {*} &  184/80 & 186/80 & 220/80 & 1.0109 & 1.1957 \\
    \hline
    (8, 12) & (32, 2, 0, 3, 7, 18, 3, 7) & {*} &  160/72 & 162/72 & 188/72 & 1.0125 & 1.175 \\
    \hline
    (8, 13) & 
    \begin{tabular}{@{}l@{}}(48, 6, 5, 6, 48, 48, 15, 48) \\ $+\epsilon \cdot (-1,1,1,1,-1,-1,-1,-1) $ \end{tabular}
    & {**} &  $\frac{558-\epsilon}{224 - 2\epsilon}$ & $\frac{563-6\epsilon}{224 - 2\epsilon}$ & $\frac{647-7\epsilon}{224 - 2\epsilon}$ & 1.0090 & 1.1595 \\
    \hline
    (8, 4) & (6, 3, 0, 5, 0, 18, 4, 5) & ${p}$ &  94/41 & 94/41 & 129/41 & 1 & 1.3723 \\
    \hline
    (8, 4) & (6.5, 3, 0, 5, 0, 18, 4, 5) & ${p'}$ &  190/83 & 191/83 & 263/83 & 1.0053 & \textbf{1.3842} \\
    \hline
    (8, 13) & (10, 2, 1, 2, 35, 35, 2, 10) & ${p}$  & 216/97 & 216/97 & 283/97 & 1 & 1.3102 \\
    \hline
    \hline
    \end{tabular}
    \end{center}
    \caption{\footnotesize{For each topology whose LP finds non-STT vertices, we show directions and approximation ratios of the best-case (BC) and worst-case (WC) root rounding. The first $7$ rows give one entry per topology with a direction in which even the best-case is sub-optimal. The next rows show additional directions for which the worst-case is among the largest that we found. Directions marked ${*}$/${**}$ were found by the LP in Definition~\ref{definition_LP_for_approximation_gap}, ${p}$ marks primary directions that give large ratios, and ${p'}$ marks a small variation to increase the largest gap we found among primary directions. Remark~\ref{remark_int_gap_zero_must_perturb} explain why in the case of ${**}$ we perturb the nice integer direction by $\epsilon$.}}
    \label{table_approximation_ratios}
\end{table}

In the remainder of this section we detail the following: Theorem~\ref{theorem_rounding_the_vertex} demonstrates an explicit worst-case rounding of the optimum shown in Figure~\ref{figure_explicit_counter_example} (for Theorem~\ref{theorem_non_integer_optimum_long_star}), that gives an approximation ratio of $\frac{62}{53} \approx 1.170$. Then we explain how to find directions in which the root rounding is sub-optimal and where the directions in Table~\ref{table_approximation_ratios} come from.

\begin{theorem}
\label{theorem_rounding_the_vertex}
The approximation ratio of root rounding (Definition~\ref{definition_rounding_scheme}) is at least $\frac{62}{53}$.\\ (As shown in Table~\ref{table_approximation_ratios}, other (numerical) examples show that it is can be even worse.)
\end{theorem}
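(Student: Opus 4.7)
The plan is to exhibit an explicit sequence of root-rounding choices applied to the fractional vertex $P$ from Theorem~\ref{theorem_non_integer_optimum_long_star}, and show that the resulting STT $T'$ is strictly worse than the optimum. The case analysis already carried out in the proof of Theorem~\ref{theorem_non_integer_optimum_long_star} shows that every STT on the long star with weights $w = [3,2,0,2,3,3,10]$ satisfies $w \cdot D^{T^*} \ge 30$, which by Remark~\ref{remark_depth_off_by_one} yields $cost(T^*, f) = 53/23$. It therefore suffices to produce a single valid run of the rounding whose output $T'$ achieves $w \cdot D^{T'} = 39$, since then $cost(T', f) = 62/23$ and the ratio is exactly $62/53$.

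The sequence of choices I would use proceeds top-down: first root at the heavy leaf $r = 7$; then in the induced tree on $\{1, \ldots, 6\}$ root at the mid-node $r' = 6$ of the heavy leg; then in the remaining path $\{1, \ldots, 5\}$ root at its center $r'' = 3$; and finally in each two-node component $\{1,2\}$ and $\{4,5\}$ root at the mid-node of the underlying topology ($2$ and $4$ respectively), so that the heavy leaves $1$ and $5$ are pushed to depth $4$. The resulting STT has depth vector $D^{T'} = [4,3,2,3,4,1,0]$, and a one-line computation gives $w \cdot D^{T'} = 12+6+0+6+12+3+0 = 39$, so the ratio with the optimum is $(23+39)/(23+30) = 62/53$.

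The remaining work is to verify that each of these five choices satisfies the root-rounding condition $\sum_{u:\, r \in [u \leftrightsquigarrow v)} X_{uv} \ge \tfrac{1}{2}$ on its \emph{current} sub-topology. Reading $X$-values directly from the matrix in Figure~\ref{figure_explicit_counter_example}: for $r = 7$ and $r' = 6$ each sum reduces to the single entry $X_{7v} = 1/2$ or $X_{6v} \in \{1/2, 1\}$, because $7$ (respectively $6$) is a leaf of its induced subtree; for $r'' = 3$ on the path, each of the four sums collapses to a single nonzero summand of value $1/2$ (namely $X_{4v}$ for $v \in \{1,2\}$ and $X_{2v}$ for $v \in \{4,5\}$), all other summands being zero because row $3$ of $X$ vanishes and the opposite-leaf rows vanish on the relevant columns; and the final two-node calls are trivial since $X_{12} = X_{21} = X_{45} = X_{54} = 1/2$. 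I expect the main obstacle in writing this up to be purely bookkeeping: the rounding condition must be re-evaluated on each induced component rather than on the original $U$, so the cleanest presentation is a short table listing each $(r, v)$ pair across the four recursive calls.
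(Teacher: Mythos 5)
Your proposal is correct and takes essentially the same route as the paper: you start from the same fractional vertex $P$ and weights $w=(3,2,0,2,3,3,10)$ from Theorem~\ref{theorem_non_integer_optimum_long_star}, apply the identical sequence of root-rounding choices (root at $7$, then $6$, then $3$, then $2$ and $4$), obtain the same depth vector $(4,3,2,3,4,1,0)$, and compute the same ratio $62/53$ via Remark~\ref{remark_depth_off_by_one}. Your verification that each choice satisfies the rounding condition on its induced component (in particular the step $r''=3$, where row $3$ vanishes but the sums are saved by $X_{41}=X_{42}=X_{24}=X_{25}=\tfrac12$) matches the paper's argument.
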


\begin{proof}
Consider the case of Theorem~\ref{theorem_non_integer_optimum_long_star}, of weights $w = (3,2,0,2,3,3,10)$ and the non-integer solution described in Figure~\ref{figure_explicit_counter_example}. We show a set of choices that round this solution to an STT with depths-vector $D = (4, 3, 2, 3, 4, 1, 0)$. Then its LP value is $w \cdot D = 12+6+0+6+12+3 = 39$, compared to the best STT whose value is $30$. By Remark~\ref{remark_depth_off_by_one}, the approximation ratio is $\frac{39 + \sum{w_i}}{30 + \sum{w_i}} = \frac{62}{53} \approx 1.170$.

We now round step by step, for each connected component $C$ throughout the process.
\begin{enumerate}
    \item Initially, $C = \{1,\ldots,7\}$ (whole tree): since $X_{7i} = \frac{1}{2}$ for each $i=1,\ldots,6$, we can choose the root $r=7$. $C \setminus \{r\}$ remains a single connected component.

    \item $C = \{1,\ldots,6\}$: since $X_{6i} \ge \frac{1}{2}$ for each $i=1,\ldots,5$, we can choose the root $r=6$. $C \setminus \{r\}$ remains a single connected component, now it is a path.

    \item $C = \{1,\ldots,5\}$: while $X_{3i} = 0$ for $i=1,2,4,5$, choosing $r=3$ as the root satisfies the condition because $X_{24}=X_{25}=X_{41}=X_{42}=\frac{1}{2}$. $C \setminus \{r\}$ now has two connected components.

    \item $C = \{1,2\}$: since $X_{21} = \frac{1}{2}$ we can root at $2$. This leaves a singleton subtree $C \setminus \{2\} = \{ 1 \}$.

    \item $C = \{4,5\}$: since $X_{45} = \frac{1}{2}$ we can root at $4$. This leaves a singleton subtree $C \setminus \{4\} = \{ 5 \}$.
\end{enumerate}
Combining all of the above yields an STT whose  depths-vector $(4, 3, 2, 3, 4, 1, 0)$.
\end{proof}

We could refine root rounding by solving the LP recursively for each connected component after picking the root, instead of rounding always based on the same one-time solution. In the example of Theorem~\ref{theorem_rounding_the_vertex} it does help, because $7$ is a root of some optimal STT with respect to $w$, and we know that when the topology has size $n \le 6$ the LP approach is optimal. However, this refined rounding does not generally circumvent the problem, since we show next that we might pick the root sub-optimally. When this happens, the resulting STT is suboptimal even if we can improve our choices and the overall approximation ratio by constructing the subtrees better.

\medskip

In the remainder of this section, we discuss how to find directions 
for which root rounding has a
relatively large approximation ratio. First let us discuss the best-case rounding. As noted, to ensure a gap, even under the refined (iterated) root rounding, it is sufficient to ensure an empty intersection between the set of candidate roots, which we denote by $R$, and the set of roots of optimal STTs, which we denote by $O$.\footnote{Note that if we could guarantee a rounding that shares a root with an optimal solution, the refined (iterated) root rounding would give an optimal STT.} It  requires  to solve an LP $O(n)$ times instead of once, but the running time will remain polynomial.

As a first step, we can look for a primary direction for which $O \setminus R \ne \emptyset$. Then we can variate slightly the primary direction to ensure that all the optimal STTs have a root from $O \setminus R$. This naive approach does not always work, and by checking the sets $O$ and $R$ for each primary direction, we find multiple such directions only for the topologies $U_{(8,4)}$ and $U_{(8,13)}$.
Moreover, primary directions are only special because they let us find a non-STT vertex, but they do not necessarily maximize the approximation ratio. To be more comprehensive, we can define an LP to find the unknown direction $f$ that maximize the distance between a non-STT vertex $P$ and all the STTs that root rounding can reach from it. Formally:

\begin{definition}[LP for Best-Case Approximation Ratio]
\label{definition_LP_for_approximation_gap}
Assume knowledge of all the depths-vectors of STTs, denoted by $D^1,\ldots,D^N$, and another (non-STT) vertex $P$. Also, let $S^1,\ldots,S^M$ be the set of all STTs we can get by root roundings of $P$. Ideally we aim to determine a frequencies vector $f$ that maximizes the separation: $\min_{i}\{S^i \cdot f\} - \min_{i}\{D^i \cdot f\}$, subject to $\min_{i}\{D^i \cdot f\} \ge P \cdot f$. For reason we explain later, we guess and fix the minimizer for $\min_{i}\{D^i \cdot f\}$, denoted by $D'$.

The LP: Maximize $x - D' \cdot f$, for the variables $x$ (scalar) and $f$ (vector), subject to:
\begin{enumerate}
    \item Frequencies: (i) $\forall 1 \le i \le n: f_i \ge 0$;\ \ \ \ (ii)  $\sum_{i=1}^{n}{f_i} = 1$.
    \item Hierarchy: (iii) $P \cdot f \le D' \cdot f$;\ \ \ \ (iv) $\forall 1 \le i \le N: D' \cdot f \le D^i \cdot f$.
    \item Technical helper variable: (v) $\forall 1 \le i \le M: x \le S^i \cdot f$.
\end{enumerate}
\end{definition}

Ideally we would define two helper variables $x$ and $y$ such that the constraints give us $x = \min_{i}\{S^i \cdot f\}$ and $y = \min_{i}\{D^i \cdot f\}$, and then maximize $x-y$. We defined $x$, but cannot do it for $y$. Since we maximize $x-y$ attempting to do it by requiring $\forall 1 \le i \le N: y \le D^i \cdot f$ will simply lead to an unbounded small value for $y$ (or the minimum possible if we artificially bound it from below, say with $y \ge 0$). For this reason, we compromised and solved for each option of $D' \in \{D^1,\ldots,D^N\}$. The LP has no solution for some wrong guesses of $D'$, but this does not affect us. However, solving this large number of instances is highly inefficient, and is feasible only for small topologies. We obtained from this LP the directions marked by ${*}/{**}$ in Table~\ref{table_approximation_ratios}. Note that just like with the integrality gap, we optimized the direction $f$ to maximize a difference rather than the actual approximation ratio, so the actual lower bound may be larger.

\begin{remark}
\label{remark_int_gap_zero_must_perturb}
Definition~\ref{definition_LP_for_approximation_gap} technically allows for $P \cdot f = D' \cdot f$. Then there is no integrality gap, and when we solve the original LP (Definition~\ref{definition_LP_program}) we may find an STT induced vertex, so that we do not even need to approximate, in contrast to what we aim to find. This, in fact, happens when we run the solver on the topology $U_{(8,13)}$, see the direction marked in ${**}$ in Table~\ref{table_approximation_ratios}. To overcome this issue, we replace constraint (iii) by the constraint $P \cdot f + \epsilon \le D' \cdot f$ for a small $\epsilon > 0$, say $0.001$, to truly guarantee an approximation ratio that is larger than $1$.
\end{remark}

Moving on to maximizing the worst-case approximation ratio, we can try a similar LP as in Definition~\ref{definition_LP_for_approximation_gap}. In this case, using the same notations, we would like to maximize $\max_{i}\{S^i \cdot f\} - \min_{i}\{D^i \cdot f\}$. The problem here is that we cannot define the helper-variable $x$ such that modifying its constraints gives $x = \max_{i}\{S^i \cdot f\}$ while simultaneously maximizing $x - D' \cdot f$. Instead, we can guess an $S' \in \{ S^1,\ldots,S^M\}$ that maximizes the expression, similar to our guess of $D'$. The problem is that we get an additional slowdown factor of $M$, since we now solve $M \cdot N$ instances (one per guess of $D'$ and $S'$). For the small topologies we already have $M \approx 30$, depending on the vertex that we round.

At this point, it does not seem too important to pin-point the exact approximation ratios for only a few specific small topologies. So, as a crude replacement, we may check each primary direction for its worst-case rounding. This is feasible, and we find some larger ratios compared to the best-case roundings, as should be expected. The primary directions for topologies $U_{(8,4)}$ and $U_{(8,13)}$ break the $1.3$ barrier, 
they appear in Table~\ref{table_approximation_ratios} marked with ${p}$. To emphasize that these primary directions do not maximize the approximation ratio, we also show a deviation from one of the directions, marked with ${p'}$ in the table. It is possible to deviate from a primary direction by not too much while maintaining the same vertex solution of the LP. Thus, by considering some random deviations we were able to find an improved direction. We emphasize that this is likely not the direction that maximizes the approximation ratio as our search was not exhaustive.
\section{Positive Results}
\label{section_positive_results}

In this section we prove some positive results, including the facts that the LP for topologies with $n \ge 3$ nodes always has integer vertices that are not induced by STTs, and that the LP for a star graph has only integer vertices, implying optimality of the LP for stars of any size.

\subsection{Basic Properties}
\label{subsection_basic_properties_proven}

\begin{lemma}[Depths Bound]
\label{lemma_depth_bounds}
Let $(X,Z,D)$ be a feasible solution of the LP. Then for \emph{every} subset of coordinates $S \subseteq \{1,\ldots,n\}$ it holds that: $\sum_{i \in S}{D_i} \ge |S|-1$.
\end{lemma}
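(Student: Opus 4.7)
The plan is to pivot on an arbitrarily chosen node $i_0 \in S$ and show that every other $j \in S \setminus \{i_0\}$ contributes at least $1 - X_{j\,i_0}$ to the depth sum; the leftover $\sum_j X_{j\,i_0}$ will then be absorbed by the depth constraint on $i_0$ itself. The cases $|S| \le 1$ are immediate from non-negativity of the depths, so I assume $|S| \ge 2$.

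The key step is the pivot inequality $D_j \ge 1 - X_{j\,i_0}$ for each $j \in S \setminus \{i_0\}$. To derive it, apply the ancestry constraint on the pair $\{i_0, j\}$ and isolate $X_{j\,i_0}$:
\[
X_{i_0 j} + \sum_{k \in (i_0 \leftrightsquigarrow j)} Z_{k i_0 j} \;\ge\; 1 - X_{j i_0}.
\]
Then apply the LCA constraint in the form $Z_{k i_0 j} \le X_{k j}$ (crucially, the $X_{kj}$-side rather than the $X_{k i_0}$-side), obtaining
\[
X_{i_0 j} + \sum_{k \in (i_0 \leftrightsquigarrow j)} X_{k j} \;\ge\; 1 - X_{j i_0}.
\]
The indices $i_0$ and the interior path nodes $k \in (i_0 \leftrightsquigarrow j)$ are pairwise distinct and all different from $j$, so the left-hand side is a partial column-sum of the matrix $X$ on column $j$, and is therefore bounded above by $\sum_{\ell \ne j} X_{\ell j} \le D_j$ by the depth constraint. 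This gives $D_j \ge 1 - X_{j i_0}$.

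Summing this inequality over $j \in S \setminus \{i_0\}$ yields
\[
\sum_{j \in S \setminus \{i_0\}} D_j \;\ge\; (|S|-1) - \sum_{j \in S \setminus \{i_0\}} X_{j i_0} \;\ge\; (|S|-1) - D_{i_0},
\]
where the second inequality invokes the depth constraint on $i_0$ together with the non-negativity of the omitted $X_{? i_0}$ terms for $? \notin S$. Rearranging gives $\sum_{i \in S} D_i \ge |S| - 1$, as required. I do not expect any substantive obstacle; the only delicate move is choosing the $X_{kj}$-side of the LCA constraint so that the bounded terms all collect under a single depth variable $D_j$, after which the accounting is elementary.
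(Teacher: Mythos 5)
Your proof is correct, and it takes a genuinely different route from the paper's. The paper argues \emph{symmetrically} over all unordered pairs $\{i,j\}\subset S$: it bounds $Z_{kij}\le\min\{X_{ki},X_{kj}\}\le\tfrac12(X_{ki}+X_{kj})$, turns each ancestry constraint into $1\le\tfrac12(X_{ij}+X_{ji}+D_i+D_j)$, sums over all $\binom{|S|}{2}$ pairs, and finishes with a counting argument (each $D_i$ appears $|S|-1$ times, plus the $X_{ji}$ terms contribute one more copy of $D_i$) to get $\binom{|S|}{2}\le\tfrac{|S|}{2}\sum_{i\in S}D_i$. You instead \emph{asymmetrize}: pick a pivot $i_0$, use the one-sided LCA bound $Z_{ki_0j}\le X_{kj}$ so that everything on the left collects under column $j$ alone, obtaining the clean per-node inequality $D_j\ge 1-X_{ji_0}$, and then absorb the residual $\sum_{j}X_{ji_0}$ into $D_{i_0}$. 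Your version avoids both the $\tfrac12$-averaging and the binomial double-counting, and it produces the stronger intermediate fact $D_j\ge 1-X_{ji_0}$ as a byproduct, which can be read as ``every node but the pivot has depth nearly $1$ unless it is already a near-ancestor of the pivot.'' The paper's symmetric version has the minor aesthetic advantage of not privileging any one node, but the two arguments have the same strength and use the same ingredients (ancestry, loosely-LCA, depth, non-negativity), just combined differently.
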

The claim is obvious for points induced by STTs, since the LP-depth of every node except for the root is at least one. For the general case, we argue as follows.

\begin{proof}
Fix $i$ and $j$ and consider their ancestry constraint:
$
1
\le X_{ij} + X_{ji} + \sum_{k \in (i \leftrightsquigarrow j)}{Z_{kij}}
\le X_{ij} + X_{ji} + \sum_{k \in (i \leftrightsquigarrow j)}{\min\{X_{ki},X_{kj}\}}
\le X_{ij} + X_{ji} + \frac{1}{2} \sum_{k \in (i \leftrightsquigarrow j)}{(X_{ki}+X_{kj})}
\le \frac{1}{2} (X_{ij} + X_{ji} + D_i + D_j)
$. Then: $\binom{|S|}{2} \le \sum_{\{i,j\} \subset S}{\frac{1}{2} (X_{ij} + X_{ji} + D_i + D_j)}$. Note that each $D_i$ appears exactly $|S|-1$ times on the right side, and that we have one more contribution of (at most) $D_i$ from the variables $X_{ji}$. Therefore: 
$\binom{|S|}{2} \le \frac{|S|}{2} \sum_{i \in S}{D_i} \Rightarrow |S|-1 \le \sum_{i \in S}{D_i}$.
\end{proof}

\begin{corollary}
\label{corollary_upper_bound_small_depth_counts}
The depths-vector of any feasible solution has at most $k$ entries with value at most $\frac{k}{k+1}$. Particularly, at most one coordinate smaller than $\frac{1}{2}$, at most two smaller than $\frac{2}{3}$, etc.
\end{corollary}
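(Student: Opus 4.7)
The proof should be a short contradiction argument that plugs Lemma~\ref{lemma_depth_bounds} into an averaging estimate. The plan is to take the putative ``too many small entries'' and form the subset $S$ of those coordinates; then Lemma~\ref{lemma_depth_bounds} forces a lower bound on $\sum_{i\in S}D_i$ that conflicts with the upper bound obtained by adding up the small values.

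Concretely, I would suppose for contradiction that there exist $k+1$ coordinates strictly less than $\tfrac{k}{k+1}$, and let $S$ be the set of their indices, so $|S| = k+1$. Summing the individual bounds gives
\[
\sum_{i \in S} D_i \;<\; (k+1)\cdot \frac{k}{k+1} \;=\; k \;=\; |S|-1,
\]
which contradicts Lemma~\ref{lemma_depth_bounds} applied to $S$. The particular claims (at most one coordinate below $\tfrac{1}{2}$, at most two below $\tfrac{2}{3}$, etc.) are then just the instances $k=1,2,\dots$ of the same statement.

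There is essentially no obstacle here; the only subtlety is the distinction between ``$\le \tfrac{k}{k+1}$'' and ``$< \tfrac{k}{k+1}$''. The strict version is what the lemma yields directly, and it is also what the follow-up phrasing in the corollary (``smaller than $\tfrac{1}{2}$'', ``smaller than $\tfrac{2}{3}$'') uses, so I would phrase the proof in terms of strict inequality and note that the weak-inequality case can be pushed only to equality of each entry to $\tfrac{k}{k+1}$, matching Lemma~\ref{lemma_depth_bounds} tightly rather than contradicting it.
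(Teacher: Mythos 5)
Your proof is correct and is the only natural way to deduce the corollary from Lemma~\ref{lemma_depth_bounds}; the paper offers no separate proof, treating it as an immediate consequence exactly as you do. Your observation about strict versus weak inequality is worth emphasizing: the weak form in the corollary's first sentence is in fact false as literally stated (e.g.\ for $n=2$ the point $X_{12}=X_{21}=\tfrac{1}{2}$, $D_1=D_2=\tfrac{1}{2}$ is feasible and has two coordinates $\le \tfrac12$), so the strict reading that you adopt, and that the corollary's second sentence actually uses, is the correct one. Your contradiction
\[
\sum_{i \in S} D_i < (k+1)\cdot\frac{k}{k+1} = k = |S|-1
\]
against the lemma's bound $\sum_{i \in S} D_i \ge |S|-1$ is exactly right.
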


\begin{remark} 
Corollary~\ref{corollary_upper_bound_small_depth_counts} interprets Lemma~\ref{lemma_depth_bounds} as an upper-bound on the number of nodes with small depth $d < 1$ (it is not meaningful for $d \ge 1$). However, there are no lower-bound guarantees. Unlike STT induced vertices that are guaranteed to have one $0$ depth (the root's), a general vertex could have only ``large'' coordinates.

It may be interesting to understand whether the minimum depth (coordinate) of a vertex is bounded from above. We emphasize that this question is interesting only with respect to vertices rather than with respect to general feasible solutions, because it is easy to take a combinations of STTs to get a feasible point whose minimum depth is large. As a concrete example of vertices whose minimum depth is $1$, there are four such vertices up to automorphisms for the topology $U_{(8,4)}$ (Figure~\ref{figure_all_trees_up_to_n8}) which are $(2, 2, 4.5, 1.5, 1, 1, 2, 2)$, $(1.5, 2, 4.5, 1.5, 1, 1, 2.5, 2.5)$, $(2.5, 2.5, 4.5, 1.5, 1.5, 1.5, 1.5, 1)$, $(1.5, 2, 5, 1.5, 1, 1, 2.5, 2.5)$ and one vertex for the topology $U_{(8,13)}$ which is $(1.5, 2, 3.5, 2, 1, 1, 3.5, 1.5)$.
\end{remark}

Lemma~\ref{lemma_depth_bounds} is agnostic to weights. The following Lemma~\ref{lemma_node_depth_by_weight} takes weights into account.

\begin{lemma}[Depth by weights]
\label{lemma_node_depth_by_weight}
Let $i$ be a node whose frequency is $f_i$ where $\sum_{u \in U}{f_u} = 1$. Then $depth(i) \le \frac{1}{f_i}$ in every optimal STT. (Recall that the root's depth is $1$.)
\end{lemma}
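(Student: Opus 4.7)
The plan is an exchange (rotation) argument. Suppose $T$ is optimal with $depth_T(i) = d$, and let $v_1, v_2, \ldots, v_d = i$ be the ancestors of $i$ in $T$, with $v_k$ at depth $k$. Write $F_k$ for the total frequency of the sub-STT $T_{v_k}$ of $T$ rooted at $v_k$ (so $F_1 = 1$ and $F_d \ge f_i$), and set $G_k := F_k - F_{k+1}$. The goal is to prove $G_k \ge f_i$ for every $k \in \{1, \ldots, d-1\}$, which telescopes to $1 = F_1 = F_d + \sum_{k=1}^{d-1} G_k \ge f_i + (d-1) f_i = d \cdot f_i$, giving the desired bound $d \le 1/f_i$.

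To establish the per-level inequality, I consider for each $k$ the rotation at the edge $(v_k, v_{k+1})$ of $T$: promote $v_{k+1}$ to head the sub-STT over $C_k$ (the component of $U$ that $v_k$ formerly headed), demote $v_k$ to be a child of $v_{k+1}$ (absorbing $v_k$'s former sibling subtrees in $T$), and keep the other children of $v_{k+1}$ in $T$ intact. Careful depth bookkeeping shows that only nodes in $T_{v_k}$ shift in depth, and the total cost change evaluates to $F_k - 2 F_{k+1} + F(B_k)$, where $B_k \subseteq T_{v_{k+1}} \setminus \{v_{k+1}\}$ is the branch of $v_{k+1}$ in $T$ whose root is the $U$-neighbor of $v_k$ inside $C_{k+1}$ (so $B_k = \emptyset$ when $v_k$ and $v_{k+1}$ are adjacent in $U$). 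By optimality of $T$ the cost change is nonnegative, so $F_k + F(B_k) \ge 2 F_{k+1}$.

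The main obstacle is then extracting $G_k \ge f_i$ from this rotation inequality, which I would do by downward induction on $k$. At the base $k = d-1$: since $B_{d-1} \subseteq T_i \setminus \{i\}$ we have $F(B_{d-1}) \le F_d - f_i$, so $F_{d-1} \ge 2 F_d - (F_d - f_i) = F_d + f_i$, giving $G_{d-1} \ge f_i$. At the inductive step $k < d - 1$, $B_k$ is a branch of $v_{k+1}$ in $T$ and is either a "side" branch (distinct from $T_{v_{k+2}}$) or is exactly the "forward" branch $T_{v_{k+2}}$ heading toward $i$. In the side case, $F(B_k) \le F_{k+1} - f_{v_{k+1}} - F_{k+2}$ because $v_{k+1}$'s branches in $T$ other than $T_{v_{k+2}}$ sum to this value; substituting into the rotation inequality yields $G_k \ge f_{v_{k+1}} + F_{k+2} \ge f_i$, using $i \in T_{v_{k+2}}$. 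In the forward case, $F(B_k) = F_{k+2}$, so the rotation inequality becomes $F_k \ge 2 F_{k+1} - F_{k+2}$, i.e., $G_k \ge G_{k+1}$, and the inductive hypothesis $G_{k+1} \ge f_i$ closes the argument.
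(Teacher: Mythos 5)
Your proof is correct, but it follows a genuinely different and substantially longer route than the paper's. The paper performs a single global exchange: if $depth_T(i) = k$, promote $i$ all the way to the STT root. Since each other node gains at most one new ancestor ($i$ itself) while possibly losing ancestors that fell into different components of $U \setminus \{i\}$, the total cost increase from the rest of the tree is at most $\sum_{j \ne i} f_j = 1 - f_i$, while the saving on $i$ is $(k-1)f_i$; optimality forces $(k-1)f_i \le 1 - f_i$, i.e.\ $k \le 1/f_i$. Your argument instead walks the root-to-$i$ path $v_1,\dots,v_d$ with a chain of local rotations, derives $F_k + F(B_k) \ge 2F_{k+1}$ from the optimality of each single rotation, and then needs a two-case downward induction (side branch versus forward branch $T_{v_{k+2}}$, plus implicitly the $B_k = \emptyset$ case) to convert this into the per-level inequality $G_k \ge f_i$ before telescoping. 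The rotation-cost formula $F_k - 2F_{k+1} + F(B_k)$ and all three cases check out, so the argument is sound; one small imprecision is the description of $B_k$ --- its root in $T$ is the $T$-child of $v_{k+1}$ heading the $U$-component of $C_{k+1}\setminus\{v_{k+1}\}$ that contains the $U$-neighbor of $v_k$, not that neighbor itself, though this does not affect the computations. The upshot: the paper's one-shot global promotion is the cleaner, more elementary proof, whereas your local-rotation version reproves the same fact with more machinery; it would only become advantageous if one needed a variant of the lemma in a setting where only local rotations (and not arbitrary re-rooting) were permitted.
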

\begin{proof}
Consider an STT $T_1$ where $i$ has depth $k > \frac{1}{f_i}$. Define $T_2$ by promoting $i$ to the root. Each node other than $i$ gains at most one new ancestor ($i$), so we lose a cost of at most $\sum_{j (\ne i) \in U}{f_j} = 1 - f_i$. However, we also save a cost of $(k-1) \cdot f_i$ for $i$. In total: $cost(T_1) - cost(T_2) \ge (k-1) f_i - (1- f_i) = k \cdot f_i - 1 > 0$. So $T_1$ is sub-optimal.
\end{proof}

\begin{corollary}
If there is a node $i$ with frequency $f_i > \frac{1}{2}$, it must be the root of every optimal STT. Therefore, we can focus our efforts on inputs where $\forall i: f_i \le \frac{1}{2}$. 
\end{corollary}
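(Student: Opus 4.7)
The corollary follows essentially by plugging $f_i > \tfrac{1}{2}$ into Lemma~\ref{lemma_node_depth_by_weight} and using the fact that depths of nodes in an STT are positive integers. My plan is to make this explicit in two short steps and then comment on the reformulation of the problem that is asserted in the second sentence.

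First, I will apply Lemma~\ref{lemma_node_depth_by_weight} to $i$ in an arbitrary optimal STT $T$, which yields $depth(i) \le \tfrac{1}{f_i}$. Since $f_i > \tfrac{1}{2}$, this gives $depth(i) < 2$. Because $depth$ is integer-valued (as per Definition~\ref{definitions_basics}, $depth$ counts ancestors including the node itself, starting from $depth(root)=1$) and at least $1$ for every node, the only possibility is $depth(i) = 1$. By the convention in Definition~\ref{definitions_basics}, this means $i$ is the root of $T$. Since $T$ was an arbitrary optimal STT, $i$ is the root of every optimal STT.

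Second, to justify the concluding sentence, I will note that if some $f_i > \tfrac{1}{2}$, then after forcing $i$ to be the root we can recurse on the connected components of $U \setminus \{i\}$ with frequencies rescaled to sum to $1$, reducing to a strictly smaller instance. In this sense the hard cases of Problem~\ref{problem_main_problem_stt} are exactly those with $\max_i f_i \le \tfrac{1}{2}$.

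I do not expect any real obstacle here: the only subtlety is ensuring we use the integrality of $depth$ rather than the real-valued LP depth $D_i$ (which could in principle lie strictly between $0$ and $1$); but since the corollary concerns actual STTs rather than LP solutions, this is immediate from the combinatorial definition of an STT.
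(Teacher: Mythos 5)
Your proof is correct and matches the argument the paper clearly intends (the paper states the corollary without an explicit proof, as an immediate consequence of Lemma~\ref{lemma_node_depth_by_weight}). The two steps — $depth(i) \le 1/f_i < 2$ by the lemma, and then $depth(i)=1$ by integrality of STT depths — are exactly the right ones, and your closing remark about recursing on the components of $U \setminus \{i\}$ correctly justifies the second sentence of the corollary.
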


\subsection{More Analysis of Integer Vertices}
\label{subscetion_note_on_STT_as_vertices}
In this section we motivate focusing our interest on the lower envelope of the $D$-space polytope as we discussed in Section~\ref{subsection_general_counter_exaple} for the normals method.

\begin{theorem}
\label{theorem_integer_vertex_iff_stt}
Let $\mathbb{P}$ be the polytope of the LP corresponding to a topology $U$, and let $P \in \mathbb{P}$ be an integer point. If $P$ is induced by some STT $T$ then $P$ is a vertex on the lower envelope of $\mathbb{P}$. Moreover, if the projection of $P$ to $D$-space is a vertex on the lower envelope of the projection of $\mathbb{P}$, then $P$ is induced by some STT.
\end{theorem}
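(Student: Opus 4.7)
The plan is to handle the two directions of the ``if and only if'' separately, with a shared inductive decomposition of $U$ at a root vertex as the engine.

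For the forward direction (STT-induced $\Rightarrow$ vertex on the lower envelope of $\mathbb{P}$), I would first establish that $P = (X^T, Z^T, D^T)$ is an extreme point of $\mathbb{P}$. Concretely, suppose $P = \alpha Q + (1-\alpha) R$ with $Q, R \in \mathbb{P}$ and $\alpha \in (0,1)$. Since $P$ is $0/1$ on the $X, Z$ coordinates and $Q, R \ge 0$, I immediately obtain $X^Q_{ij} = X^R_{ij} = 0$ and $Z^Q_{kij} = Z^R_{kij} = 0$ wherever the corresponding coordinate of $P$ is $0$. For each unordered pair $\{i,j\}$, the STT contributes exactly one ``witness'' equal to $1$ among $X^T_{ij}, X^T_{ji}, \{Z^T_{kij}\}_k$; say $X^T_{ij} = 1$ (the LCA case is analogous). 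All other witnesses vanish in $Q$ and $R$, so the ancestry constraint collapses to $X^Q_{ij} \ge 1$ and $X^R_{ij} \ge 1$; together with $\alpha X^Q_{ij} + (1-\alpha) X^R_{ij} = 1$ this forces $X^Q_{ij} = X^R_{ij} = 1$. Propagating via the depth constraints yields $D^Q = D^R = D^T$, so $P$ is a vertex of $\mathbb{P}$.

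Next, to show that the depth-projection $D^T$ is on the lower envelope of $\mathbb{P}_D$, I would induct on $n = |U|$ the auxiliary claim that any $Q \in \mathbb{P}$ with $D^Q \le D^T$ must actually satisfy $D^Q = D^T$. Let $r$ be the root of $T$; from $D^T_r = 0$ we get $D^Q_r = 0$, hence $X^Q_{jr} = 0$ for every $j \neq r$. The ancestry of $\{r, v\}$ together with the loose-LCA bound $Z^Q_{krv} \le X^Q_{kr} = 0$ then forces $X^Q_{rv} \ge 1$ for every $v \neq r$. For each connected component $U_\ell$ of $U \setminus \{r\}$, let $T_\ell$ be the sub-STT of $T$ on $U_\ell$ (so $D^{T_\ell}_v = D^T_v - 1$ for $v \in U_\ell$). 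Restricting $(X^Q, Z^Q)$ to indices inside $U_\ell$ and defining the tight depths $D^{Q_\ell}_v := \sum_{u \in U_\ell \setminus \{v\}} X^Q_{uv}$ gives a feasible point of the sub-LP on $U_\ell$, because every $U$-path between two nodes of $U_\ell$ stays inside $U_\ell$, so ancestry and loose-LCA constraints transfer verbatim. The estimate $D^{Q_\ell}_v \le D^Q_v - X^Q_{rv} \le D^T_v - 1 = D^{T_\ell}_v$ together with the inductive hypothesis yields $D^{Q_\ell} = D^{T_\ell}$, and back-substituting $D^Q_v \ge X^Q_{rv} + D^{T_\ell}_v \ge 1 + (D^T_v - 1) = D^T_v$ forces the desired coordinate-wise equality.

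For the reverse direction, I would start from Property~\ref{property_integer_domination}, which hands me an STT $T$ with $D^T \le D^P$; since some $f \ge 0$ makes $D^P$ the unique minimizer of $f \cdot D$ over $\mathbb{P}_D$, I conclude $D^T = D^P$. Then I re-run the component induction above, now exploiting the integrality of $P$ to promote the inequality chain to exact equalities: $X^P_{rv} = 1$ exactly, every cross-component $X^P_{jv}$ vanishes, and every $Z^P_{kij}$ with $\{i,j\}$ cross-component and $k \neq r$ vanishes by bouncing the loose-LCA inequality against a zero $X$-coordinate. The restriction $P|_{U_\ell}$ is then an integer feasible point of the sub-LP whose depth vector equals $D^{T_\ell}$, itself a lower-envelope vertex by the forward direction applied to $T_\ell$; the inductive hypothesis produces an STT $T'_\ell$ inducing $P|_{U_\ell}$, and gluing the $T'_\ell$ under root $r$ yields an STT inducing $P$.

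The main difficulty is the component-decomposition step: verifying that restricting a possibly fractional $Q$ to a component is itself a feasible sub-LP point, and that $X^Q_{rv} \ge 1$ peels off exactly one unit of depth budget so that the recursive call sees a strictly smaller target $D^{T_\ell}$. Once this decomposition is clean the remainder is bookkeeping; the two directions nest one-way (reverse invokes forward on smaller $U$), so no circularity arises.
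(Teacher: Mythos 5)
Your proposal takes a genuinely different route from the paper on both directions, and the comparison is instructive. For the forward direction, the paper fixes the explicit geometric weight vector $w_i = n^{-4\cdot depth_T(i)}$ and runs a perturbation argument to show that any feasible solution with $X_{ir}>0$ for the root $r$ of $T$ can be strictly improved, then recurses to conclude $P$ is the \emph{unique} LP optimum for $w$. Your argument instead establishes extremity of $P$ in $\mathbb{P}$ by the direct convex-combination bookkeeping on the $0/1$ witnesses, and separately establishes the lower-envelope property by the recursive ``peeling'' claim that no feasible $Q$ has $D^Q \le D^T$ with $D^Q \ne D^T$. Both parts of your forward direction check out. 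For the reverse direction you are actually \emph{more} careful than the paper: the paper's proof invokes Property~\ref{property_integer_domination} to deduce ``$P$ is not the unique optimum'' and declares a contradiction, which is terse since a vertex of $\mathbb{P}_D$ only guarantees $D$-uniqueness, not $(X,Z,D)$-uniqueness; your re-run of the component recursion explicitly pins down $X^P_{rv}=1$, the cross-component zeros, and the forced $Z^P$-values, which is the step the paper leaves implicit.

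There is, however, one genuine gap in your reverse direction. When you apply the inductive hypothesis to $P|_{U_\ell}$ you need $D^{T_\ell}$ to be a vertex on the lower envelope of the $D$-space \emph{projection} of the sub-LP's polytope. You justify this by ``the forward direction applied to $T_\ell$,'' but as you phrased the forward direction it only yields that $(X^{T_\ell},Z^{T_\ell},D^{T_\ell})$ is a vertex of the full $(X,Z,D)$-polytope on its lower envelope, and that $D^{T_\ell}$ is undominated in $\mathbb{P}(U_\ell)_D$ — being undominated is weaker than being a vertex of the projection, and a vertex of $\mathbb{P}$ need not project to a vertex of $\mathbb{P}_D$. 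The paper's forward direction does buy this extra strength, because the specific weights make the STT point the unique LP optimum, hence $D^T$ the unique $D$-minimizer. You can close the gap within your own framework: strengthen your auxiliary claim from ``$D^Q \le D^T \Rightarrow D^Q = D^T$'' to ``$D^Q \le D^T \Rightarrow Q = P$'' (the same recursion already forces all the $X$- and $Z$-coordinates via the budget argument and the loose-LCA bounces you use elsewhere). Combined with your first-paragraph argument that $P$ is extreme in $\mathbb{P}$, this implies that any $Q^1, Q^2 \in \mathbb{P}$ with $\tfrac{1}{2}(D^{Q^1}+D^{Q^2}) = D^T$ must have $\tfrac{1}{2}(Q^1+Q^2)=P$ and hence $Q^1=Q^2=P$, giving $D^{T_\ell}$-vertexhood of $\mathbb{P}(U_\ell)_D$ and repairing the inductive step.
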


Note that Theorem~\ref{theorem_integer_vertex_iff_stt} is almost an ``if and only if'' statement, except that the first part of the claim considers the $(X,Z,D)$-polytope while the second part considers the projection to $D$-space.

\begin{proof}
We distinguish projections to $D$-space from the corresponding objects 
in the $(X,Z,D)$-polytope by a subscript $D$. Consider the second part of the claim, that $P_D$ is a vertex on the lower-envelope of $\mathbb{P}_D$, and assume by contradiction that $P$ is not induced by some STT. Then by being a vertex of $\mathbb{P}_D$, there is a direction $w$ for which $P$ is the unique optimum of the LP. Since $P_D$ is on the lower envelope of $\mathbb{P}_D$, there is such $w$ with non-negative coordinates. This $w$ is a natural objective function to Problem~\ref{problem_main_problem_stt}, and from Property~\ref{property_integer_domination} we deduce that $P$ is not the unique optimum in the direction $w$, a contradiction.

For the first part of the claim, define the weight of each node $i$ as $w_i = n^{-4 \cdot depth_T(i)}$ (recall that $T$ induces $P$). We next show that $P$ is the unique optimal solution for the LP in the direction $w$. This implies that $P$ is a vertex, and since all the weights are positive, that $P$ is on the lower envelope of $\mathbb{P}$.

To show that $P$ is uniquely optimal in the direction $w$, we prove that every optimal solution satisfies $\forall i: X_{ir}=0$ (this holds for $P$). Then $X_{ir}=0 \Rightarrow Z_{ijr} = 0 \; (\le X_{ir})$ and we deduce from the ancestry constraint on $r$ and $j$ that $\forall j: X_{rj} = 1$. The argument is concluded by recursively considering the connected components of the topology when $r$ is removed, and eventually after we fix all the variables, the set of optimal solutions turns out to be a single point, $P$.

Consider a feasible solution with $X$ coordinates such that there is some $i \ne r$ such that $X_{ir} > 0$, and let $\Delta \equiv \min \{ X_{ir} \mid X_{ir} > 0 \}$. We can infer values of  the other variables by setting: $D_i = \sum_{j}{X_{ji}}$ and $Z_{kij} = \min\{X_{ki},X_{kj}\}$.\footnote{There could be other ways to set $Z$ (smaller) and $D$ (larger), but our choice is feasible. Importantly, we cannot choose smaller values of $D$.} We define an alternative feasible solution $X'$ where $X'_{ij} = X_{ij} + n\Delta$ for $j \ne r$, $X'_{ir} = \max\{0,X_{ir} - \Delta\}$ and $Z'_{kij} = \min\{ X'_{ki},X'_{kj} \}$. If $j=r$ then $Z'_{kir} \ge Z_{kir} - \Delta$ ($Z'_{kir} = Z_{kir}$ if $X_{kr} = 0$), and if $r \ne i,j$ then $Z'_{kij} = Z_{kij} + n\Delta$.

By construction we maintain the non-negativity of $X'$ (and hence of $Z'$). Ancestry constraints are preserved because all the variables increase except for those of the form $X_{ir}$ and $Z_{kir}$ ($i \ne r$). $X_{ir}$ and $Z_{kir}$ only decrease if $X_{ir} > 0$, and can only violate the ancestry constraints over the pair $r$ and $i$. However, this constraint is not violated since: $X'_{ri} + X'_{ir} + \sum_{k \in (i \leftrightsquigarrow r)}{Z'_{kir}} \ge 
(X_{ri}+n\Delta) + (X_{ir}-\Delta) + \sum_{k \in (i \leftrightsquigarrow r)}{(Z_{kir} - \Delta)} \ge 1 + \Delta$. Finally, observe that $D'_r \le D_r - \Delta$ and for $i \ne r$: $D'_i \le D_i + n \cdot n\Delta$. Therefore the difference in values of the LP is:
$value(X',w) - value(X,w) = \sum_{i}{w_i \cdot (D'_i - D_i)} \le \sum_{i \ne r}{\frac{w_i}{w_r} \cdot w_r \cdot n^2 \Delta} - w_r \Delta < (n \cdot \frac{1}{n^4} \cdot n^2 - 1) \cdot w_r \Delta  < 0$. Therefore, every optimal point with respect to $w$ satisfies that $\forall i: X_{ir} = 0$.
\end{proof}

Theorem~\ref{theorem_integer_vertex_iff_stt} does not rule out an integer vertex $P$ of the LP such that $P_D$ is not on the lower envelope of $\mathbb{P}_D$. Indeed, Theorem~\ref{theorem_non_tree_vertex} shows that such vertices exist. Conveniently, their projection does not affect our normals method since we relax $\mathbb{P}_D$ to only deal with lower-envelope facets.

\begin{theorem}
\label{theorem_non_tree_vertex}
For any $n \ge 3$, and any tree topology $U$ with $n$ nodes, the LP defined by $U$ has integer vertices that are not induced by an STT (Definition~\ref{definition_xzd_of_stt}).
\end{theorem}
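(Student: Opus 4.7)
The plan is to exhibit, for every tree $U$ with $n \ge 3$, an explicit integer point of the LP polytope that is a vertex but is not induced by any STT. First I would pick three distinct nodes $a, b, c \in U$ with $b$ strictly between $a$ and $c$ on the $U$-path joining them; such a triple exists because the diameter of any tree on $n \ge 3$ nodes is at least $2$. Then I build any STT $T$ rooted at $b$, choosing its subtrees recursively on the connected components of $U \setminus \{b\}$. Since $b$ separates $a$ from $c$ in $U$, they lie in different subtrees of $T$, so $b = \mathrm{LCA}_T(a,c)$ and the STT-induced point $P^T$ satisfies $X^T_{ba}=X^T_{bc}=Z^T_{bac}=1$ and $X^T_{ca}=0$. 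I define the candidate point $P$ by setting $Z_{bac}=0$, $X_{ca}=1$, $D_a = D^T_a+1$ and keeping every other coordinate equal to the corresponding coordinate of $P^T$.

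Next I would verify feasibility and the non-STT property. Every ancestry constraint remains tight at $1$: for the pair $\{a,c\}$ the unit contribution now comes from $X_{ca}$ instead of $Z_{bac}$, while all other pairs are untouched. The loose-LCA constraints are only relaxed, since the only coordinate that decreases is $Z_{bac}$, and $X_{ca}$ never appears as the right-hand side of a loose-LCA constraint whose left side is positive at $P$ (because $c$ is never the LCA of $a$ and another node in $T$, as $a$ sits in a different subtree of $b$ than $c$). The depth adjustment at $a$ exactly compensates for adding $X_{ca}=1$, and no other depth is affected. The point $P$ is manifestly integer; it cannot be STT-induced because $X_{ba}=X_{bc}=1$ with $b$ on the $U$-path from $a$ to $c$ would force $b$ to be their LCA in any STT, hence $Z_{bac}=1$, contradicting $Z_{bac}=0$.

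The main step is the vertex verification, which I would carry out via the extreme-point characterization. Suppose $P = \tfrac{1}{2}(P^1+P^2)$ for feasible $P^1, P^2$. Non-negativity forces every coordinate that equals $0$ at $P$ to be $0$ at both $P^1$ and $P^2$; in particular $Z^m_{kac}=0$ for every $k$ on the $a$-to-$c$ path, including $k=b$. Then for each unordered pair $\{i,j\}$, the ancestry inequality at $P^m$ combined with these forced zeros shows that the unique coordinate which equals $1$ at $P$ in that constraint must be at least $1$ at each $P^m$, and hence exactly $1$ by averaging; for $\{a,c\}$ this pins $X^m_{ca}=1$. All $X$ and $Z$ coordinates are thereby fixed, and tightness of the depth inequalities at $P$ then pins down $D^m = D^P$. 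Thus $P^1 = P^2 = P$, showing $P$ is extreme and hence a vertex.

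The main obstacle I anticipate is the vertex step. The feasibility and non-STT checks are essentially bookkeeping once the modification is defined, but the extreme-point argument requires careful tracking of which coordinates are forced to vanish in $P^1, P^2$. The subtle point is that although the tight upper bounds $Z_{bac} \le X_{ba}, X_{bc}$ that held at $P^T$ become slack at $P$, the combination of the ancestry constraint for $\{a,c\}$ with the lower bound $Z_{bac} \ge 0$ is precisely what pins $X_{ca}=1$, preserving vertex-ness.
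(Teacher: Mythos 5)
Your proof is correct and takes essentially the same route as the second construction in the paper's own proof (what the paper calls making an LCA ``implicit''): you start from an STT-induced point, remove the explicit LCA witness $Z$ for one colinear triple, and compensate in the ancestry constraint by asserting a direct ancestry that cannot arise from any STT. The vertex verification via the characterization of averages is equivalent to the paper's perturbation-direction argument for its class $\mathbb{P}'$.

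There are two small differences worth noting. First, the paper proves the stronger claim that an entire family $\mathbb{P}'$ of $\{0,1\}$-valued feasible points (those with exactly one unit contributor per ancestry constraint) consists of vertices, and then exhibits \emph{two} types of non-STT members: your ``implicit-LCA'' construction and a ``cyclic ancestry'' construction $X_{ij}=X_{jk}=X_{ki}=1$; you need only the first, which suffices. Second, you are actually \emph{more careful} than the paper on one point: when you set $X_{ca}=1$ you also increase $D_a$ by $1$ to keep the depth inequality $D_a \ge \sum_j X_{ja}$ feasible (and tight). The paper's Case~2 construction, read literally, changes only $Z$ and $X$ while leaving $D$ unchanged, which would violate the depth constraint at the node that gained an ancestor; this is a harmless oversight in the paper that your version silently corrects.

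One inessential stylistic remark: in the feasibility check, your clause about $X_{ca}$ never appearing ``as the right-hand side of a loose-LCA constraint whose left side is positive'' is unnecessary — increasing a right-hand side of a $\le$ constraint is always safe — but it does no harm.
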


The proof shows two generic classes of such integer vertices. The first relies on non-transitivity in ancestral relationship, in opposition to an STT property not captured by the LP. The second defines an LCA implicitly but not explicitly.

\begin{proof}
Given a topology $U$, Let $\mathbb{P}'$ be the set of all feasible solutions such that every variable is $0$ or $1$, and in each ancestry constraint, exactly one variable is $1$. We argue that every point in $\mathbb{P}'$ is a vertex. Let $P \in \mathbb{P}'$.
$P$ is a vertex if and only if there is no non-zero direction $R$ such that $P \pm \epsilon R$ are both feasible. If $R$ is nonzero in some zero coordinate of $P$, one direction is infeasible due to negativity. Similarly, $R$ cannot be nonzero in  a coordinate where $P$ is $1$, because in order to preserve the ancestry feasibility to which this coordinate is the sole  contributor, $R$ must be non-zero on some other zero coordinate of $P$ with opposite sign. But since we already argued that $R$ is zero in all the zero coordinates of $P$, no such direction $R$ exists and we conclude that $P$ is a vertex.

Let $\mathbb{P} \subseteq \mathbb{P}'$ contain all the feasible solutions such that
$Z_{kij}=0$ for all $i$, $j$, and $k\in (i \leftrightsquigarrow j)$, and for every pair $i,j$ choose $X_{ij},X_{ji} \in \{0,1\}$ such that $X_{ij} + X_{ji} = 1$. We define two classes of points in $\mathbb{P}$ which do not correspond to STTs.

\begin{enumerate}
    \item We choose $P \in \mathbb{P}$ such that there is a triplet of indices $i,j,k$ (this is why the claim requires $n \ge 3$) for which $X_{ij} = X_{jk} = X_{ki} = 1$. No STT has such a cyclic relation, because if $i$ is an ancestor of $j$ which is an ancestor of $k$, then $i$ is also an ancestor of $k$.

    \item Start from a point $P' \in \mathbb{P}'$ induced by an STT $T$ that has at least one node that is an LCA ($T$ is not a path). Denote the LCA by $a$ and the two nodes that it separates by $b$ and $c$. Define $P$ to have the same values as $P'$ except that we set $Z_{abc} = 0$ and $X_{bc}=1,X_{cb}=0$. One can verify that $P$ is feasible.
    We say that $a$ remains an implicit ancestor of $b$ and $c$ since we still have that $X_{ab}=1$, $X_{ac}=1$, but 
    $Z_{abc} = 0$.
    If we replace every variable $Z_{kij}=1$ as we did for $Z_{abc}$, we  get a point in $\mathbb{P}$. Note that as a side-effect, we showed some non-STT integer vertices that belong to $\mathbb{P}'$ and not to $\mathbb{P}$ (some ancestors were made implicit but not all of them). \qedhere
\end{enumerate}
\end{proof}

\begin{remark}
\label{remark_most_integer_vertices_are_non_stt}
Theorem~\ref{theorem_non_tree_vertex} shows the existence of non-STT, even integer, vertices. In fact, a simple counting argument shows that they are the majority for large $n$.
Let $S$ be the set of all STT vertices, and $\mathbb{P}$ and $\mathbb{P}'$ as in the proof of Theorem \ref{theorem_non_tree_vertex}. We have that $|\mathbb{P}'| > |\mathbb{P}| = 2^{\binom{n}{2}}$, while $|S| \le n!$ since a permutation uniquely defines an STT by choosing a root and recursing on the suffix (multiple permutations may define the same STT). Finally, note that $S \subset \mathbb{P}'$.
\end{remark}

To conclude this section, we analyze all the vertices of the polytopes for topologies with $n=2$ and $n=3$ nodes, each of these sizes has a single, path, topology. In both cases, there are only integer vertices. When $n=2$, there are no $Z$ variables, the LP is defined as $X_{12}+X_{21} \ge 1$ for $X_{12},X_{21} \ge 0$. Overall there are two vertices, one per STT. When $n=3$ the LP becomes more interesting. The following inequalities define the polytope, and Table~\ref{table_vertices_n3} lists all $9$ vertices: $5$ of them correspond to STTs, and $4$ more are constructed as detailed in the proof of Theorem~\ref{theorem_non_tree_vertex}, two of each class.
\begin{equation*}
    X_{12}+X_{21} \ge 1, \ \ \ 
    X_{23}+X_{32} \ge 1,  \ \ \ 
    X_{13}+X_{31}+Z_{213} \ge 1,  \ \ \ 
    Z_{213} \le X_{21},  \ \ \ 
    Z_{213} \le X_{23}  \ \ \ 
\end{equation*}

\begin{table}[!ht]
    \scriptsize
    \begin{center}
    \begin{tabular}{|c|c|c|c|c|c|c|c|c|l|}
    \hline
    $X_{12}$ & $X_{21}$ & $X_{23}$ & $X_{32}$ & $X_{13}$ & $X_{31}$ & $Z_{213}$ & Comments \\
    \hline
    \hline
    0 & 1 & 0 & 1 & 0 & 1 & 0 & STT: root $3$, leaf $1$ \\
    \hline
    0 & 1 & 0 & 1 & 1 & 0 & 0 & non-STT: cyclic ancestry  \\
    \hline
    0 & 1 & 1 & 0 & 0 & 0 & 1 & STT: root $2$ \\
    \hline
    0 & 1 & 1 & 0 & 0 & 1 & 0 & non-STT: LCA abuse  \\
    \hline
    0 & 1 & 1 & 0 & 1 & 0 & 0 & non-STT: LCA abuse  \\
    \hline
    1 & 0 & 0 & 1 & 0 & 1 & 0 & STT: root $3$, leaf $2$ \\
    \hline
    1 & 0 & 0 & 1 & 1 & 0 & 0 & STT: root $1$, leaf $2$  \\
    \hline
    1 & 0 & 1 & 0 & 1 & 0 & 0 & STT: root $1$, leaf $3$  \\
    \hline
    1 & 0 & 1 & 0 & 0 & 1 & 0 & non-STT: cyclic ancestry \\
    \hline
    \hline
    \end{tabular}
    \end{center}
    \caption{\footnotesize{All the vertices of the LP of the topology over $n=3$ nodes. The vertices with comment of ``non-STT: LCA abuse'' become non-vertices when we eliminate $Z_{213}$ as detailed in Section~\ref{subsection_with_and_without_z_fourier_motzkin}.}}
    \label{table_vertices_n3}
\end{table}

\subsection{Star Topologies Have Integer Vertices}
\label{subsection_star_is_integer}

In this subsection we prove analytically that if the tree topology is a star, then all the vertices of the LP are integer. Section~\ref{subscetion_note_on_STT_as_vertices} explicitly shows this claim for $n=2,3$. Consequently, as a result of Property~\ref{property_integer_domination}, we get a polynomial-time algorithm to find the optimal static STT (for stars). 

\begin{theorem}
\label{theorem_star_has_integer_vertices}
The LP polytope for a star topology has only integer vertices.
\end{theorem}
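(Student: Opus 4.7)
The plan is to prove that at every vertex $P$ of the star LP polytope all coordinates are integer. Denoting the center by $c$ and the leaves by $\ell_1,\dots,\ell_{n-1}$, I abbreviate $a_i=X_{c\ell_i}$, $b_i=X_{\ell_i c}$, $y_{ij}=X_{\ell_i\ell_j}$ and $z_{ij}=Z_{c\ell_i\ell_j}$. The strategy is to reduce integrality of $P$ to the single statement $a_i\in\{0,1\}$ for every $i$ and then rule out $a_i\in(0,1)$ by a perturbation argument.

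First I would collect the vertex conditions that are forced by tightness. Because $D_c$ and $D_{\ell_i}$ appear only on the left of a single depth inequality each, those inequalities are tight at a vertex, so $D_c=\sum_i b_i$ and $D_{\ell_i}=a_i+\sum_{j\ne i}y_{ji}$. The center-leaf ancestry gives $a_i+b_i=1$ tight with $a_i\in[0,1]$ (if $a_i>1$ every constraint touching $a_i$ is slack, so $a_i$ could be decreased). For each pair $\{i,j\}$ at least one of $y_{ij},y_{ji}$ equals zero, otherwise the direction $(y_{ij},y_{ji})\mapsto(y_{ij}+t,y_{ji}-t)$ together with compensating shifts to $D_{\ell_i}$ and $D_{\ell_j}$ is feasible in both signs. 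A short case check shows that these conditions force $b_i$, the active $y$ of each pair, every $z_{ij}$, and all $D$-coordinates to be determined by the $a_j$'s, and that when each $a_j\in\{0,1\}$ the resulting values satisfy $z_{ij}\in\{0,\min(a_i,a_j)\}\subseteq\{0,1\}$, active $y_{ji}=1-z_{ij}\in\{0,1\}$, and the $D$-coordinates are integer sums. Hence it suffices to prove $a_i\in\{0,1\}$ for every $i$.

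Suppose for contradiction that $a_i\in(0,1)$ at $P$, and set $T=\{j:a_j=a_i\}$. I construct a nonzero direction $R$ such that $P\pm\epsilon R$ is feasible for small $\epsilon>0$, contradicting the vertex property. Put $R_{a_j}=+1$ and $R_{b_j}=-1$ for every $j\in T$, and for every pair $\{j,k\}$ with $j\in T$ and $a_k\ge a_i$ put $R_{z_{jk}}=+1$ together with $R_{y^\star_{jk}}=-1$, where $y^\star_{jk}$ is the non-zero $y$-variable of that pair; the $D$-coordinates of $R$ are slaved by the tight depth equalities, and all other entries of $R$ are zero. Pairs $\{j,k\}$ with $j\in T$ and $a_k<a_i$ need no adjustment because $z_{jk}=a_k<a_j$ already makes the bound $z_{jk}\le a_j$ slack.

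The feasibility check is then a straightforward case analysis. The ancestry inequality on a perturbed pair $\{j,k\}$ sees net $R$-change $R_{y^\star_{jk}}+R_{z_{jk}}=-1+1=0$ and stays tight; the center-leaf ancestry $a_j+b_j=1$ stays tight for $j\in T$; the bound $z_{jk}\le a_j$ for $j\in T$ stays tight because both sides move by $+\epsilon$; the bound $z_{jk}\le a_k$ for $k\notin T$ with $a_k>a_i$ had strict slack $a_k-a_i>0$ which absorbs the perturbation for $\epsilon<a_k-a_i$. Every perturbed variable has initial value in $\{a_i,1-a_i\}\subseteq(0,1)$, so non-negativity survives small $\epsilon$ in both signs. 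The main obstacle I anticipate is exactly the coordination encoded by $T$: if we perturbed $a_i$ alone and some $j\ne i$ had $a_j=a_i$, then the bounds $z_{ij}\le a_j$ and $z_{ij}\le a_i$ would be simultaneously tight, so shifting $a_i$ upward without shifting $a_j$ would force $z_{ij}>a_j$; moving the whole class $T$ in lockstep is what resolves this. The existence of $R$ contradicts $P$ being a vertex, forcing $a_i\in\{0,1\}$ and completing the proof.
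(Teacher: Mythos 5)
Your overall strategy is a sound variant of the paper's: both proofs establish that at a vertex all variables except one class are determined (up to a binary choice) by that class, and then both use a perturbation argument to force the remaining class into $\{0,1\}$. The paper works with the $Z$-variables as the pivotal class; you work with $a_i=X_{c\ell_i}$. Your reduction step ($b_i=1-a_i$, one of $y_{ij},y_{ji}$ is zero, the other equals $1-z_{ij}$, $z_{ij}\in\{0,\min(a_i,a_j)\}$) is correct in substance, and the equivalence-class idea ($T=\{j:a_j=a_i\}$) is the right mechanism for handling ties. However, the perturbation direction $R$ you write down is not always feasible, and this is a genuine gap.

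The problem is the unconditional assignment $R_{z_{jk}}=+1$, $R_{y^\star_{jk}}=-1$ for every pair $\{j,k\}$ with $j\in T$ and $a_k\ge a_i$. Your reduction only guarantees $z_{jk}\in\{0,\min(a_j,a_k)\}=\{0,a_i\}$ for such a pair; it does not guarantee $z_{jk}=a_i$. When $z_{jk}=0$, the direction $-R$ drives $z_{jk}$ to $-\epsilon<0$, so $P-\epsilon R$ is infeasible. Your sentence ``Every perturbed variable has initial value in $\{a_i,1-a_i\}\subseteq(0,1)$'' silently assumes $z_{jk}=a_i$, and your tightness check ``the bound $z_{jk}\le a_j$ for $j\in T$ stays tight because both sides move by $+\epsilon$'' assumes the same. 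For a concrete counterexample: take $n=4$, $a_1=a_2=a_3=\tfrac12$, $z_{12}=\tfrac12$, $z_{13}=z_{23}=0$; here $T=\{1,2,3\}$, and your $R$ sends $z_{13}$ and $z_{23}$ negative. The fix is small: set $R_{z_{jk}}=+1$ and $R_{y^\star_{jk}}=-1$ only when $z_{jk}>0$ (equivalently, when the bound $z_{jk}\le a_j$ is actually tight); pairs with $z_{jk}=0$ need no adjustment because both bounds $z_{jk}\le a_j$ and $z_{jk}\le a_k$ are then strictly slack and remain so for small $\epsilon$. With that conditional, the rest of your argument goes through. This conditional perturbation is exactly what the paper does when it ``decreases the variables $A_{ij}$ that correspond to an increased $Z_{1ij}$,'' i.e.\ only touches the $X$-side when the paired $Z$-variable actually moved.
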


\begin{corollary}
An optimal static STT for any star topology is computable in polynomial time.
\end{corollary}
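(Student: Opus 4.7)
The plan is to establish the claim by reducing to an integrality statement about the $X$-projection of $\mathbb{P}$ and then proving that the reduced polytope is an integer polytope via total unimodularity. Throughout, let $c$ denote the center of the star and $L$ the set of leaves.

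I first show that at any vertex $(X, Z, D) \in \mathbb{P}$ the ``auxiliary'' variables are pinned. The depth inequality $D_i \ge \sum_{j \ne i} X_{ji}$ must be tight---otherwise $D_i$ could be decreased, contradicting extremality---so $D$ is a linear, integer-coefficient function of $X$. Likewise, each $Z_{cij}$ must lie at an endpoint of its feasibility interval $[\max(0,\, 1 - X_{ij} - X_{ji}),\ \min(X_{ci}, X_{cj})]$, since otherwise $Z_{cij}$ could be perturbed in both directions. Hence $Z$ is a piecewise-linear function of $X$ that is integer whenever $X$ is. After Fourier--Motzkin elimination of the $Z$ variables, the resulting $X$-polytope $Q$ is cut out by $X_{ci} + X_{ic} \ge 1$ for each leaf $i$ and $X_{ci} + X_{ij} + X_{ji} \ge 1$ for each ordered pair of distinct leaves $(i,j)$, together with non-negativity.

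Next, I verify that $Q$'s constraint matrix is totally unimodular via the Ghouila--Houri criterion: for any row subset $R$, a $\pm 1$-coloring $\sigma$ exists with every column's signed sum in $\{-1, 0, 1\}$. The key combinatorial move is \emph{pairing}: for each unordered pair $\{i, j\}$ with both leaf-leaf rows $X_{ci} + X_{ij} + X_{ji} \ge 1$ and $X_{cj} + X_{ij} + X_{ji} \ge 1$ in $R$, assign them opposite signs, which cancels out the shared $X_{ij}$ and $X_{ji}$ columns. The residual contribution to each $X_{ci}$-column then reduces to choosing a near-Eulerian orientation of the complete graph $K_{n-1}$ on the leaves; such an orientation always exists and gives $|\text{out-deg}(i) - \text{in-deg}(i)| \le 1$ at every leaf $i$, so the center-leaf row $X_{ci} + X_{ic} \ge 1$ and any non-negativity row on $X_{ci}$ can be freely signed to absorb the residual $\pm 1$.

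The main obstacle is twofold: (a) cleanly handling mixed or incomplete row subsets in the Ghouila--Houri coloring---particularly when only one leaf-leaf row per pair lies in $R$ (so the pairing is incomplete), which I resolve by exploiting the free signs of the unpaired row and of the non-negativity row to compensate; and (b) translating $Q$'s integrality back to $\mathbb{P}$'s integrality, since a vertex of $\mathbb{P}$ can fail to project to a vertex of $Q$ when the $Z_{cij}$-interval degenerates (both endpoints coinciding). I handle (b) by showing directly that any nontrivial feasible direction in $Q$ lifts to a joint perturbation direction $(\delta X, \delta Z, \delta D)$ staying in $\mathbb{P}$, using the endpoint pinning of $Z$: in each non-degenerate case $Z_{cij}$ is locally a linear function of $X$, while in degenerate cases the direction can be chosen to preserve the coincident endpoints simultaneously.
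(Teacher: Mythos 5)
Your approach is genuinely different from the paper's --- eliminate $Z$ via Fourier--Motzkin, prove the resulting $X$-polytope $Q$ is TU, and then try to lift --- and the Ghouila--Houri argument for $Q$ (pairing the $B_{ij}$ and $B_{ji}$ rows into an orientation problem, with slack from unpaired rows and the $A_i$ rows) does appear to work. But step (b) has a real gap: a vertex $(X,Z,D)$ of $\mathbb{P}$ need not project to a vertex of $Q$, because the tight constraints at a vertex can include equalities inherited purely through $Z$, such as $Z_{cij}=X_{ci}$ and $Z_{cij}=X_{cj}$ both tight, forcing $X_{ci}=X_{cj}$ --- an equality that is not among $Q$'s defining inequalities. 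In that configuration a feasible direction $\delta X$ of $Q$ lifts only if $\delta X_{ci}=\delta X_{cj}$ at every such tied pair, and your plan simply asserts the direction ``can be chosen to preserve the coincident endpoints'' without showing such a nontrivial direction exists. Moreover, intersecting a TU polytope with an extra equality like $X_{ci}=X_{cj}$ need not preserve integrality (e.g.\ $\{x,y\ge 0,\ x+y\ge 1,\ x=y\}$ has the half-integer vertex $(1/2,1/2)$), so nothing in your argument as written rules out a non-integer $X$ at a vertex of $\mathbb{P}$.

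For the corollary itself you can sidestep the lifting entirely: for a non-negative $D$-objective $f$, the LP minimum over $\mathbb{P}$ equals $\min_{X\in Q}\sum_{i,j}f_i X_{ji}$ (take $D$ tight; a feasible $Z$ always exists once $X\in Q$, by construction of $Q$). By your TU claim there is then an integer minimizer $X^*$; setting $Z_{cij}=\min(X^*_{ci},X^*_{cj})$ and $D_i=\sum_j X^*_{ji}$ gives an integer optimal feasible solution, and Property~\ref{property_integer_domination} extracts an optimal STT in polynomial time --- no claim about vertices of $\mathbb{P}$ is needed. For comparison, the paper's own proof of Theorem~\ref{theorem_star_has_integer_vertices} is a direct perturbation argument in $(X,Z,D)$-space (no elimination) that characterizes the vertex structure explicitly, and the paper also notes a one-line direct proof of the corollary: a star has at most $n$ effective STTs (search leaves in decreasing-weight order and vary only the depth of the center), so one can simply enumerate them.
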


(The corollary can be proven directly, using a simple enumeration: since the star is very symmetric, there are at most $n$ effective STTs to consider: we always query leaves in decreasing order of weight, and the $n$ STTs vary according to the depth at which we query the central node.)

\begin{proof}[Proof of Theorem~\ref{theorem_star_has_integer_vertices}]
The LP for a star over $n$ nodes is given in Equation~(\ref{equation_starn_original}), where we denote the center by $1$ (recall that $Z_{1ij}$ and $Z_{1ji}$ are the same).
\begin{equation}
\label{equation_starn_original}
    \forall i>1: X_{1i}+X_{i1} \ge 1 ; \\  \ \ \ \ \ 
    \forall i,j \ne 1: X_{ij} + X_{ji} + Z_{1ij} \ge 1 ; \\  \ \ \ \ \ 
    \forall i,j \ne 1: Z_{1ij} \le X_{1i} ; \\  \ \ \ \ \ 
    X,Z \ge 0 \\
\end{equation}

We show that each vertex corresponds to an assignment satisfying the following three conditions:
\begin{enumerate}
    \item \label{assign1} For every $i,j \ne 1$, $Z_{1ij} \in \{0,1\}$.
    \item \label{assign2} For each $i$: Let $z_i \equiv \max_{j \ne 1,i} \{Z_{1ij}\}$. We set $X_{1i} \in \{z_i,1\}$ and  $X_{i1} = 1 - X_{1i}$.
    \item \label{assign3} For each $i,j \ne 1$: Choose $X_{ji} \in \{0,1-Z_{1ij}\}$ and set $X_{ij}=1-Z_{1ij}-X_{ji}$. Explicitly: if $Z_{1ij}=1$ then $X_{ij}=X_{ji}=0$, otherwise ($Z_{1ij} = 0$) $X_{ij} \in \{0,1\}$ and $X_{ji} = 1-X_{ij}$.
\end{enumerate}
One can verify that each such assignment defines a feasible integer point. We proceed to prove that every vertex corresponds to an assignment that satisfies these conditions. We do this by showing that if any of the three conditions does not hold at a feasible point $P$, then there is a direction $R$ such that $P'_{\pm} = P \pm \epsilon R$ are both feasible for some $\epsilon > 0$, thus $P = \frac{1}{2} (P'_{+} + P'_{-})$ is not a vertex. The coordinates of $R$ are either $0$ or $\pm 1$, and we say that a coordinate of $R$ 
that equals $1$ is \emph{increased}
and a coordinate of $R$ that equals $-1$ is \emph{decreased}.
 Once we prove that some condition holds, we assume it to hold when we consider other conditions in later arguments.

First note that every variable is bounded in $[0,1]$. Non-negativity is by definition of the LP. If there are variables larger than $1$, let $R$ decrease all the variables that are larger than $1$ (indeed $P \pm \epsilon R$ are both feasible).

To prove Condition~\ref{assign3}, for fixed $i,j \ne 1$ note that these variables only participate in the constraint $X_{ij}+X_{ji} \ge 1 - Z_{1ij}$. If $X_{ij}+X_{ji} > 1 - Z_{1ij}$, since $Z_{1ij} \le 1$ one of $X_{ij}$ and $X_{ji}$ is positive, and we can choose $R$ to decrease it. Thus $X_{ij} + X_{ji} = 1 - Z_{1ij}$. If $X_{ij},X_{ji} \in (0,1-Z_{1ij})$ (note that if $Z_{1ij}=1$ this range is empty) then both are positive, and $R$ increases $X_{ij}$ and decreases $X_{ji}$. We conclude that $X_{ij}$ is either $0$ or $1-Z_{1ij}$ and $X_{ji} = 1-Z_{1ij}-X_{ij}$. This proves Condition~\ref{assign3}.

To prove Condition~\ref{assign2}, consider the case $X_{i1} + X_{1i} > 1$ for some $i$. Since both variables are at most $1$, both are positive, and $X_{i1}$ only participates in this single inequality, so take $R$ which increases $X_{i1}$. Now that we have $X_{i1} + X_{1i} = 1$, if $z_i < X_{1i} < 1$ then $X_{i1} > 0$, so take $R$ which decreases $X_{1i}$ and increases $X_{i1}$. (In both cases, the strict inequality ensures that $P \pm \epsilon R$ are both feasible, so we must have equalities as argued.)

To prove Condition~\ref{assign1}, let $0<a<1$ be the largest non-integer value of some $Z$ variable, if such exists. For each $Z_{1ij}=a$ we have $(X_{ij}+X_{ji}) \ge (1-a) > 0$, so let $A_{ij}$ denote a positive variable among $X_{ij}$ and $X_{ji}$ (if both are, pick one arbitrarily). Then $R$ increases all the $X$ and $Z$ variables whose value is in $[a,1)$, decreases the variables $A_{ij}$ that correspond to an increased $Z_{1ij}$, and if $X_{1i}$ was increased (because $a \le X_{1i} < 1$), recall that since Condition~\ref{assign2} holds, $X_{1i} + X_{i1} = 1$, so we also decrease $X_{i1}$ to maintain feasibility in the direction $-R$. $P \pm \epsilon R$ are both feasible, so if $P$ is a vertex, every $Z_{1ij} \in \{0,1\}$ as required by Condition~\ref{assign1}.
\end{proof}

\subsection{Extending Topologies}
\label{subsection_graph_extensions}

In this section we prove that if the LP for a topology has non-integer vertices, the LP of topologies that extend it in some ways also have non-integer vertices. This explains formally, for example, why the three topologies with $8$ nodes that are colored in red in Figure~\ref{figure_all_trees_up_to_n8} must have non-integer vertices once we know that the red topology with $7$ nodes has non-integer vertices. We then discuss another kind of extension, when we combine topologies through a new common node.

\begin{definition}
\label{definition_extended_graph}
We define a single extension of a graph $G$ to $G'$ and denote it $G \to G'$ as the action of adding a new node to $G$ and connecting it to some existing node as a new leaf, or between two adjacent nodes, i.e., subdividing an existing edge into two new edges.
We say that $G$ extends $H$, or that $H$ is a shrinking of $G$, if there is a sequence of extensions of length $|G|-|H|$ such that $H \to H' \to \ldots \to G$.
\end{definition}

\begin{theorem}
\label{theorem_subgraph_with_frac_vertex_implies_frac_vertex_extension}
If the LP for a topology $U$ has non-integer vertices, so does the LP for any extension of $U$. The claim holds both for the original space of $(X,Z,D)$ variables, and also when only considering the depths $D$.
\end{theorem}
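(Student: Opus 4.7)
The plan is to induct on $|U'|-|U|$, reducing to a single extension step, and then to handle both extension operations (leaf attachment and edge subdivision) uniformly by constructing an affine projection $\pi$ from $U'$'s LP-polytope onto $U$'s LP-polytope and extracting a vertex of the fiber over the given non-integer vertex $P$.

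For a leaf attachment, where a new node $v$ is adjacent to $u \in U$, define $\pi(X',Z',D'):=(X,Z,D)$ by $X_{ij}:=X'_{ij}$, $Z_{kij}:=Z'_{kij}$ for $i,j,k\in U$, and $D_i:=D'_i-X'_{vi}$. Adding a leaf leaves every $U$-path intact, so each ancestry, loose-LCA, and depth inequality of $U$'s LP is implied by its $U'$ counterpart under $\pi$; hence $\pi$ is well-defined and affine. The fiber $\pi^{-1}(P)$ is non-empty: it contains the explicit \emph{hang $v$ below $u$} point with $X'_{uv}{=}1$, $X'_{vu}{=}0$, $X'_{iv}{=}X_{iu}$ and $X'_{vi}{=}0$ for $i\in U\setminus\{u\}$, $Z'_{uiv}{=}X_{ui}$, $Z'_{kiv}{=}Z_{kiu}$ for $k\in(i\leftrightsquigarrow u)$, and $D'_v{=}D_u+1$; its feasibility follows coordinate-by-coordinate from feasibility of $P$ (every $(i,v)$-ancestry in $U'$ reduces to the $(i,u)$-ancestry in $P$). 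All remaining free variables live in the non-negative orthant, so the fiber is pointed and has a vertex $P'$.

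The transfer lemma then promotes $P'$ to a vertex of the full $U'$-polytope: if $P'=\tfrac12(Q'+R')$ with $Q',R'$ feasible in $U'$'s LP, affineness gives $P=\tfrac12(\pi(Q')+\pi(R'))$, and vertexhood of $P$ in $U$'s polytope forces $\pi(Q')=\pi(R')=P$, so $Q',R' \in \pi^{-1}(P)$; vertexhood of $P'$ in the fiber then forces $Q'=R'=P'$. Non-integrality is inherited because $P'$'s $U$-coordinates copy those of $P$. The subdivision case, where an edge $(u_1,u_2)$ is split by $v$, introduces new variables $Z'_{vij}$ for pairs $(i,j)$ on opposite sides of the cut; absorb each symmetrically into the ancestry via $X_{ij}:=X'_{ij}+\tfrac12 Z'_{vij}$ and $X_{ji}:=X'_{ji}+\tfrac12 Z'_{vij}$ on such pairs, with matching depth correction $D_i:=D'_i-X'_{vi}+\tfrac12\sum_{j:\,v\in(j\leftrightsquigarrow i)_{U'}} Z'_{vji}$. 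The $U$-ancestry at $(i,j)$ then reassembles the $U'$-ancestry term-for-term, and the LCA inequalities $Z_{kij}=Z'_{kij}\le X'_{ki}\le X_{ki}$ remain valid because $X_{ki}\ge X'_{ki}$. An explicit \emph{copy of $u_1$} fiber point ($X'_{iv}{=}X_{iu_1}$, $X'_{vi}{=}X_{u_1 i}$, $X'_{u_1 v}{=}1$, $X'_{v u_1}{=}0$, $Z'_{vij}{=}0$) witnesses non-emptiness, and the rest of the argument is identical.

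For the $D$-projection claim, pick a direction $w>0$ (perturbed to be strictly positive if necessary) witnessing $P_D$ as the unique optimum of $U$'s $D$-polytope, and extend to $w':=(w,\varepsilon)$ with $0<\varepsilon<\min_i w_i$. The algebraic identity $w' \cdot D'|_{U'} = w \cdot \bigl(D'|_U - X'_v|_U\bigr) + \sum_{i \in U} w_i X'_{vi} + \varepsilon D'_v$ identifies the first summand with $w$ applied to $\pi(P')_D \in U$'s $D$-polytope, so it is $\ge w \cdot P_D$ with equality iff $\pi(P')_D = P_D$; the second is non-negative, and $\varepsilon<\min_i w_i$ makes any trade of one unit of $X'_{vi}$ against one unit of $D'_v$ cost $w_i-\varepsilon>0$, forcing $X'_{vi}=0$ for every $i \in U$ at the optimum; in the resulting pinned slice $D'_v$ attains a uniquely determined minimum value $d^*_v$. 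Hence $(P_D,d^*_v)$ is the unique optimum of $w'$ and therefore a non-integer vertex of $U'$'s $D$-polytope. The main obstacle is the subdivision bookkeeping — verifying that the symmetric absorption of each $Z'_{vij}$ respects every LCA inequality and the depth correction simultaneously — together with the minor subtlety of ensuring $w>0$ in the $D$-projection argument; once these are in place, the pointedness of the fiber, the transfer lemma, and the primary-direction argument all go through without change.
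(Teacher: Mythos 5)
Your proof takes a genuinely different route from the paper's. The paper reduces to one extension step, picks a direction $h$ in which the fractional vertex $P$ is the unique optimum of $U$'s LP, extends $h$ by zeros on the new coordinates, and argues that the optimal vertex $Q'$ of $U'$'s LP in direction $h'$ must project to $P$ (by exhibiting a feasible extension $P'$ of $P$ that ties the optimum). You instead construct an affine map $\pi$ sending $U'$'s polytope into $U$'s, show the fiber over $P$ is a nonempty pointed polytope, take a vertex $P'$ of the fiber, and use the standard transfer lemma (vertex of fiber over a vertex of the base is a vertex of the total polytope). Both arguments ultimately rest on the same fact — that there is an affine map from $U'$'s feasible region onto $U$'s — but you make this explicit and verify it, whereas the paper only constructs a single lift $P'$ and implicitly uses that the naive ``forget the new coordinates'' projection lands in $U$'s polytope (so that ``$Q = P$'' follows from uniqueness). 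In the subdivision case the naive projection does \emph{not} land in $U$'s polytope, because the ancestry constraint for pairs straddling the subdividing node has one fewer $Z$-term in $U$ than in $U'$; your half-absorption $X_{ij} := X'_{ij} + \tfrac12 Z'_{vij}$ and matching depth correction is exactly what's needed to make the projection land correctly. So your approach is not only different but, in the subdivision case, more careful than what the paper writes.

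There are, however, gaps worth flagging. First, the $D$-projection argument is not sound as written. You claim that the choice $0 < \varepsilon < \min_i w_i$ ``forces $X'_{vi} = 0$ at the optimum'' by a trade of $X'_{vi}$ against $D'_v$; but $X'_{vi}$ participates in the ancestry constraint for $(v,i)$ and in LCA upper bounds $Z'_{vij} \le X'_{vi}$, so decreasing it is not a free trade, and the same $D'$ may be realized by several $(X',Z')$ with different $X'_{vi}$, which the slicing argument does not control. The cleaner (and sufficient) route is the one the paper uses: extend $w$ by a zero coefficient on $D'_v$, observe that every optimal point has $D'\vert_U = P_D$ (since the $U$-projection of $U'$'s $D$-polytope is $U$'s $D$-polytope, which you can get from your $\pi$), and then any optimal \emph{vertex} of $U'$'s $D$-polytope (which exists because the polytope is pointed) has non-integer $D$-coordinates. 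Uniqueness of the optimum is not actually needed. Second, your justification ``perturbed to be strictly positive if necessary'' for $w>0$ deserves a word: it is true — the recession cone of the $D$-polytope is the nonnegative orthant, so the normal cone at any vertex is full-dimensional inside the nonnegative orthant and hence its interior lies in the open orthant — but as stated it reads as if perturbing preserves unique optimality for free. Third, a minor issue: your explicit fiber point uses $Z'_{uiv} = X_{ui}$ together with $X'_{uv}=1$, so feasibility of the LCA constraint $Z'_{uiv} \le X'_{uv}$ needs $X_{ui} \le 1$, which the LP does not impose explicitly; this is easily patched (cap $Z'_{uiv} := \min(X_{ui},1)$, which still makes the relevant ancestry constraint hold). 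Finally, the ``copy of $u_1$'' point in the subdivision case is left underspecified (in particular, the new $Z'_{kiv}$ values and $D'_v$), though the intended construction is clear enough.
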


\begin{proof}
It is sufficient to prove the claim for a single extension from $U$ to $U'$. Then if $U^*$ is an extension of $U$, by applying the claim over a single extension $|U^*|-|U|$ times, we may conclude that $U^*$ too has non-integer vertices.

Throughout the proof we  use prime to denote terms that correspond to $U'$. For example, we denote the set of variables in the LP of $U$ by $(X,Z,D)$, and  $(X',Z',D')$ denote the variables in the LP of $U'$. Since $U'$ is a single extension of $U$, $(X',Z',D')$ is a union of $(X,Z,D)$ with additional new variables that correspond to the new extending node, which we denote by $a$. 

Let $h$ be a direction for which a fractional vertex $P$ is uniquely optimal for the $LP$ of $U$ (there is such direction for any vertex). Let $h'$ extend $h$ with $0$'s for every new coordinate, and let $Q'$ be an optimal vertex for the $LP$ of $U'$ (it may not be uniquely optimal, but this is fine). We will show that we can extend $P$ to a point $P'$ feasible for the LP of $U'$ such that $P$ is the projection of $P'$ to the old variables. Denote by $Q$ the projection of $Q'$ to the old variables. Observe that: $Q \cdot h = Q' \cdot h' \le P' \cdot h' = P \cdot h$ where both equalities are because $h'$ has $0$'s in coordinates of new variables, and the inequality follows since $Q'$ is a minimizer in the direction $h'$. Because $P$ is uniquely optimal for $h$, then $Q = P$, and we conclude that $Q'$ is the vertex claimed by the theorem: $Q'$ projects to $P$, hence if $P$ is non-integer so is $Q'$. Moreover, if $P$ has non-integer $D$ variables, so does $Q'$.

We explain how to extend $P$ to $P'$ that is a feasible point for the LP of $U'$, not necessarily a vertex, by setting the values of the extra variables. Loosely speaking, we set the values so that $a$ is a ``descendant'' of its neighbor(s). The values of $X,Z,D$ are the same in $P'$ and $P$, and we only need to define the values of $X'\setminus X$, $Z' \setminus Z$ and $D_a (= D'\setminus D)$. Because $U'$ is a single extension, we have two cases to consider. The first and simpler case is when the added node $a$ is a leaf, denote its neighbor by $b \in U$, and $U_b \equiv U \setminus \{b\}$. Then the extension is as follows:
\begin{enumerate}
    \item $X' \setminus X$: Set $X'_{ba}=1$, $\forall i \in U_b: X'_{ia} = X_{ib}$ and $\forall i \in U: X'_{ai} = 0$ (child of $b$, ancestor of none).
    \item $Z' \setminus Z$: Set $Z'_{kai} = Z_{kbi}$ for $i,k \in U_b$ and $Z'_{bai} \equiv \min\{ X'_{ba}, X'_{bi} \} = X_{bi}$. Since $a$ is a leaf in the topology there are no variables of the form $Z'_{aij}$.
    \item Set $D'_a = \sum_{i \in U}{X'_{ia}}$.
\end{enumerate}

To verify feasibility, note that any constraint involving only old variables is satisfied by construction because we copied the values. So we only need to check constraints involving variables associated with $a$. By construction $D'_a \ge \sum_{i \in U}{X'_{ia}}$ (in fact, equal), and since $\forall i \ne a: X'_{ai}=0$ we have indeed $D'_i = D_i$. Next, $Z'_{iaj} \le \min\{ X'_{ia},X'_{ij} \}$ is explicitly stated if $i=b$, and for $i \ne b$: $Z'_{iaj} = Z_{ibj} \le \min\{ X_{ib},X_{ij} \} = \min\{ X'_{ia},X'_{ij} \}$. Finally,
consider ancestry constraints for $a$ and some $i \in U$. If $i=b$: Then $X'_{ba} = 1$ and we are done. Otherwise, $X'_{ai} + X'_{ia} + \sum_{k \in (a \leftrightsquigarrow i)}{Z'_{kai}} = 0 + X_{ib} + (X_{bi} + \sum_{k \in (b \leftrightsquigarrow i)}{Z_{kbi}}) \ge 1$ where the last inequality is known to hold, in $U$, with respect to the pair $i,b$.

In the second case node $a$ subdivides the edge $(b,c)$ between two nodes $b,c \in U$. Define $U'_x$ for $x \in \{b,c\}$ to be the connected component of $U'$ that contains $x$ when we delete $a$. We set the $X$ and $Z$ variables that involve $a$ and nodes in $U'_x$ as if $a$ is a leaf that was added to $U'_x$, and set $D'_a = \sum_{i \in U}{X'_{ia}}$. It only remains to define values for the variables  $Z'_{aij}$ for $i \in U'_b$ and $j \in U'_c$ (a case we did not have earlier), and we trivially set them to $0$.

To verify feasibility, note that setting $Z'_{aij} = 0$ for $i \in U'_b$ and $j \in U'_c$ is fine because $Z'_{aij} = 0 \le X'_{ai} = X'_{aj}  = 0$. Then the only non-trivial constraints that require verification are ancestry constraints. However, for pairs $i,j \in U$ we have them satisfied by construction (copied values), and for pairs of $a$ with $i \in U'_x$ we know them to be satisfied because we verified this relation in the previous case of $a$ being an actual leaf (here $U'_y$ for $y \ne x$ has no effect on the constraints).
\end{proof}

\begin{remark}
\label{remark_subgraph_notes}
Note that the converse is of course not true: if an extension has non-integer vertices, it does not imply that the smaller topology has non-integer vertices. Indeed, the path topology with $3$ vertices only has integer vertices as shown in Table~\ref{table_vertices_n3}, but is extended by the long-star which has non-integer vertices (Theorem~\ref{theorem_non_integer_optimum_long_star}). The proof works since we can add zero coefficients when extending $h$ to $h'$.
\end{remark}

Next, we show that we can ``mix'' vertices of smaller topologies, in larger topologies.

\begin{theorem}
\label{theorem_combine_trees}
Let $b \ge 2$ be an integer, and let $H = \{ U^1,\ldots,U^b \} $ be a collection of topologies. Let $U$ be a new (tree) topology 
obtained by taking the union over $H$ and adding an additional new node $r$, connected to one node in each $U^i$.
 Then for every set of vertices $P^1,\ldots,P^b$ such that $P^i$ is a vertex of the LP for $U^i$, there is a vertex $P$ whose projection to variables that are only related to $U^i$ is $P^i$. In loose terms, $P$ is sort of a Cartesian product.
\end{theorem}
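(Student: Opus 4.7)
The plan is to construct $P$ by placing the new node $r$ as a common ``super-root'' above the $b$ sub-configurations $P^1,\ldots,P^b$, and then show that $P$ is a vertex by reducing vertex-ness to that of each $P^i$. I note up front that the claim's ``projection'' can be matched exactly on the $X$ and $Z$ coordinates but will differ on the $D$ coordinates by a uniform $+1$ shift: vertices of the $U$-LP must have tight depth constraints, and any cross-pair ancestry requires a nonzero $X_{rv}$ or cross-$X$, so a depth contribution from $r$ is unavoidable. I will interpret the ``sort of a Cartesian product'' phrasing accordingly.

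\emph{Construction of $P$.} First I would copy $P^i$ onto the $U^i$-internal variables, so $P$ agrees with $P^i$ on every $X_{uv}$ and $Z_{kuv}$ with $u,v,k\in U^i$. Next, I would embed $r$ as an ancestor of every other node by setting $X_{rv}=1$ and $X_{vr}=0$ for each $v\neq r$. For a cross-pair $u\in U^i$, $v\in U^j$ with $i\neq j$, I would set $X_{uv}=X_{vu}=0$, $Z_{ruv}=1$, and $Z_{kuv}=0$ for every other $k$ on their path (those $k$ lie on the $U^i$- or $U^j$-side, so at least one of $X_{ku},X_{kv}$ vanishes and forces $Z_{kuv}=0$ anyway). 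For pairs $(u,r)$ with $u\in U^i$, I would set $Z_{kur}=0$ for every $k$ on the $U^i$-side of the path. Depths are then fixed by tightness: $D_r=0$ and $D_v=P^i_{D_v}+1$ for $v\in U^i$.

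\emph{Feasibility and vertex-ness.} Constraints indexed entirely inside some $U^i$ are inherited from the feasibility of $P^i$. For cross-pairs, the ancestry sum collapses to $Z_{ruv}=1$; for pairs involving $r$, ancestry is witnessed by $X_{rv}=1$; the new loosely-LCA inequalities reduce to $1\le X_{ru}=X_{rv}=1$; all depth inequalities are tight by construction. To show $P$ is a vertex I would assume a nonzero direction $R$ with $P\pm\epsilon R$ both feasible and derive $R=0$ in stages. First, restricting $R$ to variables indexed inside $U^i$ gives a feasible direction at $P^i$ within the $U^i$-LP (the $U^i$-internal constraints of the two LPs coincide), so vertex-ness of $P^i$ forces this restriction to vanish. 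Second, every coordinate at which $P$ attains the non-negativity bound zero (the $X_{vr}$'s, the cross $X$'s, the $Z$'s I set to zero, and $D_r$) must have $R=0$. Third, the cross-ancestry constraint now reads $Z_{ruv}\ge 1$ with equality, forcing $R_{Z_{ruv}}=0$; the tight loosely-LCA $Z_{ruv}\le X_{rv}$ then forces $R_{X_{rv}}=0$; finally, the now-tight depth equalities with all contributing $X$-directions already pinned force $R_{D_v}=0$ for every $v\in U^i$.

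The main obstacle is this third stage: the vertex-ness of each $P^i$ is asserted only within its own LP and must propagate through the newly introduced $r$- and cross-variables. The key observation is that each tight constraint added by the construction has a single ``source of slack'' that is already pinned by an earlier stage of the cascade, so the chain of implications closes cleanly. Carefully listing the tight constraints as above suffices; the ``projection'' claim then follows by inspection, noting that the $(X,Z)$ coordinates of $P$ restricted to $U^i$ coincide with those of $P^i$, and the $D$ coordinates differ by the uniform constant $+1$ contributed by $r$.
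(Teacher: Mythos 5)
Your proof is correct and follows essentially the same route as the paper's: place $r$ as a universal ancestor (with $X_{ru}=1$, $X_{ur}=0$), set all cross-$X$'s to $0$ and $Z_{ruv}=1$ for cross-pairs, copy the $U^i$-internal $(X,Z)$ from $P^i$, verify feasibility, and then establish vertex-ness by pinning the zero coordinates, cascading through the coordinates set to $1$, and invoking vertex-ness of each $P^i$ on the residual. One small remark on each side: your explicit note that the $D$-coordinates must shift by $+1$ (rather than being literally copied) is actually tidier than the paper's phrasing, which loosely says to ``copy the values of the vertex $P^i$'' even though a vertex must have tight depth constraints and the added ancestor $r$ raises each $\sum_j X_{jv}$ by one; the paper's closing claim that ``$P$ restricted to coordinates of topology $j$ is a vertex in $U^j$'' is similarly imprecise on the $D$ coordinates. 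Conversely, your stage ordering glosses over that the ``$U^i$-internal constraints of the two LPs coincide'' only for the ancestry and loosely-LCA constraints, not the depth constraints (which involve cross and $r$ terms); to make the first stage rigorous one should either restrict it to the $X,Z$ coordinates, or first pin the cross and $r$ coordinates of $R$ and then build the restricted direction at $P^i$ by setting its $D$-component to the sum of the restricted $X$-components. These are exactly the same fixes one would apply to the paper's proof, so the two arguments are effectively identical.
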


\begin{proof}
To define $P$, we intuitively set $r$ as the ``ancestor'' of every other node, which formally translate to the following values: $\forall u \in U' \setminus \{r\}: X_{ru} = 1, X_{ur} = 0$, and $\forall u \in U^i, v \in U^j, i \ne j: X_{uv}=X_{vu}=0$. Also, $D_r = \sum_{i \in U' \setminus\{r\}}{X_{ir}} = 0$. In addition, the LCA relations are as follows: $\forall u \in U' \setminus \{r\}: \forall k \in (u \leftrightsquigarrow r): Z_{kur}  = 0$ and $\forall u \in U^i, v \in U^j, i \ne j: \forall k \in (u \leftrightsquigarrow v): Z_{ruv}  = \delta_{rk}$ (i.e., $1$ iff $k=r$). So far we set  all the new variables, related to $r$ or to pair of nodes from different $U^i$s. Within each $U^i$, we copy the values of the vertex $P^i$. This results in a feasible point $P$ that is a ``Cartesian product'' of $P^1,\ldots,P^b$. Feasibility is simple to verify because $r$ behaves like a root, and within each subtree we have feasibility by definition. It remains to prove that $P$ is a vertex.

Assume by contradiction that $P$ is not a vertex, then there is a direction vector $R$ such that $P \pm \epsilon R$ is feasible. However, observe that the values that we defined explicitly are all extremal, either $0$ or $1$. Because we cannot decrease below $0$, the coordinates of variables which we set to $0$ in $P$ must be $0$ also in $R$. As for the ones that are $1$, we claim that they cannot be decreased. Decreasing $X_{ru}$ must be accompanied with increasing one of $X_{ur}$ or some $Z_{kur}$ where $k \in (u \leftrightsquigarrow r)$, but these are zeros and as we argued must be $0$ also in $R$. Similarly, decreasing some $Z_{ruv} = 1$ necessitates increasing some among $X_{uv}$, $X_{vu}$, or $Z_{kuv}$ for $k \ne r$, but again these are
$0$ in $P$ and must be $0$ in $R$. Finally, the rest of the coordinates of $R$ are partitioned to $b$ groups by the topology $U^i$ to which their corresponding vertices belong. Since $P$ restricted to coordinates of topology $j$ is a vertex in $U^j$ we conclude that $R = \vec{0}$ and that $P$ is a vertex. 
\end{proof}

\begin{remark}
\label{remark_combine_trees}
Note that Theorem~\ref{theorem_combine_trees} does not argue that every vertex of $U$ is such a ``Cartesian product'', and it is not true in general. As a concrete example, consider $U^1=U^2=U^3$ to be trees with $2$ nodes each. They only have integer vertices so every combination is integer. However, the long-star which combines them via its center node, has non-integer vertices (Theorem~\ref{theorem_non_integer_optimum_long_star}).
\end{remark}


To conclude this section, we close a small debt from Section~\ref{subsection_general_counter_exaple} where we mentioned that primary directions may have additional similarities that are not automorphism symmetries.

\begin{remark}
\label{remark_pseudo_symmetry}
Consider two different shrinkings of $U$ (see Definition~\ref{definition_extended_graph}), denoted by $U_1$ and $U_2$, such that they are isomorphic and each has a different  subset of the nodes of the original $U$ ($U_1 \ne U_2$). Denote the isomorphism between $U_1$ and $U_2$ by $\pi$. By Theorem~\ref{theorem_subgraph_with_frac_vertex_implies_frac_vertex_extension}, for every vertex $P_i$ of $U_i$ ($i=1,2$) we have a vertex $P'_i$ of $U$ such that the coordinates of the variables that correspond to $U_i$
in $P_i$ and $P'_i$ are the same. 
Furthermore, the proof shows how to extend a direction $h_i$ for which $P_i$ is uniquely optimal, to a direction $h'_i$ for which $P'_i$ is optimal. The extension adds $0$'s in coordinates of $h'_i$ that do not correspond to coordinates of $h_i$. So if $P_1$ is a vertex of the LP of $U_1$ and $h_1$ is a direction in which $P_1$ it is the unique optimum, then $P_2 = \pi(P_1)$ is a vertex of $U_2$ and $h_2 = \pi(h_1)$ is a direction in which $P_2$ is uniquely optimal.
It follows that $P_1'$ and $P_2'$ are both vertices of $U$, $P_1'$ is optimal in the direction $h'_1$, and $P_2'$ is optimal in the direction $h'_2$. 
Note that
$h'_1$ and $h'_2$ are directions with the same coordinate-values, permuted. These values are the same as of $h_1$, and  $h_2$, padded with zeros. In many cases, this permutation does not correspond to an automorphism of $U$. As an example, name the edges of the topology $U_{(8,4)}$ as $\{(1,2),(2,3),(3,4),(4,5),(5,6),(3,7),(7,8)\}$. We can shrink it to $U_{(7,3)}$ by deleting one of the nodes in $\{4,5,6\}$, and as a result we get multiple similar but not automorphically symmetric primary directions. A concrete example of such directions is: $(3,2,0,0,2,3,3,10)$, $(3,2,0,2,0,3,3,10)$ and $(3,2,0,2,3,0,3,10)$ (all extend $(3,2,0,2,3,3,10)$ from Theorem~\ref{theorem_non_integer_optimum_long_star}). Such directions may find vertices that are not automorphically symmetric, and thus lead to a different behavior under rounding (e.g. with Definition~\ref{definition_rounding_scheme}).
\end{remark}
\section{The Dual LP}
\label{section_dual_LP}
In this section we study the dual formulation of the LP given in Definition~\ref{definition_LP_program}. Section~\ref{subsection_developing_the_dual_LP} develops the formulation slowly with verbose explanations. For the final formulation and some interpretations, skip to Definition~\ref{definition_dual_LP_program} and Section~\ref{subsection_interpreting_the_dual_LP}.

\subsection{Developing the Dual LP}
\label{subsection_developing_the_dual_LP}
First, we rewrite the primal LP:
\begin{enumerate}
    \item Remove $D$, and also change the problem to a maximization problem. As a consequence, the objective function becomes: maximize $\sum_{j \in U}{(-f_j) \cdot D_j} = \sum_{j \in U}{(-f_j) \cdot (\sum_{i \in U}{X_{ij}})}$.
    \item Rewrite the LCA inequalities as follows (for relevant values of $i,j,k$):\\ $(1)\  0 \le X_{ki} - Z_{kij} \ \ \ ; \ \ \ (2)\  0 \le X_{kj} - Z_{kij}$.
\end{enumerate}

Next, we define the variables of the dual. We have a variables $r_{ij}$ for each ancestry inequality over a pair of $i$ and $j$. There is no difference in the order of indices, so $r_{ji} \equiv r_{ij}$ and we only use one of them (similar to how $Z_{kij} \equiv Z_{kji}$). For the LCA inequalities we define the variables $q_{ikj}$ whose indices are ordered, and represent a directed ``path'' from $i$, through $k$, to $j$. For a triplet $i,k,j$, $q_{ikj}$ corresponds to the first LCA inequality, and $q_{jki}$ corresponds to the second inequality. Since all the inequalities are such that the  sum of the variables is greater or equal to the free coefficient, both $r$ and $q$ are non-positive coordinate-wise (we flip this later).

The dual objective function is to minimize $\sum_{i,j}{r_{ij}}$ since the free coefficient is $1$ only in (ancestry) inequalities that correspond to $r$, and $0$ otherwise. We derive the dual inequalities based on the primal variables $X$ and $Z$ as follows:
\begin{enumerate}
    \item $X_{ij}$ participates in the ancestry inequality that corresponds to $r_{ij}$, and also in the LCA inequalities of the form $X_{ij} - Z_{iaj}$ where $i \in (j \leftrightsquigarrow a)$,  which corresponds to $q_{jia}$, and not to be confused with $q_{aij}$ which corresponds to $X_{ia} - Z_{ija} \ge 0$. Lastly, $X_{ij}$ has a coefficient of $-f_j$ in the objective function, therefore we get the inequality: $r_{ij} + \sum_{a : i \in (j \leftrightsquigarrow a)}{q_{jia}} \ge (-f_j)$.

    \item $Z_{kij}$ participates in three inequalities: the ones that correspond to $r_{ij}$, $q_{ikj}$ and $q_{jki}$ (with negative sign in $q$). It is not part of the primal objective, therefore: $r_{ij} - q_{ikj} - q_{jki} \ge 0$.
\end{enumerate}

It is more natural to work with non-negative variables, so let us re-define $R \equiv -r$ and $Q \equiv -q$. In this case, we flip signs in the inequalities. We also need to flip the sign in the objective function, but instead we flip from minimization to maximization. The final formulation is  as follows.

\begin{definition}[STT Dual LP]
\label{definition_dual_LP_program}
    \mbox{}
    \begin{enumerate}
        \item Variables $R$ over pairs, $Q$ over colinear triplets: $\forall i, j: R_{ij}$, and $\forall k \in (i \leftrightsquigarrow j): Q_{ikj},Q_{jki}$.

        \item Objective function: maximize $\sum_{i,j \in U}{R_{ij}}$.

        \item Constraints:
        \begin{enumerate}
            \item (Bounds) $\forall i,j \in U: \forall k \in (i \leftrightsquigarrow j): 0 \le R_{ij},Q_{ikj},Q_{jki}$.

            \item (Capping) $\forall i,j \in U: \forall k \in (i \leftrightsquigarrow j): R_{ij} \le Q_{ikj} + Q_{jki}$.

            \item (Frequency) $\forall i,j \in U: R_{ij} + \sum_{a : i \in (j \leftrightsquigarrow a)}{Q_{jia}} \le f_j$.
        \end{enumerate}
    \end{enumerate}
\end{definition}

\subsection{The Dual LP: Interpretations and Insights}
\label{subsection_interpreting_the_dual_LP}

We list a few observations:
\begin{enumerate}
    
    \item The solution $R = \vec{0}$, $Q = \vec{0}$ is feasible.

    \item \label{dual_rooting} In the frequency constraints, $\{ a : i \in (j \leftrightsquigarrow a) \}$ is the set of all nodes that are ``behind'' $i$ from the perspective of $j$. Another way to view it: if we root the tree (topology) $U$ in node $j$, then this is exactly the subtree of $i$, excluding $i$.

    \item If $i$ and $j$ are neighbors, then $R_{ij}$ has no capping inequality. However, it is still bounded by frequency inequalities. If $i$ is a leaf, these are $R_{ij} \le f_j$ and $R_{ij} \le f_i - \sum_{a \in U \setminus \{i,j\}}{Q_{ija}}$.

    \item We may assume that every capping inequality holds as equality. Indeed, if $R_{ij} < Q_{ikj} + Q_{jki}$ we can decrease the right hand side until we get an equality. This can only help us increase the objective function, if we happen to get a loose frequency inequality in which we can further increase some $R_{ij}$. Given this tightness, we also get that $Q_{ik'j}+Q_{ik'j} = Q_{ikj}+Q_{ikj} (= R_{ij})$ for every pair $k,k' \in (i \leftrightsquigarrow j)$. To explain this freedom to tweak these constraints, recall that the primal constraints $Z \ge 0$ were unnecessary. When we omit them in the primal LP, the corresponding dual constraints that we get are equalities.

    \item Unlike the primal LP which ``usually'' has a unique solution, the dual LP ``usually'' has multiple solutions. As a simple example, consider a topology with $3$ nodes, with frequencies $f = [3,1,2]$. The primal optimum value is $4$ due to the unique optimal BST with depths $D = [0,2,1]$. The dual solution is $R_{12} = R_{23} = 1$, $R_{13} = 2$ with freedom to set $Q$ such that $Q_{123} \in [0,2]$, $Q_{321} \in [0,1]$ and $Q_{123}+Q_{321} \ge 2$. In larger topologies, there may be degrees of freedom on the $R$ variables as well. That is, the sum over all $R_{ij}$'s is fixed (the value of the dual LP), but there will be freedom of how to distribute the sum between the specific variables. Sometimes, the dual LP has a unique solution. Tweaking the current example to $f=[2,1,2]$, the unique solution has the same $R$, with a unique $Q$: $Q_{123}=Q_{321}=1$.

    \item Note that $R_{ij}$ participates in two symmetric-looking frequency inequalities: \\$R_{ij} + \sum_{a : i \in (j \leftrightsquigarrow a)}{Q_{jia}} \le f_j$ and  $R_{ij} + \sum_{a : j \in (i \leftrightsquigarrow a)}{Q_{ija}} \le f_i$. Observe that the $Q$ variables are not the same: In the first inequality they represent a directed relationship from $j$ to every node ``behind'' $i$, while in the second inequality it is a direction from $i$ to every node ``behind'' $j$. None of these $Q$ variables is related to the $Q$ variables in the capping inequalities of $R_{ij}$, where the third node is \emph{between} $i$ and $j$, not behind any of them. Also, not every node appears in some $Q$ related to $R_{ij}$: nodes that branch out of the path between $i$ and $j$ do not define a colinear triplet.

    \item Some of the constraints are redundant. To see it best, consider two leaves $i$ and $j$. They do not have $Q$ variables for nodes behind them, therefore we get the two inequalities: $R_{ij} \le f_j$ and  $R_{ij} \le f_i$. Depending on whether $f_i \ge f_j$ or $f_i \le f_j$, one inequality is redundant. More generally, even when $j$ is a leaf and $i$ is not, we could have redundancy if $f_j \le f_i$. Explicitly, in this case $R_{ij} \le f_i$ is redundant because it is implied by $R_{ij} \le f_j - \sum_{a : i \in (j \leftrightsquigarrow a)}{Q_{jia}} \le f_j \le f_i$.

    \item One can loosely interpret $f_j$ as an amount of tokens to be endowed by $j$ to $i$ and any node behind $i$. Note that there is an inequality for $j$ with every node, so if we think of the topology as tree rooted in $j$ (as explained in item~\ref{dual_rooting}), $j$ has the same amount to endow to each subtree, with overlaps between subtrees. $R_{ij}$ is the amount endowed to $i$, $Q_{jia}$ is the amount endowed to $a$, and full endowment means equality instead of inequality. The fact that $R_{ij} \equiv R_{ji}$ means that $i$ and $j$ must endow the same amount to each other. Because of the multiple budgets of each node and the overlap, this interpretation is not simply some kind of flow.
\end{enumerate}

\subsection{A Weak-Duality Invariant over STT Subtrees}
\label{subsection_dual_LP_analysis_BST}

The \emph{weak duality theorem} states that the dual value is bounded by the primal value. In this subsection we present this relation in an explicit STT-intuitive way. Specifically, given a fixed STT, $T$, we show that for each subtree $T_i$ of $T$, rooted at $i$, the contribution to the dual value of the variables $R_{ab}$ for $a$ and $b$ whose LCA (lowest common ancestor) is $i$, is bounded by the total frequency of the nodes in this subtree except for $i$ (which is the contribution to the primal value, of pushing the nodes one edge deeper when hanging the subtrees of $i$ under it). We denote $T'_i \equiv T_i \setminus\{i\}$:
$$ \sum_{\substack{a,b \in T'_i \\ LCA(a,b)=i}}{R_{ab}} =
\frac{1}{2} \cdot \sum_{a \in T'_i} \sum_{\substack{b \in T'_i \\ LCA(a,b)=i}} {R_{ab}}
\underset{(1)}{=} \frac{1}{2} \cdot \sum_{a \in T'_i} \sum_{\substack{b \in T'_i \\ LCA(a,b)=i}} {(Q_{aib} + Q_{bia})}
$$
$$
= \frac{1}{2} \cdot \sum_{a \in T'_i} \Big ( \sum_{\substack{b \in T'_i \\ LCA(a,b)=i}} {Q_{aib}} \Big ) + \frac{1}{2} \cdot \sum_{b \in T'_i} \Big ( \sum_{\substack{a \in T'_i \\ LCA(a,b)=i}} {Q_{bia}} \Big )
$$
$$
\underset{(2)}{\le} \frac{1}{2} \cdot \sum_{a \in T'_i}{(f_a - R_{ia})} + \frac{1}{2} \cdot \sum_{b \in T'_i}{(f_b - R_{ib})}
= \sum_{a \in T'_i}{f_a} - \sum_{a \in T'_i}{R_{ia}} \ ,
$$
where $(1)$ is by the capping equalities and $(2)$ is by the frequency inequalities. The factor of $\frac{1}{2}$ is due to double-counting in the double-sum. Re-arranging the inequality we get:
$$ \sum_{\substack{a,b \in T_i \\ LCA(a,b)=i}}{R_{ab}} \le \sum_{a \in T'_i}{f_a}
$$
This means that the increase in the dual value, when we ``step up'' from subtrees of $i$ to the subtree rooted at $i$ (that is we add $R_{ab}$'s for pair $a$ and $b$ whose LCA is $i$, including when one of them is $i$) is at most the delta of the primal value which follows from increasing the depth of each node in $T'_i$ by $1$ (sum of frequencies, right-hand-size). If we achieve equality in every step then we can conclude that $T$ is in fact an optimal STT.

If $T$ is arbitrary, the answer could be negative. We already know that some topologies are such that the primal optimum is not achieved by any STT, but let us give a simpler example. Consider the case where $f_r = 0$ and $r$ has a single child. Recall our context: given some STT, we focus on the subtree rooted at the node $r$. Then since $\forall i: R_{ri} + \sum{Q_{ria}} \le f_r = 0$, we get that $R_{ri} = 0$ and the dual contribution due to $R$ variables that involve $r$ cannot match the sum of frequencies of the (strict) descendants of $r$.

However, the hope is to be able to do so in the special case where an STT does achieves the optimum, and for this specific STT. In other words, when the structure is not arbitrary but based on what we know of the primal solution. In some cases it is trivial to achieve this delta, one such case is when $f_r$ is heavy enough such that $f_r \ge \sum_{i \in S}{f_i}$ for every $S$ that is an STT subtree rooted at a child of $r$. Assigning $R_{ri} = f_i$ and $Q_{rij} = f_j$ for every $i,j \in S$ yields a feasible solution, and the delta value of the LP is indeed $\sum_{i \in S}{R_{ri}} = \sum_{i \in S}{f_i}$. This assignment is clearly feasible since all the values are non-negative, capping holds because $\forall k \in (r \leftrightsquigarrow i): R_{ri} = f_i = Q_{rki} \le Q_{rki} + Q_{ikr}$, and frequency constraints hold by our assumption, indeed $\forall i: R_{ri} + \sum_{a:i \in (r \leftrightsquigarrow a)}{Q_{ria}} = f_i + \sum_{a:i \in (r \leftrightsquigarrow a)}{f_a} \le \sum_{a \in S_i}{f_a} \le f_r$ where $S_i$ is the subtree rooted at a child of $r$, that contains $i$. This example is no surprise, since a very heavy node should be the root of the STT in the primal solution.

\section{Additional Miscellaneous Results and Discussions}
In this section we list additional insights and results which were found while studying the LP's polytope, some analytical, some via computer analysis.

\subsection{Refining the LP to More Accurately Describe Search Trees}
\label{section_refining_the_LP}
In the discussion leading to Definition~\ref{definition_LP_program}, we noted that the LP does not fully capture the behavior of STTs. In the following, we list a few ``strong'' properties of STTs which were not taken into account, show how to weakly account for them, and finally explain why adding them to the definition of the LP does not remove all the non-integer vertices.

\begin{enumerate}
    \item Path monotonicity: Consider three nodes on a path, in order $a,b,c$. Then $X_{ac} = 1$ only if $a$ is the LCA of itself and $c$, and therefore if $X_{ac} = 1$ then   $X_{ab} = 1$. We get that $X_{ab} \ge X_{ac}$. More generally, if $a,b_1,b_2,\ldots,b_k$ is a path in the topology, $X_{ab_1} \ge X_{ab_2} \ge \ldots \ge X_{ab_k}$. As an example, on the path topology, the ancestry matrix $X$ is monotone-weakly-decreasing in each row, from the main diagonal to the left and to the right. Note that for pure STTs this path monotonicity implies a step function from $0$ to $1$ to $0$ where up to two of these segments may be empty, in particular the row may be constant. In general feasible solutions the steps may be more gradual, but the up-then-down monotonicity is preserved.

    \item Ancestry transitivity: in STTs, if $i$ is an ancestor of $j$ and $j$ is an ancestor of $k$, then $i$ is an ancestor of $k$. This can be written in the non-linear and non-convex form: $X_{ik} \ge \min(X_{ij},X_{jk})$. A linear compromise would be $X_{ik} \ge X_{ij} + X_{jk} - 1$. (While this constraint holds for any triplet, if $k \in (i \leftrightsquigarrow j)$ it is implied by path monotonicity, because then $X_{ik} \ge X_{ij}$, and $X_{jk} \le 1$ for a minimization objective.)

    \item LCA separation: For $k \in (i \leftrightsquigarrow j)$, if $k$ is an ancestor of $i$, then $i$ cannot be an ancestor of $j$. Assume that $i$ is an ancestor of $j$ too, then by the transitivity of ancestry $k$ would be an ancestor of $j$ as well, but since $k$ is between $i$ and $j$, it separates them in contradiction to $i$ being an ancestor of $j$. We can represent this in a linear form as $X_{ki} + X_{ij} \le 1$. Furthermore, we can also require $X_{ji} + X_{ij} \le 1$ for any STT, thus overall: $\forall k \in (i \leftrightsquigarrow j]: X_{ki} + X_{ij} \le 1$.

    \item Refining $Z$: we originally defined $Z_{kij} \le X_{ki},X_{kj}$ as a linearization of $Z_{kij} = \min \{ X_{ki},X_{kj} \} $. Considering STTs for which $X$ has $0/1$ values, we can restrict further and also demand: $Z_{kij} \ge X_{ki} + X_{kj} - 1$.

    \item Row-Min/Column-Max: Consider the ancestry matrix $X$ of some STT with root $r$. The minimum in each row $i \ne r$ is $0$ because $X_{ir} = 0$, and the minimum in row $r$ is $1$ because $\forall j: X_{rj} = 1$. Similarly, the maximum in each column $i \ne r$ is $1$, and $0$ in column $r$. We can define new variables $m_i$ for the minimum in row $i$ and $M_i$ for the maximum in column $i$, and add the constraints: $\forall j: X_{ji} \le M_i$, $\forall j: m_i \le X_{ij}$ and $\sum_{i}{M_i} = n-1$, $\sum_{i}{m_i} = 1$. (The counterparts max-row and min-column are less useful because they change for $i \ne r$ depending on the STT.)
\end{enumerate}

So far it looks very promising, few natural constraints of STTs were missing from the LP, which we can add. Moreover, we introduced only $O(|X|) + |Z|$ extra inequalities, so the problem is still polynomial.\footnote{$|X|$ and $|Z|$ denote the numbers of $X$ and of $Z$ variables, respectively.} However, these extra inequalities do not fix the polytope, and non-integer vertices remain plentiful where they exist. It may remove some, and create others (due to new hyperplanes), but it does not remove all of them which is what we hoped for, if this was an extended formulation. To see this, we only need to go back to our main example, the vertex $P$ in Figure~\ref{figure_explicit_counter_example}. It satisfies all the additional constraints, as we prove:

\begin{enumerate}
    \item Path monotonicity: $X_{12} = X_{13} = \frac{1}{2} > X_{14} = X_{15} = X_{16} = X_{17} = 0$; $X_{23} = 1 > X_{24} = X_{25} = X_{26} = \frac{1}{2} > X_{27} = 0$; $X_{3i} = 0$ for all $i \ne 3$; $X_{43} = 1 > X_{42} = X_{41} = X_{46} = \frac{1}{2} > X_{47} = 0$; $X_{54} = X_{53} = \frac{1}{2} > X_{52} = X_{51} = X_{56} = X_{57} = 0$; $X_{63} = 1 > X_{6i} = \frac{1}{2}$ for all $i \ne \{3,6\}$; $X_{7i} = \frac{1}{2}$ for all $i \ne 7$.
    
    \item Ancestry transitivity: note that if $X_{ij},X_{jk} \le \frac{1}{2}$ then the inequality becomes trivial because $X_{ik} \ge 0$ anyway. So the only entries that can possible make $P$ infeasible are pairs that include at least one of the entries $X_{23}=X_{43}=X_{63}=1$. If $i \in \{2,4,6\}$ and $j=3$ then $X_{jk}=0$ regardless of $k$ and we get the constraint $X_{ik} \ge 0$ (always true). If $j \in \{2,4,6\}$ and $k=3$ we need to check that $X_{i3} \ge X_{ij}$ for any $i \notin \{j,3\}$. This holds because $X_{i3} \ge \frac{1}{2} \ge X_{ij}$ (here $j \ne 3$).
        
    \item LCA separation: If $X_{ki},X_{ij} \le \frac{1}{2}$ then $X_{ki} + X_{ij} \le 1$ is satisfied. Then, to verify that $P$ is feasible, we only need to consider the case that $i=3$ or $j=3$. If $i=3$ then $X_{ij} = 0$ for any $j$ and the inequality is satisfied. For $j=3$ we consider $i \in \{2,4,6\}$, which is a neighbor of $j$, hence $k = j = 3$ is the only option for $k$, and the inequality holds.

    \item Refining $Z$: If $X_{ki},X_{kj} \le \frac{1}{2}$ then $Z_{kij} \ge 0$ is trivial. Then focus on $i=3$ ($j=3$ is symmetric), with $k \in \{2,4,6\}$ and the inequality reduces to $Z_{k3j} \ge X_{kj}$. The requirement $k \in (i \leftrightsquigarrow j)$ together with $i=3$ implies that $(k,j) \in \{ (2,1), (4,5), (6,7) \}$. Then we can verify that $Z_{213}  = X_{21} = Z_{435} = X_{45} =Z_{637}= X_{67}   = \frac{1}{2}$ hence the inequalities hold.

    \item Row-Min/Column-Max: the min-row vector is $\vec{q} = [0,0,0,0,0,\frac{1}{2},\frac{1}{2}]$, hence we can choose $\vec{m} = \vec{q}$ to satisfy the row-min part. The max-column vector is $\vec{Q} = [\frac{1}{2},\frac{1}{2},1,\frac{1}{2},\frac{1}{2},\frac{1}{2},\frac{1}{2}]$, and there are plenty of choices for $M$ such that $\forall i: M_i \ge Q_i$ and $\sum_i{M_i} = n-1$, which satisfy the column-max part.
\end{enumerate}

In conclusion, even though we refined the LP to capture more accurately the behavior of STTs, we did not get an extended formulation that removes every non-STT vertices, and an integrality gap remains. One reason may be that we did not capture all possible refinements. Indeed, when we compute the convex hull of the $X$s induced by STTs, there are many additional constraints. To name just one such family, for every permutation $\pi$ without fixed points ($\forall i: \pi_i \ne i$), $\sum_i{X_{i,\pi_i}} \ge 1$. The intuitive interpretation is that some node must be the root $r$ of the STT, therefore $X_{r,\pi_r} = 1$. The size of this set is roughly $\frac{n!}{e}$.

\subsection{Eliminating the \emph{Z} Variables from the Formulation}
\label{subsection_with_and_without_z_fourier_motzkin}
In the LP definition in Section~\ref{subsection_linear_program}, we used $Z_{kij}$ variables whose conceptual purpose is to represent an LCA relationship of $k$ over $i$ and $j$, practically a linear replacement to a term $\min\{X_{ki},X_{kj}\}$ which cannot simply be used in an LP formulation. It is natural to ask whether we can eliminate these variables and replace them by additional constraints on the $X_{ij}$ variables. This may have drastic effects, in particular making many integer vertices of the original LP polytope non-vertices, as discussed in Remark~\ref{remark_int_vertices_disappear_when_no_Z}. To get rid of all the $Z$ variables, we can take every constraint that  contains $d$ minimum-terms (or rather $d$ $Z$-variables each representing the minimum of a pair of $X$-variables), and replace it by $2^d$ inequalities where each inequality corresponds to a choice of a particular $X$-variable from each  minimum-pairs as in Example~\ref{example_Z_elimination}.\footnote{We can formally and compactly define all the inequalities as follows: Let $G$ be the set of $2^d$ functions from $(i \leftrightsquigarrow j)$ to $\{i,j\}$. Then for each $g \in G$ we get the inequality $X_{ij}+X_{ji} + \sum_{k \in (i \leftrightsquigarrow j)}{X_{k,g(k)}} \ge 1$.}
Even though we get an exponential number of constraints, we can still use a separation oracle (cutting-plane oracle) to solve the LP in polynomial time, since identifying the tightest inequality out of each such group of $2^d$ inequalities that replace a single original inequality is easy:
 just take the inequality where the smallest of each pair $X_{ki}$ and $X_{kj}$ participates.

\begin{example}
\label{example_Z_elimination}
Consider an ancestry constraint on a pair $a$ and $d$ with two nodes $b,c$ on the path between them: $X_{ad} + X_{da} + \min\{ X_{ba},X_{bd} \} + \min\{ X_{ca},X_{cd} \} \ge 1$. The constraint is satisfied if and only if the following set of four constraints is satisfied:
$X_{ad} + X_{da} + X_{ba} + X_{ca} \ge 1$ and
$X_{ad} + X_{da} + X_{ba} + X_{cd} \ge 1$ and
$X_{ad} + X_{da} + X_{bd} + X_{ca} \ge 1$ and
$X_{ad} + X_{da} + X_{bd} + X_{cd} \ge 1$.
\end{example}

The Fourier-Motzkin elimination technique~\cite{BookForFourierMotzkin} can also be used to eliminate the $Z$ variables. One can verify that its application yields the same set of inequalities.

Observe that once we eliminate the $Z$ variables, ancestry constraints must be inequalities. Previously, we could make them equalities $(=1)$, but now we cannot do this since we do not know which among the $2^d$ constraints is tight. Overall, the trade-off of eliminating $Z$ gives us a more complex presentation of the polytope (more inequalities) in exchange for less dimensions (no $Z$) as well as less vertices overall as it turns out.

\begin{remark}
\label{remark_int_vertices_disappear_when_no_Z}
It is interesting to mention that the proof of Theorem~\ref{theorem_non_tree_vertex} breaks under the new formulation of the LP without $Z$ variables. There, the second case in the proof completely disappears because there are no $Z$ variables to manipulate. The first case of cyclic ancestry is feasible, but it might not be a vertex anymore. For an example of vertices that become non-vertices without $Z$, revisit Table~\ref{table_vertices_n3}: The two ``LCA abuse'' vertices become dominated by the vertex that corresponds to the STT whose root is $2$ when we eliminate $Z$. More generally, the set $\mathbb{P}'$ as defined in the proof of Theorem~\ref{theorem_non_tree_vertex} gets much smaller or even empty, and no longer contains all the STT induced vertices.
\end{remark}

Overall, it is probably better to eliminate the $Z$ variables from the LP. It is ``more true'' (by capturing the minimum as intended), and it also affect on the density of non-integer vertices, as we explain next. As a warm-up, note that given that some nodes disappear, the relative density of integer versus non-integer nodes may change. We can refine this differentiation by categorizing each vertex according to its highest denominator. In this case, consider for example the path topology over $n=5$ nodes. By enumerating the vertices of its corresponding polytope, we find that without eliminating $Z$, there are $5983$ integer vertices (out of which only $42$ are due to STTs), $3886$, half-integer vertices, and $76$ third-integer vertices. Compactly, we write $[5983,3886,76]$.\footnote{There are no vertices with both half-integer and third-integer coordinates. This scenario is possible in vertices for larger topologies, by Theorem~\ref{theorem_combine_trees}.
Also, non of these non-integer vertices is projected onto a vertex in $D$-space, so in $D$-space they are all non-vertices.} Counting the vertices of the polytope in the $Z$-eliminated version, we get much less vertices, distributed $[519,158,7]$ (still $42$ STT vertices, out of the $519$). Observe that, for example, the percentage of non-integer vertices drops from $40\%$ to $24\%$, and that third-integer vertices rise from $0.76\%$ to $1.02\%$. Changed percentages may affect our probability of finding vertices
when we use a random direction as the objective.

\begin{table}[!ht]
    \scriptsize
    \begin{center}
    \begin{tabular}{|c|l|l||l|l||c|c|}

    \hline
    \multicolumn{1}{|c|}{} & \multicolumn{2}{|c||}{$(X,Z,D)$-Direction} & \multicolumn{2}{|c||}{$(X,D)$-Direction} & \multicolumn{2}{|c|}{$D$-Direction} \\
    \hline
    Path length ($n$) &
    LP &
    $Z$-elim. &
    LP &
    $Z$-elim. &
    LP &
    $Z$-elim. \\
    \hline
    \hline
    6 & [1, 2] & [1, 2] & [1, 2] & [1, 2] & [1] & [1] \\
    \hline
    7 & [1, 2, 3, 4] & [1, 2] & [1, 2] & [1, 2] & [1] & [1] \\
    \hline
    8 & [1, 2, 3, 4, 5] & [1, 2, 3] & [1, 2, 3] & [1, 2, 3] & [1] & [1] \\
    \hline
    9 & [1, 2, 3, 4, 5] & [1, 2, 3] & [1, 2, 3] & [1, 2, 3] & [1] & [1] \\
    \hline
    10 & [1, 2, 3, 4, 5, 6] & [1, 2] & [1, 2] & [1, 2] & [1] & [1] \\
    \hline
    11 & [1, 2, 3, 4, 7] & [1, 2, 3] & [1, 2, 3] & [1, 2, 3] & [1] & [1] \\
    \hline
    12 & [1, 2, 3, 4, 5, 6, 7, 8, 9, 11, 12, 14] & [1, 2] & [1, 2] & [1, 2] & [1] & [1] \\
    \hline
    13 & [1, 2, 3, 4, 5, 6, 7, 8, 9, 11, 12, 16, 20] & [1, 2] & [1, 2] & [1, 2] & [1] & [1] \\
    \hline
    14 &  [1, 2, 3, 4, 5, 6, 8, 9, 10,11, 12, 13,18, 20, 23, 43] & [1, 2] & [1, 2] & [1, 2] & [1] & [1] \\
    \hline
    15 & \begin{tabular}{@{}l@{}}[1, 2, 3, 4, 5, 6, 7, 8, 9, 10, 11, 12, 14, 16, 20, \\ 22, 23, 28, 40, 47, 80] \end{tabular} & [1, 2, 3] & [1, 2, 3] & [1, 2, 3] & [1] & [1] \\
    \hline
    16 & 
    \begin{tabular}{@{}l@{}l@{}}[1, 2, 3, 4, 5, 6, 7, 8, 9, 10, 11, 12, 13, 14, 15, \\ 17, 18, 20, 22, 23, 24, 26, 28, 34, 36, 39, 46, 58, \\ 67, 87, 92, 111, 134, 174, 257, 261, 522, 999] \end{tabular}
     & [1, 2, 3] & [1, 2, 3] & [1, 2, 3] & [1] & [1] \\
    \hline
    \end{tabular}
    \end{center}
    \caption{\footnotesize{A summary of denominators of vertex coordinates found when solving the LP for path topologies, of different lengths, in six variations. The vertices were found by sampling directions, which is not exhaustive as can be seen since for $n=5$ we know of third-integer vertices in $(X,Z,D)$-space, but here for $n=6$ we have not found such vertices. We chose the objective (weights) in either: (1) a general direction with arbitrary weights for $(X,Z,D)$, (2) $(X,D)$-direction with weights $0$ for all $Z$, and (3) depths-only direction with non-zero weights only for $D$. In each of these three methods, we solve the LP as in Definition~\ref{definition_LP_program} (LP column), and its variant where we eliminate $Z$ as in Section~\ref{subsection_with_and_without_z_fourier_motzkin} ($Z$-elim. column). It is clear that an arbitrary direction reveals fractions with increasing denominators, yet they seem to be artifacts of the presence of the $Z$, and are either non-existent or very rare when eliminating $Z$ or when considering $(X,D)$ or $D$ directions. Furthermore, all the vertices found in $D$-directions are integer. Among the last five columns, $2$ and $3$ denominators never appear together, i.e., every vertex is either integer, half-integer or third-integer. However, this fact is an artifact of sampling the rare mixed cases, which exist by Theorem~\ref{theorem_combine_trees} (e.g., combine two paths of length $n=5$ with an extra node to get a path of $n=11$).}}
    \label{table_path_tree_fractions}
\end{table}

We cannot enumerate all the vertices in larger topologies, and the interesting question is whether the ``\emph{fractionality}'' of the polytope is affected only by changed percentages, or whether some types of fractions completely disappear or being introduced when we eliminate $Z$. By solving the LP in random directions, it looks like there is indeed a strong difference in behavior. We focused our numerical-analysis on topologies that are paths, see Table~\ref{table_path_tree_fractions}. There, when solving the LP with objectives in arbitrary directions ($(X,Z,D)$ may all have positive weights) we find that the denominators grow when the topology grows. However, when the objective is only in $(X,D)$ directions, 
which are the only direction that remain when
we eliminate $Z$ from the LP, the fractionality only contains integer, half-integer and third-integer vertices. Finally, when solving either version of the LP in a pure $D$-direction, which is the ``intended'' kind of objective, we only find integer vertices. These results hint that something is inherently ``less fractional'' when we eliminate $Z$, even if they are based on sampling rather than a full enumeration or an analytic proof. These results also hint at the possibility that the LP perhaps finds an optimal STT when path topologies are considered.

\subsection{Are Path-monotonicity Constraints Helpful?}
In Section~\ref{section_refining_the_LP} we listed several possible refinements of the LP. Among them, the set of path-monotonicity constraints is arguably one of the most natural and simple to grasp. We have already seen that adding these constraints does not remove all non-integer vertices, but one can wonder whether it helps. It turns out to be a double-edged sword.

At a first glance, it has benefits. The small topologies with non-integer vertices that we found have $74$ such vertices in total after sieving symmetric copies (revisit Table~\ref{table_trees_summary_analysis_vertices_etc_FEW}). For each specific topology, more than half satisfy these constraints, so no topology becomes ``truly nicer''. Testing singular vertices is sort of cheating, because we need to see how the polytope changes as a whole. When considering the path over $5$ nodes, we get promising results. Consider the version without $Z$ variables, as discussed in details in Section~\ref{subsection_with_and_without_z_fourier_motzkin}. When we add path-monotonicity constraints, we still have non-integer vertices overall, but instead of having $165$ as such ($158$ half-integer and $7$ third-integer), the new polytope only has $4$ non-integer vertices, all half-integer. This is not quite fully-integer, but it looks like a promising simplification.

Alas, some topologies are badly affected. Consider topology $U_{(5,1)}$ ($5$ nodes, ``T''-shaped) which was found to have an integer polytope for the original LP. If we add path-monotonicity constraints, its $(X,Z,D)$-polytope now has half-integer and third-integer vertices. So, to conclude this discussion: while path-monotonicity seem like a clean and simple addition to the LP, it is not clear whether it helps or complicates things.

\subsection{Multiple Vertices with the Same Depths}
Throughout the paper, we sometime discuss vertices in $(X,Z,D)$-space, and sometimes in $D$-space. In particular, the normals method uses $D$-space to solve and find ``$(X,Z,D)$-vertices'', which we then project back to $D$-space. It is natural to ask whether the projection to $D$-space results in collisions of vertices, that is, if there exists $(X,Z,D)$-vertices, $P \ne P'$, such that $P_D = P'_D$. Because the LP without $Z$ variables is arguably nicer and ``more true'' (recall Section~\ref{subsection_with_and_without_z_fourier_motzkin}), let us wonder about collisions of vertices of the $(X,D)$-polytope of the LP when projected to $D$-space.

For STTs the answer is negative: given a depths-vector $D$ induced by an STT, there is exactly one $(X,D)$-vertex in its pre-image. Indeed, the depths-vector $D$ uniquely implies the value of all the $X$ variables, because there is a unique index $r$ (the root) such that $D_r=0$, so it must be that $\forall i \ne r: X_{ir}=0,X_{ri}=1$. Then we recurse in each connected component, now looking for a vertex of depth one in each connected component etc.

In contrast, general vertices may collide. When we solved the LP without the $Z$ variables, in all the primary directions that discover non-STT vertices, we found a few collisions. Recall that the objectives that we used had non-zero coefficients only for the $D$-variables, so to truly study the collisions one should use general objectives. Furthermore, note that refinements of the LP such as those discussed in Section~\ref{section_refining_the_LP} may affect these collisions.

\section{Open Questions}
\label{section_open_questions}

In this section we summarize the main open questions regarding the LP approach towards solving Problem~\ref{problem_main_problem_stt}. Of course, solving the problem with any tool/approach, is still open. Moreover, we assumed that all queries are successful, that is, there are no searches of non-existing nodes. If the optimum can be found for this case, it may also be interesting to consider an extension with search failures in a way analogous to BSTs, where the misses occur over an edge (between neighbors) or ``beyond'' a leaf, with their own frequencies.

\paragraph{Optimality Questions and Alternative Approaches}
\begin{enumerate}
    \item Can we extend the LP, or formulate an alternative which could be shown to have no integrality gap? Can the dual formulation (Section~\ref{section_dual_LP}) be useful?
    
    \item Can more general tools help? For example Quadratic Programming (QP) or Semi-Definite Programming (SDP)?

    \item Is it useful to view the optimization problem in bilinear terms of minimizing the expression $\mathbf{1}^T \cdot X \cdot f$ where $\mathbf{1}^T$ is an all-ones row vector, $f$ are the frequencies, and $X$ is the matrix of the $X_{ij}$ variables subjected to the LP constraints?

    \item The paper \cite{DPtoLP2010} gives a general way to transform dynamic programming (DP) formulations to linear programs (LP), thus one may take Knuth's DP  for BSTs~\cite{KnuthOptStatic1971}, or the $k$-cut trees DP of \cite{SplayTreesonTrees_andPTAS}, and try to generalize it to all STTs. For the details of the latter DP, see Chapter 5 in the thesis of Berendsohn~\cite{BerendsohnThesis}.

    \item \label{item_subclass} Can we prove that the LP approach with root rounding is optimal for a sub-class of topologies other than stars (Theorem~\ref{theorem_star_has_integer_vertices})? The most interesting and promising candidates are probably (a) the class of paths, and (b) the class of graphs with edge-diameter of $3$ (``almost stars'') for which we do not know if the LP itself is integer.\footnote{The LP is known to be non-integer for edge-diameter $4$ and above by studying the path over $5$ nodes. This is discussed in more details in Section~\ref{subsection_with_and_without_z_fourier_motzkin}. Moreover the $D$-space projection of the LP is non-integer for edge-diameter $5$ and above by Theorem~\ref{theorem_non_integer_optimum_long_star}.}
    
    \item Are there necessary or sufficient conditions on a topology such that the projection of its LP to $D$-space is integral (and thus optimal)? Figure~\ref{figure_all_trees_up_to_n8} shows that many non-integer LPs do become integer when projected.
\end{enumerate}

\paragraph{Improved Approximations and Rounding Alternatives}
\begin{enumerate}
    \item What is the true integrality gap of the LP? Denote it $\rho$, we know that $\rho \le 2$ by Property~\ref{property_tree_2_approx} and that $\rho \ge \frac{95}{93} \approx 1.02$ by Table~\ref{table_integrality_gaps}. Can we tighten these bounds?

    \item Can the root rounding (Definition~\ref{definition_rounding_scheme}) be refined, to define tie breaking more carefully, to guaranteed an integrality gap better than $2$?

    \item Is there a way to take the frequencies into account when rounding? Observe that the root rounding is agnostic, which is somewhat less natural. Studying the dual LP in this context may be relevant since its polytope depends on the frequencies.

    \item \label{item_peeling} Is there an alternative rounding scheme which is better (absolutely, or by being easier to analyze)? As a concrete suggestion, is there a way to take a feasible solution and (almost) decompose it to STT induced points? For example, ``peel'' off STTs one by one until the remainder is small. Then we can choose the best STT in the decomposition. Another alternative decomposition rounding would be to maintain a collection of STTs whose convex combination is ``close'' to the solution we wish to round. Start by choosing a single node of the topology (a singleton STT is exact for it), then extend the topology by one node at a time, where in each step we update the set of STTs by adding the new node to each STT, possibly forking some of them if we need different STTs that are the same up to the location of the new node. As an example, the non-integer vertex $P$ shown in Figure~\ref{figure_explicit_counter_example} with depths $D^P = (2,2,4.5,2,2,1.5,0.5)$ can be presented as almost-average of STTs $B$ and $C$ with depths vectors $D^B = (2,3,4,1,2,0,1)$, $D^C = (2,1,5,3,2,4,0)$ ($P$ is better: $D^P = \frac{1}{2}(D^B+D^C) - (0,0,0,0,0,\frac{1}{2},0)$).

    \item Is there an iterative rounding scheme that improves the approximation or even achieves optimality? The root rounding (Definition~\ref{definition_rounding_scheme}) takes a fractional solution of the LP, and rounds it fully. Iterative rounding does not help the root rounding, but perhaps it could help in other ways. For example, if could determine a good choice of root, we could apply this method repeatedly, $O(n)$ times, each time fixing the root of a subtree of the resulting STT. This will still take polynomial time in total. Ideally we could hope to determine an optimal choice for the root, but any choice that would lead to an approximation ratio better than $2$ would improve the state of the art.
    A related question is whether there a way to remove an arbitrary (non-root) node, or contract an edge of the topology and recur on the smaller problem?
    
\end{enumerate}

 \paragraph{Miscellaneous LP Properties}
\begin{enumerate}
    \item Considering the LP version without the $Z$ variables (Section~\ref{subsection_with_and_without_z_fourier_motzkin}), every coordinate of every vertex that we found was either integer, half-integer, or third-integer. When considering vertices that are solutions of the LP in $D$-direction, we only encounter integer and half-integer coordinates (Remark~\ref{remark_new_vertices_are_non_integer_half_integer_only}). Can we prove that these properties hold in general?

    \item Recall Definition~\ref{definition_partially_integer_vertex} of partially integer vertices: these are $(X,Z,D)$ vertices whose $D$ coordinates are integer. In Example~\ref{figure_Example_partially_interger_vertex} we present such a vertex, yet when projected to $D$-space it is no longer a vertex. Are there partially integer vertices that remain vertices under the projection? Property~\ref{property_integer_domination} does not rule them out. It would be interesting to find out either way.

\end{enumerate}

\section*{Acknowledgments}
We thank Anupam Gupta for useful discussions.
\bibliography{reference}

\appendix
\section{Appendix: Extras}
\label{section_appendix_extras}

\subsection{Table for all Small Topologies}

Table~\ref{table_trees_summary_analysis_vertices_etc_FULL} extends Table~\ref{table_trees_summary_analysis_vertices_etc_FEW} for all tree topologies of size $n \le 8$. For the sake of completeness, Table~\ref{table_trees_edges} explicitly lists the edges of each topology to reduce any ambiguity due to automorphism on the way we present the topologies unnamed in Figure~\ref{figure_all_trees_up_to_n8}.

\begin{table}[!ht]
    \scriptsize
    \begin{center}
    \begin{tabular}{|c|c|c|c|c|c|l|l|}

    \hline
    \begin{tabular}{@{}c@{}}Topology \\ $U_{(n,i)}$ \end{tabular} &
    STTs &
    \begin{tabular}{@{}c@{}}Primary \\ Directions \end{tabular} &
    \begin{tabular}{@{}c@{}}False \\ Facets\end{tabular} &
    \begin{tabular}{@{}c@{}}Frac \\ Vs\end{tabular} &
    \begin{tabular}{@{}c@{}}Frac Vs \\ Classes\end{tabular} &
    \begin{tabular}{@{}c@{}}$D$-space \\ denom. \end{tabular} &
    \begin{tabular}{@{}c@{}}$XZD$-space \\ denom. \end{tabular} \\
    \hline
    \hline
    (3,0) & \ \ \ \ 5 & \ \ \ \ 9 & \ \ 0 & . & . &\{1\} & \{1\} {*} \\
    \hline
    (4,0) & \ \ \ 14 & \ \ \ 32 & \ \ 0 & . & . &\{1\} & \{1\} {*} \\
    \hline
    (4,1) & \ \ \ 16 & \ \ \ 32 & \ \ 0 & . & . &\{1\} & \{1\} {*} \\
    \hline
    (5,0) & \ \ \ 42 & \ \ 145 & \ \ 0 & . & . &\{1\} & \{1, 2, 3\} {*} \\
    \hline
    (5,1) & \ \ \ 51 & \ \ 152 & \ \ 0 & . & . &\{1\} & \{1\} {*} \\
    \hline
    (5,2) & \ \ \ 65 & \ \ 161 & \ \ 0 & . & . &\{1\} & \{1\} {*} \\
    \hline
    (6,0) & \ \ 132 & \ \ 776 & \ \ 0 & . & . &\{1\} & \{1, 2\} \\
    \hline
    (6,1) & \ \ 166 & \ \ 910 & \ \ 0 & . & . &\{1\} & \{1, 2\} \\
    \hline
    (6,2) & \ \ 176 & \ \ 908 & \ \ 0 & . & . &\{1\} & \{1, 2\} \\
    \hline
    (6,3) & \ \ 214 & \ \ 949 & \ \ 0 & . & . &\{1\} & \{1\} \\
    \hline
    (6,4) & \ \ 236 & \ \ 978 & \ \ 0 & . & . &\{1\} & \{1\} \\
    \hline
    (6,5) & \ \ 326 & \ 1071 & \ \ 0 & . & . &\{1\} & \{1\} \\
    \hline
    (7,0) & \ \ 429 & \ 4839 & \ \ 0 & . & . &\{1\} & \{1, 2, 3, 4\} \\
    \hline
    (7,1) & \ \ 552 & \ 5932 & \ \ 0 & . & . &\{1\} & \{1, 2, 3\} \\
    \hline
    (7,2) & \ \ 605 & \ 6224 & \ \ 0 & . & . &\{1\} & \{1, 2, 3\} \\
    \hline
    (7,3) & \ \ 662 & \ 6364 & \ 39 & 9 & 2 & \{1, 2\} & \{1, 2\} \\
    \hline
    (7,4) & \ \ 836 & \ 6817 & \ \ 0 & . & . &\{1\} & \{1, 2\} \\
    \hline
    (7,5) & \ \ 807 & \ 7002 & \ \ 0 & . & . &\{1\} & \{1, 2\} \\
    \hline
    (7,6) & \ \ 930 & \ 6933 & \ \ 0 & . & . &\{1\} & \{1, 2\} \\
    \hline
    (7,7) & \ \ 721 & \ 7077 & \ \ 0 & . & . &\{1\} & \{1, 2\} \\
    \hline
    (7,8) & \ 1135 & \ 7534 & \ \ 0 & . & . &\{1\} & \{1\} \\
    \hline
    (7,9) & \ 1337 & \ 7579 & \ \ 0 & . & . &\{1\} & \{1\} \\
    \hline
    (7,10) & \ 1957 & \ 8733 & \ \ 0 & . & . &\{1\} & \{1\} \\
    \hline
    (8,0) & \ 1430 & 35097 & \ \ 0 & . & . &\{1\} & \{1, 2, 3, 4, 5\} \\
    \hline
    (8,1) & \ 1870 & 44103 & \ \ 0 & . & . &\{1\} & \{1, 2, 3\} \\
    \hline
    (8,2) & \ 2094 & 46368 & \ \ 0 & . & . &\{1\} & \{1, 2, 3\} \\
    \hline
    (8,3) & \ 2164 & 47535 & \ \ 0 & . & . &\{1\} & \{1, 2, 3\} \\
    \hline
    (8,4) & \ 2416 & 48291 & 362 & 65 & 38 & \{1, 2\} & \{1, 2, 3\} \\
    \hline
    (8,5) & \ 2952 & 56376 & 120 & 2 & 1 & \{1, 2\} & \{1, 2, 3\} \\
    \hline
    (8,6) & \ 2802 & 56724 & \ 10 & 2 & 1 & \{1, 2\} & \{1, 2, 3, 4\} \\
    \hline
    (8,7) & \ 3232 & 57252 & \ \ 0 & . & . &\{1\} & \{1, 2, 3\} \\
    \hline
    (8,8) & \ 2952 & 51172 & \ \ 0 & . & . &\{1\} & \{1, 2, 3\} \\
    \hline
    (8,9) & \ 3490 & 53029 & \ \ 0 & . & . &\{1\} & \{1, 2, 3\} \\
    \hline
    (8,10) & \ 2470 & 53923 & \ \ 0 & . & . &\{1\} & \{1, 2, 3, 4\} \\
    \hline
    (8,11) & \ 3988 & 54201 & \ 78 & 18 & 4 & \{1, 2\} & \{1, 2\} \\
    \hline
    (8,12) & \ 3332 & 56404 & 528 & 60 & 24 & \{1, 2\} & \{1, 2, 3\} \\
    \hline
    (8,13) & \ 4076 & 65733 & 946 & 28 & 4 & \{1, 2\} & \{1, 2\} \\
    \hline
    (8,14) & \ 4674 & 64110 & \ \ 0 & . & . &\{1\} & \{1, 2\} \\
    \hline
    (8,15) & \ 4884 & 62553 & \ \ 0 & . & . &\{1\} & \{1, 2\} \\
    \hline
    (8,16) & \ 3996 & 63179 & \ \ 0 & . & . &\{1\} & \{1, 2, 3\} \\
    \hline
    (8,17) & \ 5940 & 59967 & \ \ 0 & . & . &\{1\} & \{1\} \\
    \hline
    (8,18) & \ 5142 & 58200 & \ \ 0 & . & . &\{1\} & \{1, 2, 3\} \\
    \hline
    (8,19) & \ 6842 & 71285 & \ \ 0 & . & . &\{1\} & \{1\} \\
    \hline
    (8,20) & \ 7284 & 68654 & \ \ 0 & . & . &\{1\} & \{1\} \\
    \hline
    (8,21) & \ 8970 & 68714 & \ \ 0 & . & . &\{1\} & \{1\} \\
    \hline
    (8,22) & 13700 & 83434 & \ \ 0 & . & . &\{1\} & \{1\} \\
    \hline
    \end{tabular}
    \end{center}
    \caption{
    \footnotesize{\largeTablecaption{Summary of all tree topologies up to $n \le 8$ nodes.}}}
    \label{table_trees_summary_analysis_vertices_etc_FULL}
\end{table}

\begin{table}[!ht]
    \scriptsize
    \begin{center}
    \begin{tabular}{|c|c|l|||c|c|l|}

    \hline
    $U_{(n,i)}$ & $D$ & Edges & $U_{(n,i)}$ & $D$ & Edges \\
    \hline
    \hline
    (3,0) & 2 & (1,2), (2,3) & (8,0) & 7 & (1,2), (2,3), (3,4), (4,5), (5,6), (6,7), (7,8) \\
    \hline
    (4,0) & 3 & (1,2), (2,3), (3,4) & (8,1) & 6 & (1,2), (2,3), (2,8), (3,4), (4,5), (5,6), (6,7) \\
    \hline
    (4,1) & 2 & (1,2), (2,3), (2,4) & (8,2) & 6 & (1,2), (2,3), (3,4), (3,8), (4,5), (5,6), (6,7) \\
    \hline
    (5,0) & 4 & (1,2), (2,3), (3,4), (4,5) & (8,3) & 6 & (1,2), (2,3), (3,4), (4,5), (4,8), (5,6), (6,7) \\
    \hline
    (5,1) & 3 & (1,2), (2,3), (2,5), (3,4) & (8,4) & 5 & (1,2), (2,3), (3,4), (3,7), (4,5), (5,6), (7,8) \\
    \hline
    (5,2) & 2 & (1,2), (2,3), (2,4), (2,5) & (8,5) & 5 & (1,2), (2,3), (2,7), (3,4), (3,8), (4,5), (5,6) \\
    \hline
    (6,0) & 5 & (1,2), (2,3), (3,4), (4,5), (5,6) & (8,6) & 5 & (1,2), (2,3), (2,7), (3,4), (4,5), (4,8), (5,6) \\
    \hline
    (6,1) & 4 & (1,2), (2,3), (2,6), (3,4), (4,5) & (8,7) & 5 & (1,2), (2,3), (3,4), (3,7), (4,5), (4,8), (5,6) \\
    \hline
    (6,2) & 4 & (1,2), (2,3), (3,4), (3,6), (4,5) & (8,8) & 5 & (1,2), (2,3), (2,7), (2,8), (3,4), (4,5), (5,6) \\
    \hline
    (6,3) & 3 & (1,2), (2,3), (2,5), (3,4), (3,6) & (8,9) & 5 & (1,2), (2,3), (3,4), (3,7), (3,8), (4,5), (5,6) \\
    \hline
    (6,4) & 3 & (1,2), (2,3), (2,5), (2,6), (3,4) & (8,10) & 5 & (1,2), (2,3), (2,7), (3,4), (4,5), (5,6), (5,8) \\
    \hline
    (6,5) & 2 & (1,2), (2,3), (2,4), (2,5), (2,6) & (8,11) & 4 & (1,2), (2,3), (3,4), (3,6), (3,7), (4,5), (7,8) \\
    \hline
    (7,0) & 6 & (1,2), (2,3), (3,4), (4,5), (5,6), (6,7) & (8,12) & 4 & (1,2), (2,3), (2,6), (3,4), (3,7), (4,5), (7,8) \\
    \hline
    (7,1) & 5 & (1,2), (2,3), (2,7), (3,4), (4,5), (5,6) & (8,13) & 4 & (1,2), (2,3), (2,6), (3,4), (3,7), (4,5), (4,8) \\
    \hline
    (7,2) & 5 & (1,2), (2,3), (3,4), (3,7), (4,5), (5,6) & (8,14) & 4 & (1,2), (2,3), (2,6), (2,7), (3,4), (3,8), (4,5) \\
    \hline
    (7,3) & 4 & (1,2), (2,3), (3,4), (3,6), (4,5), (6,7) & (8,15) & 4 & (1,2), (2,3), (2,6), (3,4), (3,7), (3,8), (4,5) \\
    \hline
    (7,4) & 4 & (1,2), (2,3), (3,4), (4,5), (4,6), (4,7) & (8,16) & 4 & (1,2), (2,3), (2,6), (3,4), (4,5), (4,7), (4,8) \\
    \hline
    (7,5) & 4 & (1,2), (2,3), (3,4), (3,6), (4,5), (4,7) & (8,17) & 4 & (1,2), (2,3), (3,4), (3,6), (3,7), (3,8), (4,5) \\
    \hline
    (7,6) & 4 & (1,2), (2,3), (3,4), (3,6), (3,7), (4,5) & (8,18) & 4 & (1,2), (2,3), (2,6), (2,7), (2,8), (3,4), (4,5) \\
    \hline
    (7,7) & 4 & (1,2), (2,3), (2,6), (3,4), (4,5), (4,7) & (8,19) & 3 & (1,2), (2,3), (2,5), (2,6), (3,4), (3,7), (3,8) \\
    \hline
    (7,8) & 3 & (1,2), (2,3), (2,5), (3,4), (3,6), (3,7) & (8,20) & 3 & (1,2), (2,3), (2,5), (2,6), (2,7), (3,4), (3,8) \\
    \hline
    (7,9) & 3 & (1,2), (2,3), (3,4), (3,5), (3,6), (3,7) & (8,21) & 3 & (1,2), (2,3), (2,5), (2,6), (2,7), (2,8), (3,4) \\
    \hline
    (7,10) & 2 & (1,2), (2,3), (2,4), (2,5), (2,6), (2,7) & (8,22) & 2 & (1,2), (2,3), (2,4), (2,5), (2,6), (2,7), (2,8) \\
    \hline
    \end{tabular}
    \end{center}
    \caption{\footnotesize{A correspondence between topology name $U_{(n,i)}$ (size $n$ and index $i$) to an explicit list of edges. $D$ is short for edge-diameter.}}
    \label{table_trees_edges}
\end{table}
\section{Appendix: Code}
\label{appendix_section_code}
In this section we discuss our \emph{Sage} code for the sake of completeness. Section~\ref{subsection_brief_code} gives a few short comments regarding the code including where to find it. Section~\ref{subsection_outputs} summarizes outputs that appear throughout the text.

\subsection{About the Code}
\label{subsection_brief_code}
You may find the code on Arxiv, ``hidden'' as additional source file named ``\emph{{STTLP-sage-python3-source.zip}}'' in the source-code of this paper (before it was compiled to the PDF file that you are reading). The archive contains a few scripts, and a few generated logs.

The code is written in Sage language~\cite{SageWebsite}, whose syntax is mostly that of Python 3. In fact, some functionality of the code can be run purely in python, such as analyzing topologies and enumerating their STTs. However, the ``core'' of solving LPs and manipulating polytopes does require Sage.

The code is divided to multiple scripts, based on a logical division that is explained in an opening comment within each file. The main script also has functions dedicated to reproducibility, to enable a relatively simple way to compute the various results discussed throughout the paper. While the code has been cleaned and made readable, this is by no means ``production level'' code.

In order to run the code, you need a standard installation of Sage (and, indirectly of Python 3), and to make sure that you have the `networkx' python package installed (other imported packages such as `time' and `math' are standard in python). It is possible to install Sage locally, further details are on the web~\cite{SageInstallation}.
For running short code, you may also use an online server such as~\cite{SageFreeServer}. ``Short'' mostly refers to time, each run is terminated after about a minute or two. There is also a limit to the length of the script, which is not reached with ``reasonable'' code, but can be reached if the script contains long hard-coded values.\footnote{There are $\approx 2000$ different primary directions which find non-integer vertices (not all unique), for each of the seven small topologies. Computing them is slow, so we pre-computed and hard-coded them and other demanding computations. This makes the code long, and as a side-effect we have these values documented.}

\subsection{Code Outputs (Tables Summary)}
\label{subsection_outputs}
We summarize the outputs of the script in several tables throughout the paper:
\begin{enumerate}
    \item Table~\ref{table_trees_summary_analysis_vertices_etc_FULL} (and Table~\ref{table_trees_summary_analysis_vertices_etc_FEW}, its subset): a table that summarizes our analysis over all the small topologies up to size $n \le 8$. One row per topology, with multiple details.
    
    \item Table~\ref{table_trees_edges} maps from size and index to the actual edges of the topology, to be concrete.

    \item Table~\ref{table_integrality_gaps} and Table~\ref{table_approximation_ratios}: List integrality gaps and approximation ratios of the rounding scheme in Definition~\ref{definition_rounding_scheme} with respect to each of the small topologies for which the gaps are larger than $1$ (topologies with non-integer vertices in $D$-space).

    \item Table~\ref{table_path_tree_fractions} analyzes sampled vertices of the polytope due to path topologies of different lengths, see further discussion in its caption, and in Section~\ref{subsection_with_and_without_z_fourier_motzkin}.

    \item Figure~\ref{figure_sage_3D_example}: Not computational per se, and not a table, yet it is still an output.
\end{enumerate}

\end{document}